%% file: Journal_TIT_v2.tex
\documentclass[journal,11pt,onecolumn,draftclsnofoot]{IEEEtran}
\usepackage{cite}
\usepackage{graphicx,color,epsfig,rotating}
\usepackage{amsfonts,amsmath,amssymb,bbm}
\usepackage{setspace}
\usepackage{algorithm}
\usepackage[algo2e]{algorithm2e} 
\usepackage{subcaption}
\usepackage{mdwtab}
\usepackage{placeins}
\usepackage{psfrag, graphicx}
\usepackage[latin1]{inputenc}
\usepackage{multirow}
\usepackage{stfloats}
\usepackage{tabularx} 
\usepackage{booktabs} 
\usepackage{url}
\usepackage{bm}
\usepackage{geometry}
\usepackage{pstricks}
 \allowdisplaybreaks
\newtheorem{definition}{Definition}

\input{macros}

\begin{document}

\title{
Multi-Server Secure Aggregation with Arbitrary Collusion and Heterogeneous Security Constraints
}

\input{author_TIT.tex}

\maketitle
\IEEEpeerreviewmaketitle

\begin{abstract}
We study the fundamental limits of multi-server secure aggregation over a two-hop network where multiple servers, each connected to a disjoint subset of users, jointly compute the sum of all users' inputs. The goal is to ensure that no server can infer any information about prescribed subsets of inputs beyond the desired aggregate, even when colluding with an arbitrary subset of users.
Existing works largely focus on homogeneous security requirements, where all inputs are protected against colluding sets up to a given size. Such formulations are insufficient to capture more general scenarios in which different subsets of inputs may require protection against different collusion patterns.
In this paper, we consider a general model with heterogeneous security requirements and arbitrary user collusion. We characterize the communication rates for all parameter regimes, and determine the minimum key rate required  for secure aggregation in most regimes. In particular, we establish tight information-theoretic lower bounds and matching achievable schemes in a broad class of regimes. For the remaining regime, we derive a general lower bound together with an achievable scheme that attains it within a bounded gap. Our results reveal how the interplay between network topology and heterogeneous security constraints fundamentally determines the communication and key generation requirements, and generalize existing results on secure aggregation.

\end{abstract}

\begin{IEEEkeywords} 
Multi-server secure aggregation, arbitrary collusion, federated learning

\end{IEEEkeywords}

\section{Introduction}
\label{sec:intro}

Federated learning (FL) has emerged as a powerful paradigm for distributed learning~\cite{konecny2016federated,kairouz2021advances,mcmahan2017communication,yang2018applied}, enabling multiple users to collaboratively train models without sharing their raw data~\cite{Zhu2020DLG,geiping2020inverting}. Despite this advantage, privacy leakage remains a fundamental concern: even when only model updates are shared, an honest-but-curious aggregator or colluding users may still infer sensitive information about individual inputs. This tension between utility and security has made secure aggregation a central building block in modern distributed learning systems.

Secure aggregation~\cite{bonawitz2016practical,bonawitz2017practical} enables the computation of aggregate functions, such as the sum of user inputs, while provably concealing individual contributions. Existing solutions are largely based on cryptographic techniques, which rely on computational assumptions and often incur significant overhead. In contrast, information-theoretic approaches~\cite{zhao2023secure} aim to provide unconditional security guarantees and to characterize the fundamental trade-offs among communication, randomness, and security constraints.

Most existing information-theoretic formulations adopt homogeneous security requirements, where all users' inputs must be protected against any colluding set up to a given size. While analytically tractable, such models are inherently limited: they fail to capture practical scenarios in which different subsets of users require different levels of protection against different collusion patterns. Moreover, prior works are typically restricted to single-server or symmetric settings, which do not adequately reflect the heterogeneous and distributed nature of modern systems.

In this paper, we address these limitations by studying \tit{multi-server secure aggregation~\cite{li_zhang_MSSA} over a two-hop network with heterogeneous security requirements and arbitrary collusion}~\cite{li2025weakly,Li_Zhang_WeaklyDSA,Li_Zhang_WeaklyHSA}. The network consists of multiple servers, each connected to a disjoint subset of users, and all servers aim to compute the global sum of user inputs. We consider a general security model in which prescribed subsets of inputs must remain information-theoretically secure against any server, even when it colludes with an arbitrary subset of other users. This formulation enables fine-grained, asymmetric protection across user groups and strictly generalizes existing models as special cases. To the best of our knowledge, this is the first work that provides a unified information-theoretic treatment of multi-server secure aggregation with heterogeneous security constraints and arbitrary collusion.

From an information-theoretic perspective, the problem presents several fundamental challenges. Unlike single-server models, each server in a multi-server architecture must recover the global sum of all users' inputs, which imposes stringent consistency constraints across distributed observations. At the same time, each server observes heterogeneous information consisting of both locally received user messages and broadcasts messages from other servers, whose interaction creates intricate statistical dependencies that complicate the security analysis. Furthermore, the security requirements are inherently asymmetric, as both the protected input sets and the colluding user sets are arbitrary and need not exhibit any symmetry or size constraints, thereby precluding standard symmetry-based arguments. These challenges make it highly nontrivial to characterize how the interplay among network topology, heterogeneous security constraints, and arbitrary collusion structures jointly determines the minimum communication and randomness resources required for secure aggregation. Our results provide a near-complete information-theoretic characterization of these limits, with exact optimality established for all communication rates and most key-rate regimes, and a bounded-gap characterization for the remaining cases.

\subsection{Related Work}

Secure aggregation has been extensively studied in both cryptographic and information-theoretic settings. Cryptographic approaches, including secure multi-party computation and masking-based protocols, provide practical solutions under computational assumptions and have been widely adopted in federated learning systems~\cite{bonawitz2016practical,bonawitz2017practical}. However, these works typically do not aim to characterize the fundamental limits of communication and randomness.

Information-theoretic secure aggregation has recently attracted increasing attention. Early works focused on single-server settings or symmetric multi-user models with homogeneous security constraints, where all users' inputs are protected against colluding sets of bounded size~\cite{zhao2023optimal,zhao2022mds,zhao2024secure}. These works established fundamental trade-offs between communication and randomness, and developed coding-based schemes that are optimal in specific regimes.
More recent works have extended these results to distributed and multi-server architectures~\cite{li_zhang_MSSA}. In particular, multi-server secure aggregation has been investigated under various collusion models and network structures, revealing new trade-offs between communication efficiency and security guarantees. However, most existing formulations still assume homogeneous security requirements or impose symmetric structures across users and servers, which limit their applicability in heterogeneous systems.

In parallel, several works have considered practical extensions motivated by federated learning, including robustness to user dropout~\cite{9834981,so2022lightsecagg,jahani2022swiftagg,jahani2023swiftagg+,zhang2025secure}, user selection mechanisms~\cite{zhao2022mds,zhao2023optimal,zhao2024secure}, groupwise or structured key designs~\cite{zhao2023secure,wan2024information,wan2024capacity,Li_Zhang_GroupwiseDSA}, oblivious servers~\cite{9834981,sun2023secure}, malicious users~\cite{karakocc2021secure}, vector secure aggregation~\cite{yuan2025vectorlinearsecureaggregation,xu2026networkfunctioncomputationvector,hu2026capacityregionindividualkey}, decentralized secure aggregation~\cite{Zhang_Li_Wan_DSA,Li_Zhang_GroupwiseDSA,Li_Zhang_WeaklyDSA,zhang2026informationtheoreticsecureaggregationregular,weng2026resilient}, and hierarchical secure aggregation~\cite{Zhang_wan_HSA,zhang2025fundamental,xu2025hierarchicalsecureaggregationrelay}. These models capture additional system-level constraints but are typically studied under homogeneous or simplified security assumptions, and do not address the heterogeneous and asymmetric security requirements considered in this work.

We further clarify the distinction between our multi-server broadcast model and the closely related heterogeneous hierarchical secure aggregation framework. In the hierarchical setting, a three-layer user--relay--server topology is considered, where relays aggregate user messages and forward to a single server. A key feature of the hierarchical model is that each entity receives only one type of information: relays receive messages directly from users, and the server receives messages only from relays. Consequently, the security constraints are enforced separately for relays and for the server, each based on a single source of incoming information.

In contrast, our multi-server model features several key differences:

\begin{itemize}
    \item \textbf{Multiple decoders}: Every server independently decodes the global sum, imposing consistency constraints across all servers (see correctness condition (6)). In hierarchical settings, only a single server (or aggregator) decodes the final result.
    
    \item \textbf{Two types of information at each server}: Each server receives two distinct kinds of information: (i) direct messages from its associated users during the first hop, and (ii) broadcast messages from all other servers during the second hop. This creates intricate statistical dependencies that must be carefully managed in both the correctness and security analyses. Unlike the hierarchical setting where each entity processes only one type of incoming information, each server here must simultaneously handle both user-level and server-level messages.
    
    \item \textbf{Per-server secrecy with heterogeneous collusion}: Security must hold for each server when it colludes with arbitrary subsets of users. Because each server observes both user messages and inter-server broadcasts, the masking requirements become more complex. In particular, some servers may observe protected inputs directly, while others may only observe them indirectly through broadcast messages, leading to heterogeneous security conditions across servers.
\end{itemize}

While the hierarchical framework introduces certain structural parameters for heterogeneous security constraints, the multi-server broadcast architecture fundamentally alters the masking requirements. The presence of inter-server broadcasts and the fact that every server must decode the sum introduce new combinatorial conditions that determine the minimal key rate. The linear program in Theorem~2 captures the fractional key allocation needed to accommodate these inter-server dependencies and the two types of information at each server.

In contrast to the above works, our framework accommodates arbitrary security input sets and colluding user sets, enabling fine-grained and heterogeneous protection across users. This formulation strictly generalizes existing models and reveals the fundamental interplay between network topology, collusion patterns, and heterogeneous security requirements.

\subsection{Summary of Contributions}

The main contributions of this paper are summarized as follows:

\begin{itemize}
    \item We propose a general information-theoretic model for multi-server secure aggregation with heterogeneous security constraints and arbitrary collusion over a two-hop network. The model captures fine-grained security requirements across different user subsets under an honest-but-curious threat model.
    
    \item We characterize the optimal communication rates for all parameter regimes. In addition, we establish tight lower bounds on the minimum key rate for most cases and construct matching achievable schemes. In particular, we identify a critical parameter $e^*$ that precisely characterizes the minimum key rate in most regimes, capturing the maximum number of inputs that must be simultaneously protected under worst-case server--colluder configurations.
    
    \item For the remaining regime, we derive a general converse bound and propose an achievable scheme based on a linear program, yielding a bounded-gap characterization of the optimal key rate.
    
    \item Our results reveal that the interplay between network topology, heterogeneous security requirements, and collusion structures fundamentally determines the communication and randomness costs, and strictly generalize existing results on secure aggregation.
\end{itemize}

\tit{Paper Organization:}
The  rest of  this paper is organized as follows. Section~\ref{sec:problem description} introduces the system model and formulates the problem. Section~\ref{resu} presents the main results. Sections~\ref{pfthm1} and \ref{pfthm2} provide the proofs of Theorems~\ref{thm1} and \ref{thm2}, respectively, including converse, achievability, and lower/upper bound analyses. Section~\ref{pfsecurity} establishes the security of all proposed schemes. Finally, Section~\ref{conclu} concludes the paper.

\tit{Notation:} Throughout the paper, we use the following notation. 
For integers $m$ and $n$, let $[m:n] \eqdef \{m,m+1,\dots,n\}$ if $m\le n$, and $[m:n]=\emptyset$ if $m>n$. In particular, $[1:n]$ is abbreviated as $[n]$. 
Calligraphic letters (\eg, $\Ac,\Bc$) denote sets, and $\Ac \backslash \Bc \eqdef \{x\in \Ac: x\notin \Bc\}$ for sets $\Ac$ and $\Bc$. 
For a set of random variables $X_1,\dots,X_m$, we write $X_{\Sc} \eqdef \{X_i\}_{i\in \Sc}$ and $X_{1:m} \eqdef \{X_1,\dots,X_m\}$. 
The collection of all $n$-subsets of $\Ac$ is denoted by $\binom{\Ac}{n} \eqdef \{\Sc \subseteq \Ac: |\Sc|=n\}$, with $\binom{\Ac}{0}=\emptyset$. 
The power set of $\Ac$ is $\Pc(\Ac) \eqdef \cup_{n=0}^{|\Ac|} \binom{\Ac}{n}$. 
For two tuples $(i,j)$ and $(i',j')$, we say $(i,j)<(i',j')$ if $i<i'$, or if $i=i'$ and $j<j'$.

\section{Problem Formulation}
\label{sec:problem description}

We consider the \secagg problem in a two-hop communication network consisting of $U \geq 3$ servers. 
Each server $u \in [U]$ is connected to a unique set of $V_u$ users, disjoint from those of other servers, 
and the total number of users is $K = \sum_{u=1}^{U} V_u$. 
In the first hop, each user transmits its message to its associated server over an error-free link. 
In the second hop, each server broadcasts messages to all other servers through error-free broadcast channels, 
as illustrated in Fig.~\ref{fig:model}.
\begin{figure}[h]
    \centering
    \includegraphics[width=0.55\textwidth]{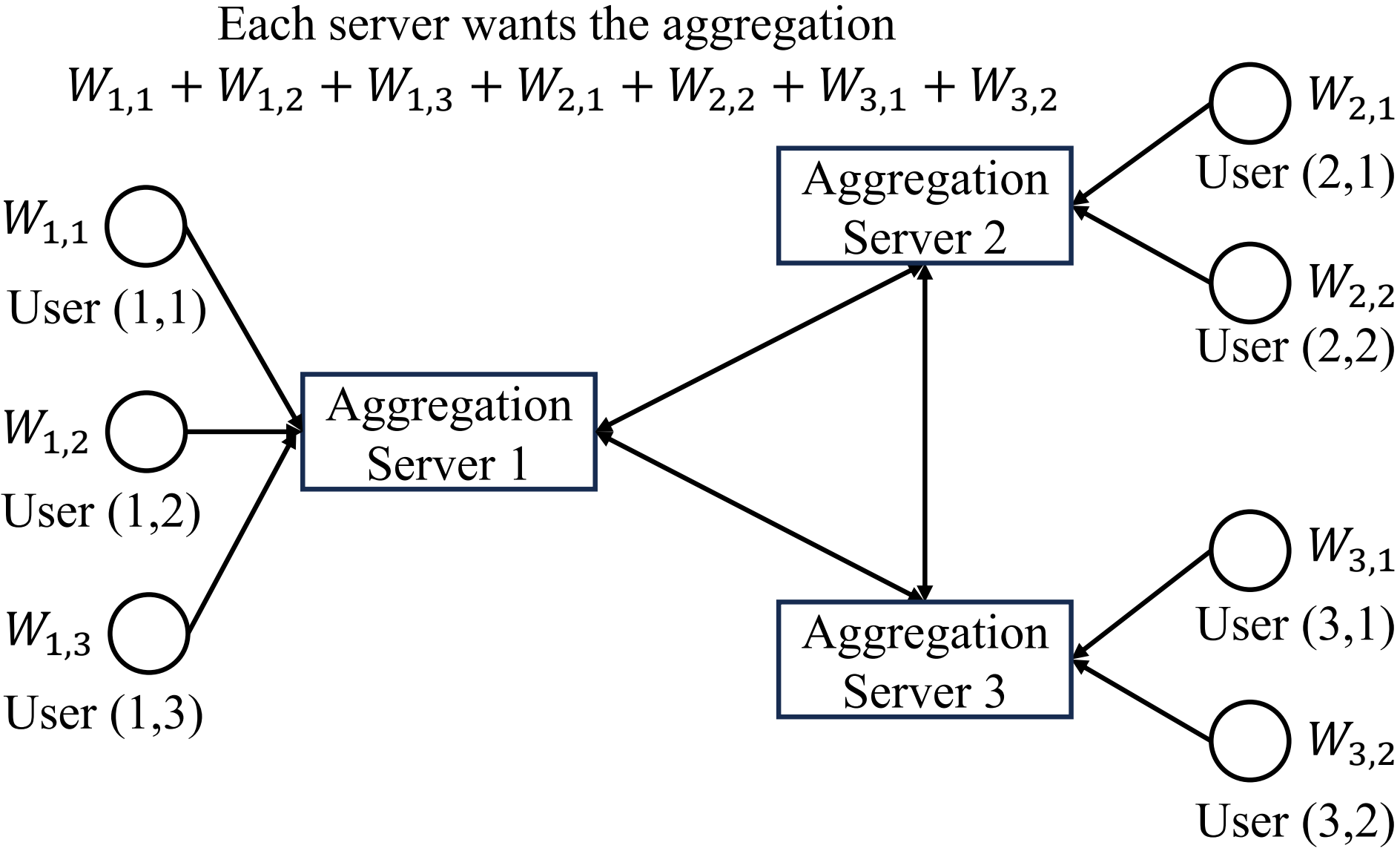}
\caption{\small Multi-server secure aggregation with $U=3$ servers and 
$V_1=3, V_2=V_3=2$ users per server. 
Each server computes the sum of all users' inputs 
$W_{1,1}+W_{1,2}+W_{1,3}+W_{2,1}+W_{2,2}+W_{3,1}+W_{3,2}$ 
without revealing the inputs in the security sets $\bm{\mathcal{S}}$, 
even if users in the colluding sets $\bm{\mathcal{T}}$ cooperate. 
The security sets are $\bm{\mathcal{S}} = (\emptyset, \{(1,1)\}, \{(2,1)\}, \{(1,1),(2,1)\})$, 
and the possible colluding sets $\bm{\mathcal{T}}$ include all combinations of 
$(1,2),(2,2),(3,1),(3,2)$, ranging from single users to all four together.}
    \label{fig:model}
\end{figure}
Let $\mathcal{K}_{\{u\}} \triangleq \{(u,v): v \in [V_u]\}$ denote the set of users associated with server $u$, 
where $(u,v)$ represents the $v^{\text{th}}$ user connected to server $u$. 
For any subset of servers $\mathcal{U} \subseteq [U]$, define 
$\mathcal{K}_{\mathcal{U}} \triangleq \bigcup_{u \in \mathcal{U}} \mathcal{K}_{\{u\}}$, 
which represents all users associated with servers in $\mathcal{U}$. 
In particular, the set of all users in the network is denoted by 
$\mathcal{K} \triangleq \mathcal{K}_{[U]} = \bigcup_{u \in [U]} \mathcal{K}_{\{u\}}$, 
where $|\mathcal{K}| = K = \sum_{u=1}^{U} V_u$.
Each user $(u,v)$ possesses an \emph{input} $W_{u,v}$ consisting of $L$ independent and identically distributed (i.i.d.) symbols uniformly drawn from a finite field $\mathbb{F}_q$. 
Additionally, each user holds a secret key $Z_{u,v}$, referred to as an \emph{individual key}, containing $L_Z$ symbols over $\mathbb{F}_q$. 
The inputs 
$\mathcal{W}_{\mathcal{K}} \triangleq \{W_{u,v}\}_{(u,v)\in\mathcal{K}}$
are mutually independent and independent of the keys 
$\mathcal{Z}_{\mathcal{K}} \triangleq \{Z_{u,v}\}_{(u,v)\in\mathcal{K}},$
i.e.,
\begin{align}
H\left(\mathcal{W}_{\mathcal{K}},\mathcal{Z}_{\mathcal{K}}\right)
&= H(\mathcal{Z}_{\mathcal{K}}) + \sum_{(u,v)\in\mathcal{K}} H(W_{u,v}), \label{ind} \\
H(W_{u,v}) &= L \quad \text{(in $q$-ary units)}, \quad \forall (u,v)\in\mathcal{K}. \label{h2}
\end{align}
This implies that all users' inputs are
mutually \indep, and also \indep of the secret keys.
The individual keys $\mathcal{Z}_{\mathcal{K}}$ may be arbitrarily correlated and are deterministically generated from a \emph{source key} $Z_{\Sigma}$, which consists of $L_{\Sigma}$ mutually independent symbols over the finite field $\mathbb{F}_q$, such that
\begin{align}
    H\left(\mathcal{Z}_{\mathcal{K}} \mid Z_{\Sigma}\right) = 0. \label{sourcekey}
\end{align}
A trusted  key server generates and 
assigns the keys offline, ensuring that each user $(u,v)$ possesses its designated key $Z_{u,v}$ prior to the aggregation task.

\tbf{\Comm protocol.}
The system follows a two-hop communication protocol. In the first hop, each user $(u,v)$ transmits an $L_X$-symbol message $X_{u,v}$ to its associated server $u$, which is deterministically generated from the user's input $W_{u,v}$ and individual key $Z_{u,v}$:
\begin{align}
    H(X_{u,v} \mid W_{u,v}, Z_{u,v}) = 0, \quad \forall (u,v) \in \mathcal{K}. \label{messageX}
\end{align}
In the second hop, each server $u$ generates an $L_Y$-symbol broadcast message $Y_u$ to all other servers, deterministically derived from the messages received from its associated users, $\{X_{u,v}\}_{v \in [V_u]}$:
\begin{align}
    H(Y_u \mid \{X_{u,v}\}_{v \in [V_u]}) = 0, \quad \forall u \in [U]. \label{messageY}
\end{align}
Each server $k \in [U]$, upon receiving the messages from its associated users and the broadcasts from the other servers, is required to recover the global sum of all users' inputs:
\begin{align}
\text{[Correctness]} \quad 
H\Bigg(\sum_{(u,v) \in \mathcal{K}} W_{u,v} \;\Big|\; 
\{Y_u\}_{u \in [U] \setminus \{k\}}, \{X_{k,v}\}_{v \in [V_k]} \Bigg) = 0, 
\quad \forall k \in [U]. \label{correctness}
\end{align}
Each server must recover the global sum of all users' inputs while preserving the privacy of individual inputs, which will be formalized in the next section.

\tbf{Security model.}
The security input sets are represented by a monotone\footnote{A set system is called monotone if, whenever a set belongs to the system, all its subsets also belong to the system.} set system 
$\bm{\mathcal{S}} = \{\mathcal{S}_1, \dots, \mathcal{S}_M\}$, 
and the colluding user sets are represented by another monotone\footnote{Assume $\bigcup_m \mathcal{S}_m \neq \emptyset$ and $|\mathcal{T}_n|\le K-2$, so that the security constraints are nontrivial.} set system 
$\bm{\mathcal{T}} = \{\mathcal{T}_1, \dots, \mathcal{T}_N\}$. 
The security constraint requires that, for any colluding user set $\mathcal{T}_n$, the users in $\mathcal{T}_n$ learn no additional information about the inputs in any $\mathcal{S}_m$ beyond what is already contained in their own inputs and keys if $\mathcal{S}_m \cap \mathcal{T}_n \neq \emptyset$:
\begin{align}
\text{[Security]} 
~I\Bigg(\{Y_u\}_{u \in [U] \setminus \{k\}}, \{X_{k,v}\}_{v \in [V_k]}; \{W_{i,j}\}_{(i,j)\in \mathcal{S}_m} 
\;\Bigg|\; \sum_{(i,j)\in \mathcal{K}} W_{i,j}, \{W_{i,j}, Z_{i,j}\}_{(i,j)\in \mathcal{T}_n} \Bigg) = 0, ~ \forall k \in [U]. \label{security}
\end{align}

\tbf{Performance metrics.}
The communication rates $R_X$ and $R_Y$ denote the number of  symbols per input symbol in $X_{u,v}$ and $Y_u$, respectively. 
Similarly, the source key rate $R_{Z_{\Sigma}}$ denotes the number of $q$-ary symbols per input symbol in the source key $Z_{\Sigma}$:
\begin{align}
\label{eq: def of rate}
R_X \eqdef \frac{L_X}{L}, \quad
R_Y \eqdef \frac{L_Y}{L}, \quad
R_{Z_{\Sigma}} \eqdef \frac{L_{Z_{\Sigma}}}{L}.
\end{align}
A rate tuple $(R_X, R_Y, R_{Z_{\Sigma}})$ is \emph{achievable} if there exists a secure coding scheme that implements the key generation and message construction for $X_{u,v}$, $Y_u$, and $Z_{\Sigma}$ such that the structural conditions (\ref{messageX}) and (\ref{messageY}) are satisfied, the communication and key rates $(R_X, R_Y, R_{Z_{\Sigma}})$ are attained, and both the correctness (\ref{correctness}) and security (\ref{security}) conditions hold. 
The \emph{optimal rate region} $\mathcal{R}^*$ is defined as the closure of all achievable rate tuples, representing the fundamental trade-offs among communication and key rates under the given correctness and security constraints.

\subsection{Auxiliary Definitions}
\label{subsec:auxiliary defs} 

In this section, we introduce several definitions that are useful for presenting the main results. 

Certain users require protection via key variables even if they do not explicitly belong to any $\mathcal{S}_m$. These implicitly protected users are characterized through the following sets.

\begin{definition}[Security sets $\mathcal{U}^{(m,n)}$ and $\mathcal{F}^{(m,n)}$]
\label{def:secrelay11}
For any pair of user sets $\mathcal{S}_m$ and $\mathcal{T}_n$, define
\begin{align}
\mathcal{U}^{(m,n)}
&\triangleq \Big\{ u \in [U] :
 |\mathcal{S}_m \cap \mathcal{K}_{\{u\}}| \neq 0,  
 |(\mathcal{S}_m  \cup \mathcal{T}_n)\cap \mathcal{K}_{\{u\}}| = V_u
\Big\}, \label{def:Uset} \\
\mathcal{F}^{(m,n)}
&\triangleq \Big\{ u \in [U] :
 |\mathcal{S}_m \cap \mathcal{K}_{\{u\}}| = 0, 
 |\mathcal{T}_n \cap \mathcal{K}_{\{u\}}| = V_u
\Big\}. \label{def:Fset}
\end{align}
\end{definition}

Intuitively, $\mathcal{U}^{(m,n)}$ and $\mathcal{F}^{(m,n)}$ classify servers based on their role in protecting the inputs of $\mathcal{S}_m$ against collusions in $\mathcal{T}_n$.  
A server $u \in \mathcal{U}^{(m,n)}$ is connected to at least one user in $\mathcal{S}_m$, and all its users belong to $\mathcal{S}_m \cup \mathcal{T}_n$. 
From the server's perspective, inputs in $\mathcal{K}_{\{u\}} \setminus \mathcal{S}_m$ are already known (they belong to $\mathcal{T}_n$). 
Hence, the inputs in $\mathcal{S}_m \cap \mathcal{K}_{\{u\}}$ require protection, which consumes at least $L$ symbols of independent key. 
Such servers directly observe protected inputs and are critical in the security analysis.
A server $u \in \mathcal{F}^{(m,n)}$ is connected only to users in $\mathcal{T}_n$ and observes no protected inputs. 
No additional key is needed for these servers.
For servers in $[U] \setminus (\mathcal{U}^{(m,n)} \cup \mathcal{F}^{(m,n)})$, 
any connection to users in $\mathcal{S}_m$ is accompanied by at least one connection to users outside $\mathcal{S}_m \cup \mathcal{T}_n$. 
These unknown inputs serve as implicit keys that mask the protected inputs. 
Consequently, no additional independent key is required. Servers without any connection to $\mathcal{S}_m$ also need no key.

We will use the following example to explain the definitions.
\begin{example}\label{ex1}
Consider $U=3$ and $V_1=3, V_2=V_3=2$, there are 7 users $(1,1),(1,2),(1,3),(2,1),(2,2),$ $(3,1),(3,2)$. The security input sets are $\bm{\mathcal{S}}=(\mathcal{S}_1,\cdots,\mathcal{S}_{4})=(\emptyset, \{(1,1)\}, \{(2,1)\}, \{(1,1),(2,1)\})$, and the colluding user sets are $\bm{\mathcal{T}}=(\mathcal{T}_1,\cdots,\mathcal{T}_{16})=(\emptyset, \{(1,2)\}, \{(2,2)\}, \{(3,1)\},  \{(3,2)\}, \{(1,2),(2,2)\},\{(1,2),(3,1)\},$ $\{(1,2),$ $(3,2)\},\{(2,2),(3,1)\}, \{(2,2),(3,2)\}, \{(3,1),(3,2)\},\{(1,2),(2,2),(3,1)\},\{(1,2),(2,2),(3,2)\},$ $\{(1,2),$ $(3,1),(3,2)\},\{(2,2),(3,1),(3,2)\},\{(1,2),(2,2),(3,1),(3,2)\})$.
\end{example}

Consider the pair $(\mathcal{S}_4=\{(1,1),(2,1)\}, \mathcal{T}_{16}=\{(1,2),(2,2),(3,1),(3,2)\})$. When $u=1$, the user set associated with server~1 is $\mathcal{K}_{\{1\}}=\{(1,1),(1,2),(1,3)\}$. Since $|\mathcal{S}_4\cap\mathcal{K}_{\{1\}}|=|\{(1,1),(2,1)\}\cap\{(1,1),(1,2),(1,3)\}|=1\neq 0$, and $|(\mathcal{S}_4\cup\mathcal{T}_{16})\cap\mathcal{K}_{\{1\}}|=|(\{(1,1),(2,1)\}\cup\{(1,2),(2,2),(3,1),(3,2)\})\cap\{(1,1),(1,2),(1,3)\}|=2\neq V_1$, hence $1\notin\mathcal{U}^{(4,16)}$. 
When $u=2$, we have $|\mathcal{S}_4\cap\mathcal{K}_{\{2\}}|=|\{(1,1),(2,1)\}\cap\{(2,1),(2,2)\}|=|\{(2,1)\}|\neq 0$, and $|(\mathcal{S}_4\cup\mathcal{T}_{16})\cap\mathcal{K}_{\{2\}}|=|(\{(1,1),(2,1)\}\cup\{(1,2),(2,2),(3,1),(3,2)\})\cap\{(2,1),(2,2)\}|=|\{(2,1),(2,2)\}|=2=V_2$, then $2\in \mathcal{U}^{(4,16)}$.
When $u=3$, we have $|\mathcal{S}_4\cap\mathcal{K}_{\{3\}}|=|\{(1,1),(2,1)\}\cap\{(3,1),(3,2)\}|= 0$, and $|\mathcal{T}_{16}\cap\mathcal{K}_{\{3\}}|=|\{(1,2),(2,2),(3,1),(3,2)\}\cap\{(3,1),(3,2)\}|=|\{(3,1),(3,2)\}|=2=V_3$, then $3\in \mathcal{F}^{(4,16)}$. So, $\mathcal{U}^{(4,16)}=\{2\}$, and $\mathcal{F}^{(4,16)}=\{3\}$. 
Intuitively, $u\in \mathcal{U}^{(m,n)}$ indicates that at least $L$ key symbols are needed to protect the inputs of users in server $u$. Consequently, a total of $|\mathcal{U}^{(m,n)}|$ key symbols are required to secure the inputs of all users in the server set $\mathcal{U}^{(m,n)}$.

\begin{definition}[Implicit Security Input Set $\mathcal{S}_I$]
\label{def:imp}
Let $\mathcal{K}$ denote the set of all $K$ users. 
Under the security constraint, the \emph{implicit security input set} is defined as
\begin{align}
\mathcal{S}_{I} 
\triangleq& 
\Big\{
\mathcal{K} \setminus 
\Big( (\mathcal{S}_m \cap \mathcal{K}_{\{u\}}) 
      \cup \mathcal{K}_{\mathcal{U}^{(m,n)}} 
      \cup \mathcal{T}_n \Big) 
\big| (\mathcal{S}_m \cap \mathcal{K}_{\{u\}}) 
       \cup \mathcal{K}_{\mathcal{U}^{(m,n)}} 
       \cup \mathcal{T}_n \big| = K-1,~\notag\\
&~~~~~~~~u\in [U], m\in[M], n\in[N] 
\Big\} 
\setminus \bigcup_{i\in[M]} \mathcal{S}_i.
\end{align}

Intuitively, an implicit security input set consists of users whose inputs are not explicitly listed in any $\mathcal{S}_m$, but whose values are uniquely determined once the other $K-1$ inputs are revealed. 
Although these inputs are not explicitly protected, any leakage would violate correctness, as it allows the remaining input to be inferred from the aggregation.

\medskip
\noindent
\textbf{Example (continued from Example~\ref{ex1}):} 
Consider $(u,m,n)=(1,4,16)$, where 
$ (\mathcal{S}_4 \cap \mathcal{K}_{\{1\}}) \cup \mathcal{K}_{\mathcal{U}^{(4,16)}} \cup \mathcal{T}_{16} = (\{(1,1),(2,1)\}\cap \{(1,1),(1,2),(1,3)\})  \cup \{(2,1),(2,2)\} \cup \{(1,2),(2,2),(3,1),(3,2)\} = 6 = K-1.$
Hence, the corresponding implicit security input set is $\mathcal{S}_{I} = \{(1,3)\}$.
This shows that although User $(1,3)$ does not belong to any explicit security input set, the implicit security constraints still require it to be protected by key variables. 
Intuitively, when Server~1 colludes with Users $(1,2)$, $(2,2)$, $(3,1)$, and $(3,2)$, the requirement of keeping Users $(1,1)$ and $(2,1)$ secure forces User $(1,3)$ to remain secure as well. 
More formally, this will be captured through entropy arguments; see~(\ref{corollary1_eqX}) in Corollary~\ref{corol1}.
\end{definition}

\begin{definition}[Total Security Input Set $\overline{\mathcal{S}}$] 
\label{def:tot} 
The \emph{total security input set} is defined as the union of all explicit and implicit security input sets:
\begin{align}
\overline{\mathcal{S}} \triangleq \Big(\bigcup_{m\in[M]} \mathcal{S}_m\Big) \cup \mathcal{S}_I.
\end{align}
\end{definition}

\noindent
\textbf{Example (continued from Example~\ref{ex1}):} 
In this case, the total security input set is 
$\overline{\mathcal{S}} = \{(1,1),(1,3),(2,1)\}.$



\begin{definition}[$a^*$ and $e^*$] 
\label{def:totset1}
For any $u\in [U], m\in [M], n\in [N]$, define
\begin{equation}
\mathcal{A}_{u,m,n} \triangleq 
\Big((\mathcal{S}_m \cap \mathcal{K}_{\{u\}}) 
\cup \mathcal{K}_{\mathcal{U}^{(m,n)}} 
\cup \mathcal{T}_n\Big) \cap \overline{\mathcal{S}} .
\end{equation}
The maximum cardinality of $\mathcal{A}_{u,m,n}$ is denoted as
\begin{equation}
\label{eq:def a*}
a^* \triangleq \max_{u\in [U],\,m\in[M],\,n\in[N]} 
|\mathcal{A}_{u,m,n}|.
\end{equation}
Furthermore, define
\begin{equation}
\label{eq:def e*}
e^* \triangleq 
\max_{u\in [U],\,m\in[M],\,n\in[N]}
\Big(
\big|(\mathcal{S}_m \cap \mathcal{K}_{\{u\}})\cup \mathcal{T}_n\big|\cap\overline{\mathcal{S}}
+|\mathcal{U}^{(m,n)}\setminus\{u\}|
\Big).
\end{equation}
\end{definition}

\noindent
\textbf{Example (continued from Example~\ref{ex1}):} 
We have:
$\mathcal{A}_{1,4,16}
=((\{(1,1),(2,1)\}\cap\{(1,1),(1,2)\})
\cup\{(2,1),(2,2)\}
\cup\{(1,2),(2,2),(3,1),(3,2)\})
\cap\{(1,1),(1,3),(2,1)\}
=\{(1,1),(2,1)\}$.
Hence, $a^*=2$.

Moreover,
$\big|(\mathcal{S}_4 \cap \mathcal{K}_{\{1\}}) \cup \mathcal{T}_{16} \big| \cap \overline{\mathcal{S}} = 1, 
|\mathcal{U}^{(4,16)}\setminus\{1\}| = 1,$
so that $e^* = 2$.




\begin{definition}[Union of Sets $\mathcal{Q}_1$ and $\mathcal{Q}_2$]
\label{def:uni2}
Find all $u,m,n$ such that $|\mathcal{A}_{u,m,n}|=|\overline{\mathcal{S}}|$, 
and denote the union of the corresponding 
$(\mathcal{S}_m\cap\mathcal{K}_{\{u\}})\cup \mathcal{K}_{\mathcal{U}^{(m,n)}} \cup \mathcal{T}_n$
as $\mathcal{Q}_1$, i.e.,
\begin{equation}
\mathcal{Q}_1 \triangleq 
\bigcup_{u,m,n:|\mathcal{A}_{u,m,n}|=|\overline{\mathcal{S}}|}
\Big(
(\mathcal{S}_m \cap \mathcal{K}_{\{u\}})
\cup \mathcal{K}_{\mathcal{U}^{(m,n)}}
\cup \mathcal{T}_n
\Big).
\end{equation}
Next, find all $u,m,n$ such that
$|((\mathcal{S}_m \cap \mathcal{K}_{\{u\}})\cup \mathcal{T}_n)\cap\overline{\mathcal{S}}|
+|\mathcal{U}^{(m,n)}\setminus\{u\}|=e^*,$
and denote the union of the corresponding sets as $\mathcal{Q}_2$, i.e.,
\begin{equation}
\mathcal{Q}_2 \triangleq
\bigcup_{u,m,n:|((\mathcal{S}_m \cap \mathcal{K}_{\{u\}})\cup \mathcal{T}_n)\cap\overline{\mathcal{S}}|
+|\mathcal{U}^{(m,n)}\setminus\{u\}|=e^*}
\Big(
(\mathcal{S}_m \cap \mathcal{K}_{\{u\}})
\cup \mathcal{K}_{\mathcal{U}^{(m,n)}}
\cup \mathcal{T}_n
\Big).
\end{equation}
Finally, if $|\mathcal{Q}_1|\neq0$, define
\begin{equation}
\mathcal{Q}\triangleq \mathcal{Q}_1\cup\mathcal{Q}_2 .
\end{equation}
Otherwise, $\mathcal{Q}=\emptyset$.

\end{definition}

In Example~\ref{ex1}, we have $|\overline{\mathcal{S}}| = 3$. 
Since there exist no $u,m,n$ such that $|\mathcal{A}_{u,m,n}| = 3$, it follows that $\mathcal{Q}_1 = \emptyset$. 
Therefore, $\mathcal{Q} = \emptyset$.
On the other hand, $e^* = 2$. 
By enumerating all $u,m,n$ such that 
$|((\mathcal{S}_m \cap \mathcal{K}_{\{u\}})\cup \mathcal{T}_n)\cap\overline{\mathcal{S}}|
+|\mathcal{U}^{(m,n)}\setminus \{u\}| = e^* = 2$, 
we obtain 
$\mathcal{Q}_2=\{(1,1),(1,2),(2,1),(2,2),(3,1),(3,2)\}$.
For example, when $(u,m,n)=(1,4,16)$,
$|((\mathcal{S}_4 \cap \mathcal{K}_{\{1\}})\cup \mathcal{T}_{16})\cap\overline{\mathcal{S}}|
+|\mathcal{U}^{(4,16)}\setminus \{1\}| = 2$.

\section{Main Result} \label{resu}

\begin{theorem}\label{thm1}
For the multi-server secure aggregation problem with $U \ge 3$ servers, where each server connects to $V_u$ users ($u \in [U]$), security input sets $\{\mathcal{S}_m\}_{m \in [M]}$, and colluding user sets $\{\mathcal{T}_n\}_{n \in [N]}$, the optimal rate region is given by
\begin{equation}
\mathcal{R}^* = \{(\rx,\ry,\rzsigma) \mid R_X \ge 1, R_Y \ge 1, R_{Z_\Sigma} \ge R_{Z_\Sigma}^* \},
\end{equation}
if one of the following mutually exclusive conditions holds:
\begin{enumerate}
\item $e^* = K$;
\item $e^* \le K-1$ and $a^* \le |\overline{\mathcal{S}}|-1$;
\item $e^* \le K-1$, $a^* = |\overline{\mathcal{S}}|$, and $|\mathcal{Q}| \le K-1$.
\end{enumerate}
Under these conditions,
\begin{equation}
R_{Z_\Sigma}^* =
\begin{cases}
K-1, & \trm{if } e^* = K,\\
e^*, & \text{otherwise}.
\end{cases}
\end{equation}
\end{theorem}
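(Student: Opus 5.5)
The argument has two halves: a converse giving $R_X\ge 1$, $R_Y\ge 1$, $R_{Z_\Sigma}\ge R^*_{Z_\Sigma}$, and an achievability scheme (linear, over $\mathbb{F}_q$) meeting these bounds under each of the three conditions.

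\emph{Communication rates.} For $R_X\ge 1$, fix a server $k$ and a user $(k,v)$. Condition on all other inputs and all keys. Correctness then forces $X_{k,v}$, together with the broadcasts, to carry $W_{k,v}$. Since the $Y_u$ for $u\neq k$ do not depend on $W_{k,v}$, this gives $H(X_{k,v})\ge H(W_{k,v})=L$. For $R_Y\ge 1$, take a server $u$ and another server $k\neq u$. Condition on everything except the inputs of $\mathcal{K}_{\{u\}}$. Server $k$ must then recover $\sum_{v}W_{u,v}$ from $Y_u$ alone, so $H(Y_u)\ge L$.

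\emph{Key rate converse.} I would first establish a corollary-type lemma. For any $(u,m,n)$, consider server $u$ colluding with $\mathcal{T}_n$. Its view, namely $X_{u,\cdot}$ and $Y_{[U]\setminus\{u\}}$, must be independent of the inputs in $\mathcal{A}_{u,m,n}$ given the sum and the colluders' information. The plan is to chain conditional entropies:
\begin{itemize}
\item each protected input in $((\mathcal{S}_m\cap\mathcal{K}_{\{u\}})\cup\mathcal{T}_n)\cap\overline{\mathcal{S}}$ contributes $L$ to $H(Z_{\Sigma})$, using $H(X_{u,v})\ge L$ and security;
\item each server in $\mathcal{U}^{(m,n)}\setminus\{u\}$ contributes another $L$ through $H(Y_{u'})\ge L$, because all non-protected inputs at $u'$ are known to the colluders.
\end{itemize}
The implicit set $\mathcal{S}_I$ enters because, when $K-1$ inputs are covered, the remaining input is determined by the sum and must also be masked. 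Maximizing over $(u,m,n)$ gives $R_{Z_\Sigma}\ge e^*$. When $e^*=K$, one input is always recoverable from the sum, so the chain stops at $K-1$. This yields $K-1$ as the converse, and the upper bound $K-1$ is trivially achieved with zero-sum keys.

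\emph{Achievability.} Use messages $X_{u,v}=W_{u,v}+Z_{u,v}$ and $Y_u=\sum_v X_{u,v}$. Keys are generated as $Z_{u,v}=0$ for users outside $\overline{\mathcal{S}}$, and as $Z_{u,v}$ for $(u,v)\in\overline{\mathcal{S}}$ obtained by an MDS-like map from $e^*$ i.i.d.\ uniform symbols subject to $\sum Z_{u,v}=0$. Correctness is immediate. For security, the requirement is that, for every $(u,m,n)$, the key vectors seen by server $u$ together with the colluders have full rank on the protected coordinates. This reduces to a generic-rank argument: a random $e^*\times|\overline{\mathcal{S}}|$ generator constrained to zero-sum works if no view requires more than $e^*$ independent key dimensions. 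Condition 2 ($a^*\le|\overline{\mathcal{S}}|-1$) guarantees that the zero-sum constraint never collides with a view. Under condition 3 ($a^*=|\overline{\mathcal{S}}|$ and $|\mathcal{Q}|\le K-1$), some user outside $\mathcal{Q}$ can absorb the zero-sum slack. I would therefore give that user a key as well, and assign its key the complementary coefficient.

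\emph{Main obstacle.} I expect the hardest step to be the security verification for the generic construction, particularly under condition 3. There I would attempt a Schwartz–Zippel argument over large $q$. The rank conditions are exactly indexed by the tight triples in $\mathcal{Q}_1\cup\mathcal{Q}_2$, and the user outside $\mathcal{Q}$ is what keeps these determinants nonzero polynomials.
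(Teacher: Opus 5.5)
Your proposal follows essentially the same route as the paper. The communication rates come from correctness-based bounds $H(X_{u,v}\mid\cdot)\ge L$ and $H(Y_u\mid\cdot)\ge L$. The key-rate converse is a chain of conditional key entropies over three groups: the colluders' keys in $\overline{\mathcal{S}}$, the keys of $\mathcal{S}_m\cap\mathcal{K}_{\{u\}}$, and one $L$ per server in $\mathcal{U}^{(m,n)}\setminus\{u\}$, maximized over $(u,m,n)$ to give $e^*$. Achievability uses the additive-mask scheme $X_{u,v}=W_{u,v}+Z_{u,v}$, $Y_u=\sum_v X_{u,v}$ with generic zero-sum keys on $\overline{\mathcal{S}}$, plus one extra keyed user outside $\mathcal{Q}$ under condition 3, justified by Schwartz--Zippel.

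The only difference is in the $e^*=K$ converse. You charge one $L$ to the revealed aggregate, while the paper uses monotonicity of the set systems to pass to a configuration whose covered set and value are both $K-1$; either way you get $K-1$.
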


\begin{remark}[Intuitive Explanation of Theorem~\ref{thm1}]
Theorem~\ref{thm1} characterizes the optimal communication and source key rates for multi-server secure aggregation under arbitrary collusion constraints, as specified by the three mutually exclusive conditions above. Intuitively, these conditions indicate which users' inputs require protection and explain how the corresponding key requirements arise:

\begin{itemize}
    \item \textbf{Condition 1 ($e^* = K$):} Every user participates in at least one critical combination of server connections and colluding sets. 
    Since a server can observe the sum of all inputs except one, all $K$ users must be protected. 
    However, because the sum of inputs is known from the aggregation, only $K-1$ independent key symbols are required. The remaining input is implicitly secured as it is determined by the sum of the other $K-1$ keys.

    \item \textbf{Condition 2 ($e^* \le K-1$ and $a^* \le |\overline{\mathcal{S}}|-1$):} 
    Here, only users in the total security input set $\overline{\mathcal{S}}$ require key protection; users outside $\overline{\mathcal{S}}$ do not, as their values do not affect security. 
    The maximal number of keys simultaneously needed by any server for a given colluding set is $e^*$. 
    Intuitively, in the worst-case scenario, a server observes up to $e^*$ inputs that must be protected at the same time, so $e^*$ independent key symbols suffice to guarantee security.

    \item \textbf{Condition 3 ($e^* \le K-1$, $a^* = |\overline{\mathcal{S}}|$, and $|\mathcal{Q}| \le K-1$):} 
    In this case, some server-colluder combinations cover all users in $\overline{\mathcal{S}}$ but not the entire set $\mathcal{K}$. 
    Therefore, keys are needed for all users in $\overline{\mathcal{S}}$ and potentially at most one additional user outside $\mathcal{Q}$ who participates in the combination that reaches the maximum critical overlap. 
    Consequently, the total number of keys required is again $e^*$, representing the largest set of inputs that must be simultaneously protected in any critical configuration.
\end{itemize}

In all cases, the communication rates $R_X$ and $R_Y$ are at least $1$ (to transmit each user's input), while the source key rate $R_{Z_\Sigma}^*$ captures the minimal number of independent keys necessary to ensure security under the given server connectivity and collusion constraints. 
Thus, $R_{Z_\Sigma}^* = K-1$ when all users are critical, and $R_{Z_\Sigma}^* = e^*$ when only a subset of users is critical.
\end{remark}

\begin{theorem}\label{thm2}
For the multi-server secure aggregation problem with $U \ge 3$ servers, where each server connects to $V_u$ users $(u \in [U])$, security input sets $\{\mathcal{S}_m\}_{m \in [M]}$, and colluding user sets $\{\mathcal{T}_n\}_{n \in [N]}$,
consider the remaining case where
$e^* \le K-1, 
a^* = |\overline{\mathcal{S}}|,$ and $|\mathcal{Q}| = K.$
Then the rate region
\begin{eqnarray}
    \mathcal{R} = \{(\rx,\ry,\rzsigma) \mid R_X \ge 1, R_Y \ge 1, R_{Z_\Sigma} \ge e^* + b^*  \}
\end{eqnarray}
is achievable,
where $b^*$ is the optimal value of the following linear program.
 \begin{eqnarray}
&& \min  \sum_{(u,v)\in \mathcal{K}\setminus\overline{\mathcal{S}}} b_{u,v} \notag
\\
 \text{subject to} 
    && \sum_{(u,v)\in \mathcal{K}\setminus ((\mathcal{S}_m\cap\mathcal{K}_{\{u\}})\cup \mathcal{K}_{\mathcal{U}^{(m,n)}} \cup \mathcal{T}_n)} b_{u,v} 
    \geq 1,\forall u,m,n ~\mbox{such that}~ |\mathcal{A}_{u,m,n}| = a^*, 
    \\
    && ~~~~~~~~~~~~~~b_{u,v}\geq 0,~~~~~~~~~~~~~~~~~~~~~ \forall (u,v)\in \mathcal{K}\setminus\overline{\mathcal{S}}. 
    \label{lp}
 \end{eqnarray}
\end{theorem}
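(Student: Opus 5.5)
Theorem~\ref{thm2} is purely an achievability statement, so the plan is an explicit linear scheme. Communication is trivially $R_X=R_Y=1$: each user sends $X_{u,v}=W_{u,v}+Z_{u,v}$ with $L$ symbols, and server $u$ broadcasts $Y_u=\sum_{v}X_{u,v}$. Correctness then holds whenever $\sum_{(u,v)\in\mathcal{K}}Z_{u,v}=0$, since server $k$ computes $\sum_{u\neq k}Y_u+\sum_v X_{k,v}$. The real work is to design a zero-sum key assignment that uses a source key of $(e^*+b^*)L$ symbols and meets (\ref{security}). Since the LP has rational data, its optimal solution $\{b_{u,v}\}$ is rational. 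After scaling $L$ by a common denominator, each $b_{u,v}L$ is therefore an integer.

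For the key design, I would take a source key $Z_\Sigma=(N_1,\dots,N_{e^*L+b^*L})$ of i.i.d.\ uniform symbols and set $Z_{u,v}=\mathbf{H}_{u,v}Z_\Sigma$ for $L\times(e^*+b^*)L$ matrices. For $(u,v)\notin\overline{\mathcal{S}}$, the key $Z_{u,v}$ is built from a private block of $b_{u,v}L$ fresh symbols and is set to zero when $b_{u,v}=0$. For users in $\overline{\mathcal{S}}$, I reuse the construction from the proof of Theorem~\ref{thm1} (Condition~3), which is an MDS-type assignment with generic matrices over a large field $\mathbb{F}_q$. Any $e^*$ of these keys are jointly uniform (full rank), and one chosen key absorbs the zero-sum constraint. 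The only place Theorem~\ref{thm1} fails here is the configurations with $|\mathcal{A}_{u,m,n}|=a^*=|\overline{\mathcal{S}}|$ and $|\mathcal{Q}|=K$. In those, the view $(\mathcal{S}_m\cap\mathcal{K}_{\{u\}})\cup\mathcal{K}_{\mathcal{U}^{(m,n)}}\cup\mathcal{T}_n$ covers all of $\overline{\mathcal{S}}$, so the $\overline{\mathcal{S}}$-keys alone cannot hide the protected inputs: together with the zero-sum constraint they would be fully determined. The extra $b^*L$ symbols fix this. By the LP constraint, every such configuration leaves outside users carrying at least $L$ fresh symbols in total, and these act as an additional uniform mask.

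The security proof follows the template of Section~\ref{pfsecurity}. Fix $k,m,n$. Rewrite the view of server $k$ as $X_{k,\cdot}$ together with the $Y_u$, $u\ne k$. Subtract the known colluder inputs and keys, and condition on the sum. The mutual information in (\ref{security}) then reduces to showing that the effective key combinations seen by server $k$ have rank equal to the number of $L$-blocks observed minus one. This rank condition makes the observations, conditioned on the sum, uniform and independent of $W_{\mathcal{S}_m}$. I would split into cases: (a) $|\mathcal{A}_{k,m,n}|<a^*$, where the count is at most $e^*$ (or $e^*-1$ plus the sum) and the Theorem~\ref{thm1} argument applies verbatim, because the new private keys are independent and only add masking; (b) $|\mathcal{A}_{k,m,n}|=a^*$, where the LP constraint guarantees that $\sum b_{u,v}\ge1$ over unseen users outside $\overline{\mathcal{S}}$. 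The combined key of those unseen users then contributes at least $L$ fresh dimensions, which restores the missing rank.

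The main obstacle is case (b). Fractional $b_{u,v}$ means the unseen users' private blocks each contribute only part of a full mask of $L$ dimensions. I must make sure that their aggregate, as it appears in the observations (these users are unseen, but their keys enter through the $Y_u$ broadcasts or the zero-sum relation), spans $L$ independent dimensions rather than merely summing. First I would try to make the zero-sum balancing key lie in the $\overline{\mathcal{S}}$ part, so that it equals minus the sum of all other keys and hence contains the concatenation of the fractional private blocks through a generic full-rank $L\times b^*L$ map. Summing the $b_{u,v}L$-dimensional private contributions then gives rank $\min(L,\sum b_{u,v}L)\ge L$ for generic coefficients. I would justify genericity by a Schwartz--Zippel argument over a sufficiently large field, applied to the finitely many $(k,m,n)$ conditions.
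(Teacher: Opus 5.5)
Your proposal is correct and follows the paper's route: LP-sized keys with $L$ chosen to clear denominators, $X_{u,v}=W_{u,v}+Z_{u,v}$ and $Y_u=\sum_v X_{u,v}$, a zero-sum balancing key placed inside $\overline{\mathcal{S}}$, Schwartz--Zippel genericity, and the security template of Section~\ref{pfsecurity}, with the LP constraint supplying at least $L$ fresh key dimensions from the unseen users outside $\overline{\mathcal{S}}$ whenever $|\mathcal{A}_{k,m,n}|=a^*$. Your block-separated keys (private fresh blocks outside $\overline{\mathcal{S}}$, a separate $e^*L$-symbol pool for $\overline{\mathcal{S}}$) follow the paper's Example~\ref{ex2} rather than the fully generic $\mathbf{F}_{u,v}\mathbf{G}_{u,v}\mathbf{N}$, $\mathbf{H}_{u,v}\mathbf{N}$ construction of the general proof, and they make the use of the LP explicit; note only that your case (a) is really the Condition~2 argument of Theorem~\ref{thm1} (an unseen user of $\overline{\mathcal{S}}$ breaks the dependency created by the balancing key), not Condition~3's, since here no user outside $\mathcal{Q}$ exists.
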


\begin{proposition}
\tit{For any multi-server secure aggregation scheme with arbitrary collusion patterns,}
\begin{eqnarray}
    R_{Z_\Sigma} \ge e^*.
\end{eqnarray}
\end{proposition}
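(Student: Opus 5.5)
Fix a maximizing triple $(u,m,n)$ in the definition of $e^*$, and write
\[
\mathcal{B}\triangleq\big((\mathcal{S}_m\cap\mathcal{K}_{\{u\}})\cup\mathcal{T}_n\big)\cap\overline{\mathcal{S}},\qquad
\mathcal{U}'\triangleq\mathcal{U}^{(m,n)}\setminus\{u\}.
\]
The proof then has two parts. First, conditioned on the inputs and keys of $\mathcal{T}_n$, server $u$'s observations must carry about $|\mathcal{B}|$ units of key-induced uncertainty in its own messages (the $X_{u,v}$ and the effective contributions of users in $\mathcal{T}_n\cap\overline{\mathcal{S}}$). Second, each broadcast $Y_{u'}$ with $u'\in\mathcal{U}'$ must carry one more fresh unit of key. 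A chain of entropy inequalities then gives $H(Z_\Sigma)\ge H(\mathcal{Z}_\mathcal{K})\ge e^*L$, since $Z_\Sigma$ determines all individual keys by \eqref{sourcekey}.

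\textbf{Step 1 (per-user key lemma).} Show that every $(i,j)\in\overline{\mathcal{S}}$ satisfies $H(X_{i,j}\mid W_{i,j},\text{side info})\ge L$ and $H(Z_{i,j}\mid\cdot)\ge L$, even conditioned on the inputs and keys of any other users that some valid collusion pattern leaves known.
\begin{itemize}
\item For explicit $(i,j)\in\mathcal{S}_{m'}$: use correctness \eqref{correctness} at the server $k$ of $(i,j)$ together with security \eqref{security}. The standard argument is that $X_{i,j}$ must convey $W_{i,j}$ to the sum, giving $H(X_{i,j}\mid\cdot)\ge L$, while security forces $I(X_{i,j};W_{i,j}\mid\cdot)=0$.
\item For implicit $(i,j)\in\mathcal{S}_I$: the defining $K-1$ set covers all other users, so revealing the sum determines $W_{i,j}$. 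The security constraint for the corresponding $\mathcal{S}_m$ then transfers protection to $(i,j)$. This is the Corollary-type argument referenced at (\ref{corollary1_eqX}), which I would prove as a preliminary lemma.
\end{itemize}

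\textbf{Step 2 (chain over $\mathcal{B}$).} Order the users of $\mathcal{B}$ and write
\[
H(\mathcal{Z}_\mathcal{K})\ge H\big(\{Z_{i,j}\}_{(i,j)\in\mathcal{B}}\big)=\sum_{\ell} H\big(Z_{b_\ell}\mid Z_{b_1},\dots,Z_{b_{\ell-1}}\big).
\]
Bound each term below by $L$ using Step 1. This requires that the conditioning set (earlier keys in $\mathcal{B}$) corresponds to a colluding configuration allowed under the security constraint for the pair $(m,n)$. This is why the users are restricted to $(\mathcal{S}_m\cap\mathcal{K}_{\{u\}})\cup\mathcal{T}_n$ intersected with $\overline{\mathcal{S}}$: every such user is either protected against $\mathcal{T}_n$ or is itself a colluder whose key is needed to mask a protected input.

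\textbf{Step 3 (extra keys from $\mathcal{U}'$).} For each $u'\in\mathcal{U}'$, all users of $u'$ lie in $\mathcal{S}_m\cup\mathcal{T}_n$ and at least one lies in $\mathcal{S}_m$. Conditioned on everything already accounted for plus $W_{\mathcal{K}_{\{u'\}}}$, the broadcast $Y_{u'}$ still has entropy at least $L$:
\begin{itemize}
\item by correctness at server $u$, $Y_{u'}$ must deliver the partial sum of $\mathcal{K}_{\{u'\}}$;
\item by security at $u$, $Y_{u'}$ reveals nothing about $W_{\mathcal{S}_m\cap\mathcal{K}_{\{u'\}}}$ given $W_{\mathcal{T}_n}$.
\end{itemize}
Hence $H(Y_{u'}\mid W_{\mathcal{K}_{\{u'\}}},\ldots)\ge L$, and since $Y_{u'}$ is a function of the $X$'s and hence of the inputs and keys, this entropy must come from keys of $\mathcal{K}_{\{u'\}}$ that are independent of the previous conditioning. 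Sum over $u'\in\mathcal{U}'$, using disjointness of the user sets across servers, to add $|\mathcal{U}'|L$. Dividing by $L$ gives $R_{Z_\Sigma}\ge|\mathcal{B}|+|\mathcal{U}'|=e^*$.

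\textbf{Main obstacle.} The main difficulty is Step 3 combined with Step 2: showing the key entropy in $Y_{u'}$ is genuinely new given the keys already counted in $\mathcal{B}$. Users of $u'$ may themselves belong to $\mathcal{T}_n\cap\overline{\mathcal{S}}\subseteq\mathcal{B}$. I would first try conditioning on $(W_{\mathcal{K}},Z_{\mathcal{T}_n})$ and $X_{u,\cdot}$ and bounding $H(Y_{\mathcal{U}'}\mid\cdot)$ jointly, via $I(Y_{\mathcal{U}'};W_{\mathcal{S}_m}\mid\cdot)=0$ together with correctness recovering each partial sum. If the double counting with $\mathcal{B}$ persists, I would separate the two contributions by ordering the chain so that the $\mathcal{U}'$ broadcast terms come first and the $\mathcal{B}$ key terms are conditioned on them.
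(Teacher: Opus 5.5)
Your target $H(Z_\Sigma)\ge(|\mathcal{B}|+|\mathcal{U}'|)L$ is the right one. However, Steps~2--3 contain a genuine gap, and it is exactly the ``main obstacle'' you leave unresolved.

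\emph{Step~2.} For $b_\ell\in\mathcal{S}_m\cap\mathcal{K}_{\{u\}}$, the conditioning $Z_{b_1},\dots,Z_{b_{\ell-1}}$ may contain keys of \emph{other} users of $\mathcal{S}_m\cap\mathcal{K}_{\{u\}}$. These users are not colluders, so no security constraint applies and Step~1 does not cover the term. The per-term bound is in fact false in general. Suppose $b_1$ never colludes. Then you can append a copy of $Z_{b_2}$ to $Z_{b_1}$ without changing any message or any security term, since those condition only on colluders' keys. This gives a valid scheme with $H(Z_{b_2}\mid Z_{b_1})=0$. Only a joint bound on these keys, conditioned on $Z_{\mathcal{T}_n}$ alone, is available.

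\emph{Step~3.} It has the same defect. The claim that the entropy of $Y_{u'}$ ``must come from keys independent of the previous conditioning'' is exactly what needs proof. Conditioning on non-colluders' keys again blocks any direct use of \eqref{security}.

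\emph{How the paper handles it.} It uses a different split,
$H(Z_{\mathcal{K}})\ge H(Z_{\mathcal{T}_n})+H\big(Z_{(\mathcal{S}_m\cap\mathcal{K}_{\{u\}})\cup\mathcal{K}_{\mathcal{U}^{(m,n)}}}\mid Z_{\mathcal{T}_n}\big)$.
Per-user bounds are used only on $\mathcal{T}_n\cap\overline{\mathcal{S}}$, conditioning only on other colluders' keys; this is a legitimate collusion set by monotonicity (Lemma~\ref{lemma4ZV>=VL}). The second term is bounded \emph{jointly} (Lemma~\ref{lemma5}) by $H(X_{\mathcal{S}_m\cap\mathcal{K}_{\{u\}}},Y_{\mathcal{U}^{(m,n)}}\mid W_{\mathcal{T}_n},Z_{\mathcal{T}_n})$ minus the mutual information between these messages and $W_{(\mathcal{S}_m\cap\mathcal{K}_{\{u\}})\cup\mathcal{K}_{\mathcal{U}^{(m,n)}}}$.
\begin{itemize}
\item The entropy is at least $(|\mathcal{S}_m\cap\mathcal{K}_{\{u\}}|+|\mathcal{U}'|)L$ by the chain rule and Lemma~\ref{lemma1}. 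That lemma uses correctness only, so it tolerates conditioning on all other users' inputs and keys (Lemma~\ref{lemma3}).
\item The mutual information vanishes by a single application of security for $(\mathcal{S}_m,\mathcal{T}_n)$, after inserting $\sum_{(i,j)\in\mathcal{K}}W_{i,j}$. The sum is independent of the relevant inputs because the union misses some user.
\end{itemize}
Your fallback of one joint mutual-information term is the right instinct. But it must cover the $\mathcal{S}_m\cap\mathcal{K}_{\{u\}}$ messages and the $\mathcal{U}'$ broadcasts together, after all of $Z_{\mathcal{T}_n}$ is in the conditioning. That is what rules out double counting.

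Your double-counting worry is nonetheless substantive. Lemma~\ref{lemma3} needs, for every $u'\in\mathcal{U}'$, a user of server $u'$ outside $\mathcal{T}_n$. If $\mathcal{S}_m\cap\mathcal{K}_{\{u'\}}\subseteq\mathcal{T}_n$, then $Y_{u'}$ is a function of the colluders' inputs and keys and carries no key entropy beyond $Z_{\mathcal{T}_n}$. In that case the inequality can fail for $e^*$ as literally defined. Take $U=3$, $V_1=V_2=V_3=1$, and $\bm{\mathcal{S}}=\bm{\mathcal{T}}=\{\emptyset,\{(1,1)\}\}$.
\begin{itemize}
\item The triple $u=2$, $\mathcal{S}_m=\mathcal{T}_n=\{(1,1)\}$ has $\mathcal{U}^{(m,n)}=\{1\}$ and $\overline{\mathcal{S}}=\{(1,1)\}$, so $e^*=2$.
\item The scheme $Z_{1,1}=N$, $Z_{2,1}=-N$, $Z_{3,1}=0$, with $X_{u,1}=W_{u,1}+Z_{u,1}$ and $Y_u=X_{u,1}$, is correct and secure with $R_{Z_\Sigma}=1$.
\end{itemize}
So any proof, yours or the paper's, needs a restriction such as $\mathcal{S}_m\cap\mathcal{T}_n=\emptyset$ in the maximization defining $e^*$. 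The paper's ``without loss of generality'' remark enforces disjointness only at the observing server.

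Finally, for implicit users your Step~1 (via Corollary~\ref{corol1}) only yields $H(Z_{i,j}\mid Z_{\mathcal{T}_{n''}})\ge L$ for the particular collusion set that makes $(i,j)$ implicit. It does not give the bound conditioned on an arbitrary $\mathcal{T}_n\setminus\{(i,j)\}$. The paper's Lemma~\ref{lemma4ZV>=VL} glosses over the same point by asserting that every element of $\overline{\mathcal{S}}$ lies in some $\mathcal{S}_{m'}$.
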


\begin{corollary}
\tit{For the remaining case, the optimal rate satisfies}
\begin{eqnarray}
    e^* \le R_{Z_\Sigma}^* \le e^* + b^* .
\end{eqnarray}
\end{corollary}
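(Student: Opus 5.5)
The corollary is just the two existing bounds put together, so I will prove it by combining them directly. Fix any instance in the remaining regime, i.e., $e^*\le K-1$, $a^*=|\overline{\mathcal{S}}|$ and $|\mathcal{Q}|=K$. By definition, $R_{Z_\Sigma}^*$ is the infimum of the source key rate over the closure $\mathcal{R}^*$ of achievable tuples.

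\emph{Lower bound.} The Proposition states that $R_{Z_\Sigma}\ge e^*$ for every achievable tuple under arbitrary collusion patterns. It does not use any of the three conditions of Theorem~\ref{thm1}, so it applies in the present regime as well. The half-space $\{R_{Z_\Sigma}\ge e^*\}$ is closed, so the bound passes to the closure $\mathcal{R}^*$. Taking the infimum gives $R_{Z_\Sigma}^*\ge e^*$.

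\emph{Upper bound.} Theorem~\ref{thm2} states that in this regime every tuple with $R_X\ge1$, $R_Y\ge1$ and $R_{Z_\Sigma}\ge e^*+b^*$ is achievable. In particular $(1,1,e^*+b^*)\in\mathcal{R}^*$, so $R_{Z_\Sigma}^*\le e^*+b^*$. I also need $b^*$ to be finite, i.e., the linear program in (\ref{lp}) must be feasible.
\begin{itemize}
\item Each constraint is indexed by a triple $(u,m,n)$ with $|\mathcal{A}_{u,m,n}|=a^*=|\overline{\mathcal{S}}|$.
\item For such a triple, $\overline{\mathcal{S}}\subseteq(\mathcal{S}_m\cap\mathcal{K}_{\{u\}})\cup\mathcal{K}_{\mathcal{U}^{(m,n)}}\cup\mathcal{T}_n$.
\item Hence the summation range in the constraint is a subset of $\mathcal{K}\setminus\overline{\mathcal{S}}$, so it only involves actual LP variables.
\item That range is nonempty whenever $|(\mathcal{S}_m\cap\mathcal{K}_{\{u\}})\cup\mathcal{K}_{\mathcal{U}^{(m,n)}}\cup\mathcal{T}_n|\le K-1$.
\end{itemize}
This cardinality bound is the only point needing care, and it is the main obstacle. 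If the set equals all of $\mathcal{K}$, the constraint would read $0\ge1$ and the LP would be infeasible. I expect that in this case security forces every input to be protected against server $u$, which would push the instance into the $e^*=K$ regime and contradict $e^*\le K-1$. If that argument turns out to be awkward, I would instead rely on the fact that Theorem~\ref{thm2} already asserts achievability of $e^*+b^*$, which presupposes a finite $b^*$.

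With finiteness in hand, the upper bound holds. Combining the two bounds gives $e^*\le R_{Z_\Sigma}^*\le e^*+b^*$, as claimed.
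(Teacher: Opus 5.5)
Your route is the paper's own: there, too, the corollary is just the converse $R_{Z_\Sigma}\ge e^*$ of Section~\ref{convpfthm1} combined with the achievability of Theorem~\ref{thm2}. You were right that full-union triples are the crux. But your expectation that they force $e^*=K$ is false. The reason is that $e^*$ counts each server of $\mathcal{U}^{(m,n)}\setminus\{u\}$ once, however many non-colluding users it has. The paper tacitly makes the same assumption when it writes ``$e^*\le K-1$, i.e., $|\mathcal{S}_m\cup\mathcal{T}_n|\le K-1$'' in Section~\ref{pfsecurity}.

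Here is a counterexample. Take $U=3$, $V_1=V_2=2$, $V_3=1$ (so $K=5$). Let $\bm{\mathcal{S}}$ consist of all subsets of $\mathcal{K}_{\{1,2\}}$, and let $\bm{\mathcal{T}}=\{\emptyset,\{(3,1)\}\}$.
The triple $u=1$, $\mathcal{S}_m=\mathcal{K}_{\{1,2\}}$, $\mathcal{T}_n=\emptyset$ puts $(3,1)$ into $\mathcal{S}_I$, so $\overline{\mathcal{S}}=\mathcal{K}$. The quantity maximized in $e^*$ is at most $2+1+1$ for $u\in\{1,2\}$ and at most $1+2$ for $u=3$. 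Hence $e^*=4=K-1$. It is attained only at $u\in\{1,2\}$, $\mathcal{S}_m=\mathcal{K}_{\{1,2\}}$, $\mathcal{T}_n=\{(3,1)\}$, where $\mathcal{U}^{(m,n)}=\{1,2\}$ and $(\mathcal{S}_m\cap\mathcal{K}_{\{u\}})\cup\mathcal{K}_{\mathcal{U}^{(m,n)}}\cup\mathcal{T}_n=\mathcal{K}$. Thus $a^*=|\overline{\mathcal{S}}|$ and $|\mathcal{Q}|=K$, so this instance is in the remaining case. The LP there has no variables but contains the constraint $0\ge 1$.

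For the upper bound this is harmless, and finiteness of $b^*$ was never needed: if $b^*=+\infty$ the right inequality is vacuous. The real casualty is the left inequality. The Proposition's proof bounds $H(Z_\Sigma)$ through Lemma~\ref{lemma5}, whose hypothesis requires the union to have size at most $K-1$. For a full union, the third term in \eqref{pf_lemma5_t2} equals $L$, not $0$. Yet the final step of that proof maximizes over all triples, full-union ones included.

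In the example this overshoots. Take $L=1$ and keys $Z_{1,1}=N_1$, $Z_{1,2}=Z_{2,1}=N_2$, $Z_{2,2}=-(N_1+2N_2+N_3)$, $Z_{3,1}=N_3$, which sum to zero. Set $X_{u,v}=W_{u,v}+Z_{u,v}$ and $Y_u=\sum_{v}X_{u,v}$. For every server and both collusion sets, remove the parts of the view fixed by the sum and by the colluders' inputs and keys. What remains is at most three pieces, masked by linearly independent combinations of $N_1,N_2,N_3$ that stay uniform given $Z_{\mathcal{T}_n}$. So the scheme is correct and secure with $R_{Z_\Sigma}=3<e^*$, and the corollary's left inequality is false for this instance.

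The failing step is therefore your sentence ``it does not use any of the three conditions of Theorem~\ref{thm1}, so it applies in the present regime as well.'' In the remaining case the Proposition cannot simply be quoted. Full-union triples (which, in this regime, are exactly what makes the LP infeasible) must either be excluded by hypothesis or handled by a corrected lower bound. The paper supplies such a correction only for $e^*=K$.
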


\begin{remark}[Intuitive Explanation of Theorem~\ref{thm2}]
Theorem~\ref{thm2} addresses the remaining case in multi-server secure aggregation, 
where $e^* \le K-1$, $a^* = |\overline{\mathcal{S}}|$, and $|\mathcal{Q}| = K$. 
In this scenario, the optimal key rate cannot be determined exactly as in Theorem~\ref{thm1}, 
but an achievable upper bound is provided via a linear program.
Intuitively, the reasoning is as follows:
\begin{itemize}
    \item All users in the total security input set $\overline{\mathcal{S}}$ must be protected with key symbols, 
    since they are directly involved in the critical server-colluder combinations. This ensures the security of these explicitly critical inputs.
    \item Users outside $\overline{\mathcal{S}}$ (i.e., in $\mathcal{K} \setminus \overline{\mathcal{S}}$) may also require key symbols. 
    According to Lemma~2, for each critical configuration $(u,m,n)$ achieving the maximum overlap $a^*$, 
    at least $L$ key symbols must be associated with users not in $\mathcal{A}_{u,m,n}$ to preserve both correctness and security. 
    This is because these outside users may participate indirectly in combinations that could reveal the critical inputs if not masked.
    \item In the worst-case scenario, all users in $\mathcal{K} \setminus \overline{\mathcal{S}}$ might appear in some colluding set $\mathcal{T}_n$ (or in $\mathcal{T}_n \setminus \overline{\mathcal{S}}$). 
    Consequently, to satisfy both correctness and security, these users must contribute key symbols as determined by the linear program. 
    Since different critical configurations may involve different subsets of $\mathcal{K} \setminus \overline{\mathcal{S}}$, 
    we cannot, in general, reduce the key allocation below the value $b^*$.
    \item Therefore, the total number of key symbols required is upper-bounded by $e^* + b^*$, 
    where $e^*$ accounts for the keys for users in $\overline{\mathcal{S}}$ and $b^*$ accounts for the necessary keys for certain users outside $\overline{\mathcal{S}}$.
    The linear program ensures that every critical server-colluder combination receives sufficient masking to satisfy security and correctness.
\end{itemize}

In summary, Theorem~\ref{thm2} quantifies an achievable upper bound on the source key rate 
for the remaining case, capturing both the directly critical users and the additional keys required for correctness in the worst-case collusion scenario. 
\end{remark}

\section{Proof of Theorem \ref{thm1}}
\label{pfthm1}

In this section, we establish Theorem~\ref{thm1}, which characterizes the optimal rate region for most parameter regimes. 
The proof is divided into two parts: the converse and the achievability. 
We begin with simple examples to highlight the key ideas, and then proceed to the general scheme and its analysis.

\subsection{Converse Proof of Example~\ref{ex1}}

In general, the converse proof proceeds by lower bounding the entropy of carefully constructed subsets of source keys. Specifically, for any $k,m,n$, it suffices to show that
\begin{align}
H(\{Z_{u,v}\}_{(u,v)\in\mathcal{A}_{k,m,n}})
\ge \Big(
|((\mathcal{S}_m \cap \mathcal{K}_{\{k\}})\cup \mathcal{T}_n)\cap\overline{\mathcal{S}}|
+|\mathcal{U}^{(m,n)}\setminus\{k\}|
\Big)L.
\end{align}

We illustrate the argument using Example~\ref{ex1} (see Definition~\ref{def:secrelay11}). 
In this example, $e^*=2 \le |\overline{\mathcal{S}}|-1=2$, and hence the condition in Theorem~\ref{thm1} is satisfied.
Consider $k=1$, $m=4$, and $n=16$. In this case, $\mathcal{U}^{(4,16)}=\{2\}$ and
$((\mathcal{S}_4 \cap \mathcal{K}_{\{1\}})\cup \mathcal{T}_{16})\cap\overline{\mathcal{S}}
= \{(1,1)\}.$
By Definition~\ref{def:totset1}, we have
$\mathcal{A}_{k,m,n}
= \big((\mathcal{S}_m \cap \mathcal{K}_{\{k\}}) 
\cup \mathcal{K}_{\mathcal{U}^{(m,n)}} 
\cup \mathcal{T}_n\big)\cap\overline{\mathcal{S}}.$
Applying the chain rule of entropy, we obtain
\begin{align}
&H(\{Z_{u,v}\}_{(u,v)\in\mathcal{A}_{1,4,16}})\notag\\
\ge\;& H(Z_{1,1},Z_{1,2},Z_{2,1},Z_{2,2}, Z_{3,1},Z_{3,2})\notag\\
=\;& H(Z_{1,2},Z_{2,2}, Z_{3,1},Z_{3,2})  + H(Z_{1,1},Z_{2,1},Z_{2,2}\mid Z_{1,2},Z_{2,2}, Z_{3,1},Z_{3,2}) \notag\\
\ge\;& (1+|\mathcal{U}^{(4,16)}|)L = 2L. 
\label{eq:ex1-entropy}
\end{align}
Therefore, it remains to show that $H(Z_{1,1},Z_{2,1},Z_{2,2}\mid Z_{1,2},Z_{2,2}, Z_{3,1},Z_{3,2}) 
\ge (1+|\mathcal{U}^{(4,16)}|)L = 2L,$
which we establish next.

Intuitively, one key symbol is required to protect the input message $W_{1,1}$, and one additional key symbol is needed to protect the message associated with the surviving user in $\mathcal{U}^{(4,16)}$. We next formalize this intuition using entropy arguments.
\begin{align}
    &H(Z_{1,1},Z_{2,1},Z_{2,2}|Z_{1,2},Z_{2,2}, Z_{3,1},Z_{3,2})\notag\\
    \geq&H(Z_{1,1},Z_{2,1},Z_{2,2}|Z_{1,2},Z_{2,2}, Z_{3,1},Z_{3,2},W_{1,1},W_{2,1},W_{1,2},W_{2,2}, W_{3,1},W_{3,2})\\
    \geq&I(Z_{1,1},Z_{2,1},Z_{2,2};X_{1,1},Y_2|Z_{1,2},Z_{2,2}, Z_{3,1},Z_{3,2},W_{1,1},W_{2,1},W_{1,2},W_{2,2}, W_{3,1},W_{3,2})\\
    =&H(X_{1,1},Y_2|Z_{1,2},Z_{2,2}, Z_{3,1},Z_{3,2},W_{1,1},W_{2,1},W_{1,2},W_{2,2}, W_{3,1},W_{3,2})\\
    &-\underbrace{H(X_{1,1},Y_2|Z_{1,1},Z_{2,1},Z_{2,2},Z_{1,2},Z_{2,2}, Z_{3,1},Z_{3,2},W_{1,1},W_{2,1},W_{1,2},W_{2,2}, W_{3,1},W_{3,2})}_{\overset{(\ref{messageX})(\ref{messageY})}{=}0}\label{pfex1t1}\\
    =&H(X_{1,1},Y_2|Z_{1,2},Z_{2,2}, Z_{3,1},Z_{3,2},W_{1,2},W_{2,2}, W_{3,1},W_{3,2})\notag\\
    &-I(X_{1,1},Y_2;W_{1,1},W_{2,1}|Z_{1,2},Z_{2,2}, Z_{3,1},Z_{3,2},W_{1,2},W_{2,2}, W_{3,1},W_{3,2})\\
    =&H(X_{1,1}|Z_{1,2},Z_{2,2}, Z_{3,1},Z_{3,2},W_{1,2},W_{2,2}, W_{3,1},W_{3,2})\notag\\
    &+H(Y_2|X_{1,1},Z_{1,2},Z_{2,2}, Z_{3,1},Z_{3,2},W_{1,2},W_{2,2}, W_{3,1},W_{3,2})\notag\\
    &-I(X_{1,1},Y_2;W_{1,1},W_{2,1}|Z_{1,2},Z_{2,2}, Z_{3,1},Z_{3,2},W_{1,2},W_{2,2}, W_{3,1},W_{3,2})\\
    \geq&H(X_{1,1}|Z_{1,2},Z_{2,2}, Z_{3,1},Z_{3,2},W_{1,2},W_{2,2}, W_{3,1},W_{3,2})\notag\\
    &+H(Y_2|X_{1,1},W_{1,1},Z_{1,1},Z_{1,2},Z_{2,2}, Z_{3,1},Z_{3,2},W_{1,2},W_{2,2}, W_{3,1},W_{3,2})\notag\\
    &-I\Big(X_{1,1},Y_2,\sum_{(u,v)\in \mathcal{K}}W_{u,v};W_{1,1},W_{2,1}\Big|Z_{1,2},Z_{2,2}, Z_{3,1},Z_{3,2},W_{1,2},W_{2,2}, W_{3,1},W_{3,2}\Big)\\
    =&H(X_{1,1}|Z_{1,2},Z_{2,2}, Z_{3,1},Z_{3,2},W_{1,2},W_{2,2}, W_{3,1},W_{3,2})\notag\\
    &+H(Y_2|W_{1,1},Z_{1,1},Z_{1,2},Z_{2,2}, Z_{3,1},Z_{3,2},W_{1,2},W_{2,2}, W_{3,1},W_{3,2})\notag\\
    &-\underbrace{I\Big(\sum_{(u,v)\in \mathcal{K}}W_{u,v};W_{1,1},W_{2,1}\Big|Z_{1,2},Z_{2,2}, Z_{3,1},Z_{3,2},W_{1,2},W_{2,2}, W_{3,1},W_{3,2}\Big)}_{\overset{(\ref{ind})}{=}0}\notag\\
    &-\underbrace{I\Big(X_{1,1},Y_2;W_{1,1},W_{2,1}\Big|\sum_{(u,v)\in \mathcal{K}}W_{u,v},Z_{1,2},Z_{2,2}, Z_{3,1},Z_{3,2},W_{1,2},W_{2,2}, W_{3,1},W_{3,2}\Big)}_{\overset{(\ref{security})}{=}0}\label{pfex1t2}\\
    \geq&2L.\label{pfex1t3}
\end{align}
The equality in \eqref{pfex1t1} follows since $X_{1,1}$ is a deterministic function of $(Z_{1,1},W_{1,1})$, and $Y_2$ is a deterministic function of $(Z_{2,1},W_{2,1},Z_{2,2},W_{2,2})$. Hence, conditioned on these variables, $(X_{1,1},Y_2)$ is fully determined.
The first equality in \eqref{pfex1t2} follows from the fact that $X_{1,1}$ is determined by $(Z_{1,1},W_{1,1})$, and thus conditioning on $X_{1,1}$ can be replaced by conditioning on $(Z_{1,1},W_{1,1})$.
The third term in \eqref{pfex1t2} is zero due to the independence between the input messages and the keys (cf. \eqref{ind}). 
The last term in \eqref{pfex1t2} is zero by the security constraint associated with $\mathcal{S}_4$ and $\mathcal{T}_{16}$ (cf. \eqref{security}), which ensures that $(X_{1,1},Y_2)$ reveals no information about $(W_{1,1},W_{2,1})$ even when the aggregate $\sum_{(u,v)\in\mathcal{K}} W_{u,v}$ and the side information are given.
The first and second terms in \eqref{pfex1t2} are each lower bounded by $L$, which follows from Lemma~\ref{lemma1}.

\subsection{Achievability Proof of Example \ref{ex1}}

Before presenting the general achievable scheme, we first revisit Example~\ref{ex1} and provide an explicit construction over a small finite field.

We show that $R_{Z_\Sigma} = 2$ is achievable. Let $q = 2$, i.e., all symbols are defined over $\mathbb{F}_2$.
Consider two i.i.d. uniform random variables $(N_1,N_2) = Z_\Sigma$ over $\mathbb{F}_5$, so that $L_\Sigma = 2$, and set $L=1$. 
The key variables are defined as
\begin{align}
    Z_{1,1}=N_1,~Z_{1,3}=N_2,~Z_{2,1}=-(N_1+N_2),~Z_{1,2}=Z_{2,2}=Z_{3,1}=Z_{3,2}=0.
\end{align}
Intuitively, each key variable $Z_{u,v}$ acts as a one-time pad for the corresponding user input $W_{u,v}$ in the transmitted messages $X_{u,v}$ and $Y_u$, ensuring that even if a server observes all colluding users' messages, it learns nothing about the protected inputs beyond the sum. In other words, $X_{u,v}$ and $Y_u$ carry the masked user inputs, while $Z_{u,v}$ provides the randomness needed to hide these inputs from any unauthorized view.
The transmitted messages are given by
\begin{align}
&X_{1,1}=W_{1,1}+N_1,~X_{1,2}=W_{1,2},~X_{1,3}=W_{1,3}+N_2,\notag\\
    &X_{2,1}=W_{2,1}-(N_1+N_2),~X_{2,2}=W_{2,2},~X_{3,1}=W_{3,1},~X_{3,2}=W_{3,2},\\
    &Y_1=W_{1,1}+W_{1,2}+W_{1,3}+N_1+N_2,~Y_2=W_{2,1}+W_{2,2}-(N_1+N_2),~Y_3=W_{3,1}+W_{3,2}.
\end{align}

Correctness follows since
$\sum_{u\in [3]\setminus\{k\}} Y_u+\sum_{(u,v)\in \mathcal{K}_{\{k\}}}X_{u,v} 
= W_{1,1}+W_{1,2}+W_{1,3}+W_{2,1}+W_{2,2}+W_{3,1}+W_{3,2}.$
For security, we verify one representative case with $k=1$, $\mathcal{S}_4 = \{(1,1),(2,1)\}$, and $\mathcal{T}_{16} = \{(1,2),(2,2),(3,1),(3,2)\}$. 
The remaining cases follow similarly and are deferred to the general proof in Section~\ref{pfsecurity}.
\begin{align}
    &I(W_{1,1},W_{2,1};X_{1,1},X_{1,2},X_{1,3},Y_2,Y_3|W_{1,1}+\cdots+W_{3,2},W_{1,2},W_{2,2},W_{3,1},W_{3,2},Z_{1,2},Z_{2,2},Z_{3,1},Z_{3,2})\notag\\
    =&H(X_{1,1},X_{1,2},X_{1,3},Y_2,Y_3|W_{1,1}+\cdots+W_{3,2},W_{1,2},W_{2,2},W_{3,1},W_{3,2},Z_{1,2},Z_{2,2},Z_{3,1},Z_{3,2})\notag\\
    &-H(X_{1,1},X_{1,2},X_{1,3},Y_2,Y_3|W_{1,1}+\cdots+W_{3,2},W_{1,2},W_{2,2},W_{3,1},W_{3,2},Z_{1,2},Z_{2,2},Z_{3,1},Z_{3,2},W_{1,1},W_{2,1})\\
    \leq &H(W_{1,1}+N_{1},W_{1,3}+N_2,W_{2,1}+W_{2,2}-(N_1+N_2)|W_{1,1}+W_{1,3}+W_{2,1}+W_{2,2})\notag\\
    &-H(N_1,N_2,-(N_1+N_2)|W_{1,1},W_{1,2},W_{1,3},W_{2,1},W_{2,2},W_{3,1},W_{3,2})\label{pfexsect1}\\
    =&H(W_{1,1}+N_{1},W_{1,3}+N_2,W_{2,1}+W_{2,2}-(N_1+N_2),W_{1,1}+W_{1,3}+W_{2,1}+W_{2,2})\notag\\
    &-H(W_{1,1}+W_{1,3}+W_{2,1}+W_{2,2})-H(N_1,N_2)\label{pfexsect2}\\
    =&3-1-2=0.
\end{align}
Here, the equality in \eqref{pfexsect1} follows since $X_{1,2}$ and $Y_3$ are deterministic given the conditioning variables and hence do not contribute to the entropy. Moreover, $-(N_1+N_2)$ is a deterministic function of $(N_1,N_2)$.
The equality in \eqref{pfexsect2} follows by the chain rule of entropy. Since all variables are defined over $\mathbb{F}_5$ and are independent unless linearly constrained, each independent symbol contributes one unit of entropy. The four linear combinations in \eqref{pfexsect2} have joint entropy $3$, while $H(W_{1,1}+W_{1,3}+W_{2,1}+W_{2,2})=1$ and $H(N_1,N_2)=2$.
Therefore, the mutual information is zero, and the security constraint in \eqref{security} is satisfied.

\subsection{General Converse Proof of Theorem~\ref{thm1}}
\label{convpfthm1}

We now extend the preceding argument to general parameter regimes. 
To facilitate the analysis, we first establish several auxiliary lemmas. 

As a basic observation, each user message $X_{u,v}$ must convey at least $L$ symbols, 
which corresponds to the size of the input $W_{u,v}$, 
even if all other inputs and key variables are revealed. 
Similarly, each server message $Y_u$ must contain at least $L$ symbols 
whenever there exists at least one user $(u,v)$ whose input $W_{u,v}$ 
is not available from the side information of the other servers. 
This observation provides a simple but fundamental lower bound on the communication rates.


\begin{lemma} \label{lemma1}
\emph{For any $u\in [U]$, $(u,v) \in \mathcal{K}$, it holds that}
\begin{align}
& H\left( X_{u,v}|\{W_{i,j},Z_{i,j} \}_{(i,j)\in \mathcal{K}\backslash \{(u,v)\}}  \right)
\ge L,\label{lemma1X>=L}\\
& H\left( Y_u|\{W_{i,j},Z_{i,j} \}_{(i,j)\in \mathcal{K}\backslash \{(u,v)\}}  \right)
\ge L\label{lemma1Y>=L}.
\end{align}
\end{lemma}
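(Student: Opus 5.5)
The plan is a cut-set argument: fix every variable except those belonging to user $(u,v)$, and show that correctness then forces the relevant message to carry the full entropy of $W_{u,v}$. Throughout, let
\[
\mathcal{C} \triangleq \{W_{i,j},Z_{i,j}\}_{(i,j)\in\mathcal{K}\setminus\{(u,v)\}}.
\]

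\emph{Step 1: the user message \eqref{lemma1X>=L}.} Pick any server $k\neq u$; one exists since $U\ge 3$. By \eqref{correctness}, the sum $\sum_{(i,j)\in\mathcal{K}} W_{i,j}$ is a function of $\{Y_{u'}\}_{u'\neq k}$ and $\{X_{k,j}\}_j$.

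Given $\mathcal{C}$, the following hold:
\begin{itemize}
\item every $X_{i,j}$ with $(i,j)\neq(u,v)$ is known, by \eqref{messageX};
\item every $Y_{u'}$ with $u'\neq u$ is known, by \eqref{messageY};
\item $Y_u$ is a function of $X_{u,v}$ together with known messages.
\end{itemize}
Hence $\sum W_{i,j}$, and therefore $W_{u,v}$, is a function of $(X_{u,v},\mathcal{C})$. This gives
\[
H(X_{u,v}\mid\mathcal{C}) \ge I(X_{u,v};W_{u,v}\mid\mathcal{C}) = H(W_{u,v}\mid\mathcal{C}) - H(W_{u,v}\mid X_{u,v},\mathcal{C}).
\]
The last term is $0$. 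By \eqref{ind} and \eqref{h2}, $H(W_{u,v}\mid\mathcal{C})=H(W_{u,v})=L$, since $W_{u,v}$ is independent of all other inputs and of all keys jointly. So the bound is $L$.

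\emph{Step 2: the server message \eqref{lemma1Y>=L}.} Use the same $k\neq u$. Server $k$ decodes from $\{Y_{u'}\}_{u'\neq k}$ and its own $X_{k,j}$. Given $\mathcal{C}$, all of these are known except $Y_u$, because $k\neq u$ means server $k$'s own users do not include $(u,v)$. Therefore $W_{u,v}$ is a function of $(Y_u,\mathcal{C})$, and the same chain yields $H(Y_u\mid\mathcal{C})\ge I(Y_u;W_{u,v}\mid\mathcal{C})=L$.

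\emph{Main obstacle.} The only delicate point is justifying $H(W_{u,v}\mid\mathcal{C})=L$. This requires that $W_{u,v}$ be independent of the other inputs and of the entire key collection $\mathcal{Z}_{\mathcal{K}}$ jointly, not merely of its own key, and \eqref{ind} provides exactly that. A second point to check is that we choose a decoding server $k\ne u$, so that the message of user $(u,v)$ reaches server $k$ only through $Y_u$.
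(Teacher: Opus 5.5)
Your proof is correct and uses the same cut-set idea as the paper. You condition on your $\mathcal{C}$, note that $H(W_{u,v}\mid\mathcal{C})=L$ by \eqref{ind}, and use \eqref{correctness} to show that the message, together with $\mathcal{C}$, determines the sum. Bounding $I(\cdot\,;W_{u,v}\mid\mathcal{C})$ rather than the paper's $I(\cdot\,;\sum_{(i,j)\in\mathcal{K}}W_{i,j}\mid\mathcal{C})$ makes no difference, since the two are in bijection given $\mathcal{C}$.

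The one substantive difference is which server's correctness constraint you invoke, and here your choice is the right one.

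For the bound on $H(X_{u,v}\mid\mathcal{C})$, the paper uses server $u$'s own decoding set. That works just as well as your server $k\neq u$, because $(X_{u,v},\mathcal{C})$ determines $\{X_{u,v'}\}_{v'\in[V_u]}$ and $\{Y_k\}_{k\neq u}$.

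For the bound on $H(Y_u\mid\mathcal{C})$, however, the paper again uses server $u$'s decoding set. It replaces the conditioning $(Y_u,\mathcal{C})$ by $(\{Y_k\}_{k\neq u},\{X_{u,v'}\}_{v'\in[V_u]},\mathcal{C})$, which amounts to swapping $Y_u$ for $X_{u,v}$. Since $Y_u$ is a function of $(X_{u,v},\mathcal{C})$ but not conversely, the paper's term is
\[
H\Big(\sum_{(i,j)\in\mathcal{K}}W_{i,j}\,\Big|\,X_{u,v},\mathcal{C}\Big)\le H\Big(\sum_{(i,j)\in\mathcal{K}}W_{i,j}\,\Big|\,Y_u,\mathcal{C}\Big).
\]
This inequality runs in the wrong direction to conclude that the right-hand side vanishes, so the paper's equality is not justified as written.

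Your Step 2 fixes this. You decode at a server $k\neq u$, which sees user $(u,v)$'s contribution only through $Y_u$, and all of whose other observations are functions of $\mathcal{C}$. That gives $H(\sum_{(i,j)\in\mathcal{K}}W_{i,j}\mid Y_u,\mathcal{C})=0$ legitimately. Note that only $U\ge 2$ is needed for such a $k$ to exist.
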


\begin{IEEEproof}
A server can recover the aggregate $\sum_{(u,v)\in \mathcal{K}} W_{u,v}$ only when sufficient information about each input $W_{u,v}$ is conveyed through the user-to-server and server-to-server communications. Hence, any message carrying $W_{u,v}$ must have entropy at least $H(W_{u,v}) = L$.
Formally, for the user message $X_{u,v}$, we have
\begin{align}
& H\left( X_{u,v}|\{W_{i,j},Z_{i,j} \}_{(i,j)\in \mathcal{K}\backslash \{(u,v)\}}  \right)\notag\\
 \geq& I\left(X_{u,v};  \sum_{(i,j)\in \mathcal{K}}W_{i,j}     \bigg |    \{W_{i,j},Z_{i,j} \}_{(i,j)\in \mathcal{K}\backslash \{(u,v)\}}          \right)\\ 
  =& H\left(\sum_{(i,j)\in \mathcal{K}}W_{i,j}  \bigg |    \{W_{i,j},Z_{i,j} \}_{(i,j)\in \mathcal{K}\backslash \{(u,v)\}}   \right)  - H\left(\sum_{(i,j)\in \mathcal{K}}W_{i,j}  \bigg | X_{u,v},  \{W_{i,j},Z_{i,j} \}_{(i,j)\in \mathcal{K}\backslash \{(u,v)\}}   \right)\\
\overset{(\ref{ind})(\ref{messageX})(\ref{messageY})}{=}&    H\left(W_{u,v}   \right) - \underbrace{ H\left(\sum_{(i,j)\in \mathcal{K}}W_{i,j}  \bigg | \{X_{u,v'}\}_{v'\in [V_u]},  \{W_{i,j},Z_{i,j} \}_{(i,j)\in \mathcal{K}\backslash \{(u,v)\}} , \{Y_k\}_{k\in{[U]\setminus\{u\}}} \right)}_{ \overset{(\ref{correctness})}{=}    0}\label{pf_lemma1_t1}\\
    \overset{(\ref{h2})}{=}& L.\label{pf_lemma1_t2}
\end{align}
The equality in (\ref{pf_lemma1_t1}) holds as follows. 
For the first term, $W_{u,v}$ is independent of 
$\{W_{i,j},Z_{i,j}\}_{(i,j)\in \mathcal{K}\setminus \{(u,v)\}}$ by (\ref{ind}), 
and hence it equals $H(W_{u,v})$. 
For the second term, note that conditioning reduces entropy. 
Given $\{W_{i,j},Z_{i,j}\}_{(i,j)\in \mathcal{K}\setminus \{(u,v)\}}$, 
the remaining user messages $\{X_{u,v'}\}_{v'\in [V_u]}$ are determined by $X_{u,v}$, 
and the server messages $\{Y_k\}_{k\in [U]\setminus\{u\}}$ are also determined. 
Therefore, we can include these variables in the conditioning without increasing entropy. 
Under this conditioning, the sum $\sum_{(i,j)\in\mathcal{K}} W_{i,j}$ is fully determined by the correctness condition (\ref{correctness}), 
and hence the entropy term is zero. 
Finally, (\ref{pf_lemma1_t2}) follows since $W_{u,v}$ consists of $L$ independent and uniformly distributed symbols, 
which implies $H(W_{u,v}) = L$ as stated in (\ref{h2}).

Having obtained the entropy lower bound for each user message $X_{u,v}$ in (\ref{lemma1X>=L}), we next analyze the server messages $Y_u$. The proof of (\ref{lemma1Y>=L}) proceeds along the same line of reasoning, by relating the entropy of $Y_u$ to the information required to recover the desired sum.
\begin{eqnarray}
&& H\left( Y_u|\{W_{i,j},Z_{i,j} \}_{(i,j)\in \mathcal{K}\backslash \{(u,v)\}}  \right)\notag\\
 \geq&& I\left( Y_u ; \sum_{(i,j)\in \mathcal{K}}W_{i,j}    \bigg|\{W_{i,j},Z_{i,j}\}_{(i,j)\in \mathcal{K}\backslash \{(u,v)\}}  \right)\\
 =&& H\left( \sum_{(i,j)\in \mathcal{K}}W_{i,j}    \bigg|\{W_{i,j},Z_{i,j}\}_{(i,j)\in \mathcal{K}\backslash \{(u,v)\}}  \right)  
 - H\left( \sum_{(i,j)\in \mathcal{K}}W_{i,j}    \bigg|Y_u,\{W_{i,j},Z_{i,j}\}_{(i,j)\in \mathcal{K}\backslash \{(u,v)\}}  \right)\\
 =&& H(W_{u,v})- \underbrace{H\left( \sum_{(i,j)\in \mathcal{K}}W_{i,j}    \bigg|\{Y_k\}_{k\in [U]\setminus\{u\}},\{X_{u,v'}\}_{v'\in[V_{u}]},\{W_{i,j},Z_{i,j}\}_{(i,j)\in \mathcal{K}\backslash \{(u,v)\}}  \right)}_{\overset{(\ref{messageX}) (\ref{messageY}) (\ref{correctness})}{=}0}\label{pf_lemma1_tt2}\\
 =&&L.
\end{eqnarray}
For the second term in (\ref{pf_lemma1_tt2}), note that 
$\{X_{i,j}\}_{(i,j)\in \mathcal{K}\setminus\{(u,v)\}}$ 
is a deterministic function of 
$\{W_{i,j},Z_{i,j}\}_{(i,j)\in \mathcal{K}\setminus\{(u,v)\}}$ 
according to (\ref{messageX}). 
In particular, for any $k\neq u$, the set 
$\{X_{k,v'}\}_{v'\in [V_k]}$ is fully determined by 
$\{W_{k,j},Z_{k,j}\}_{j\in [V_k]}$, which is already included in the conditioning variables. 
Moreover, by (\ref{messageY}), the server messages 
$\{Y_k\}_{k\in [U]\setminus\{u\}}$ are functions of the corresponding user messages, 
and hence are also determined.
Therefore, given 
$\{Y_k\}_{k\in [U]\setminus\{u\}}$, 
$\{X_{u,v'}\}_{v'\in [V_u]}$, 
and $\{W_{i,j},Z_{i,j}\}_{(i,j)\in \mathcal{K}\setminus\{(u,v)\}}$, 
the sum $\sum_{(i,j)\in \mathcal{K}} W_{i,j}$ can be reconstructed by the correctness condition (\ref{correctness}). 
This implies that the conditional entropy term in (\ref{pf_lemma1_tt2}) is zero.
\end{IEEEproof}

Next, we show that, under the security and correctness constraints, the keys used by users outside the sets 
$(\mathcal{S}_m \cap \mathcal{K}_{\{u'\}}) \cup \mathcal{K}_{\mathcal{U}^{(m,n)}} \cup \mathcal{T}_n$  
must have entropy at least $L$, conditioned on the information available to the colluding users.

\begin{lemma}\label{lemma2}
\tit{For any $u' \in [U]$, $m \in [M]$, and $n \in [N]$, consider the sets 
$\mathcal{K}_{\{u'\}}$, $\mathcal{K}_{\mathcal{U}^{(m,n)}}$, $\mathcal{S}_m$, and $\mathcal{T}_n$ 
such that $(\mathcal{S}_m \cap \mathcal{K}_{\{u'\}}) \cap \mathcal{T}_n = \emptyset,
|(\mathcal{S}_m \cap \mathcal{K}_{\{u'\}}) \cup 
\mathcal{K}_{\mathcal{U}^{(m,n)}} \cup 
\mathcal{T}_n| \le K-1.$ 
Then}
\begin{align}
H\Big(
\{Z_{u,v}\}_{(u,v) \in \mathcal{K} \setminus
((\mathcal{S}_m \cap \mathcal{K}_{\{u'\}})
\cup \mathcal{K}_{\mathcal{U}^{(m,n)}}
\cup \mathcal{T}_n)}
\,\Big|\,
\{Z_{u,v}\}_{(u,v)\in\mathcal{T}_n}
\Big)
\ge L.
\label{lemma2_eqX}
\end{align}
\end{lemma}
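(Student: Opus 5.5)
Write $\mathcal{B}\triangleq(\mathcal{S}_m\cap\mathcal{K}_{\{u'\}})\cup\mathcal{K}_{\mathcal{U}^{(m,n)}}\cup\mathcal{T}_n$ and $\mathcal{C}\triangleq\mathcal{K}\setminus\mathcal{B}$; the hypothesis $|\mathcal{B}|\le K-1$ guarantees $\mathcal{C}\neq\emptyset$. The plan is the usual converse route: lower bound the key entropy by the information that some messages carry about inputs, then use security to show that this information cannot be "explained" by the inputs.

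\textbf{Step 1 (reduce to a mutual-information bound).} Since keys are independent of inputs, conditioning on all inputs does not change the conditional key entropy. We then bound it below by a mutual information with messages produced by users in $\mathcal{C}$:
\begin{align*}
H(Z_{\mathcal{C}}\mid Z_{\mathcal{T}_n})
&= H(Z_{\mathcal{C}}\mid Z_{\mathcal{T}_n},W_{\mathcal{K}})\\
&\ge I(Z_{\mathcal{C}};\Phi\mid Z_{\mathcal{T}_n},Z_{\mathcal{B}\setminus\mathcal{T}_n},W_{\mathcal{K}}) .
\end{align*}
Here $\Phi$ is chosen to be a deterministic function of $(W_{\mathcal{C}},Z_{\mathcal{C}})$ together with the conditioning variables. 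Concretely, $\Phi$ consists of the messages $X_{u,v}$ for $(u,v)\in\mathcal{C}$ with $u=u'$, and the broadcasts $Y_u$ of servers $u\neq u'$ that contain a user of $\mathcal{C}$. The conditional entropy of $\Phi$ given all keys and inputs is then zero, exactly as in~\eqref{pfex1t1}. It remains to show $H(\Phi\mid Z_{\mathcal{B}},W_{\mathcal{K}})\ge L$.

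\textbf{Step 2 (security-based argument).} Pick a user $(i,j)\in\mathcal{C}$. If $\Phi$ were deterministic given $(Z_{\mathcal{B}},W_{\mathcal{K}})$, then $\Phi$, and hence server $u'$'s entire view, would be a function of $(Z_{\mathcal{B}},W_{\mathcal{K}})$. We would like to use this to contradict the correctness condition~\eqref{correctness} or the security condition~\eqref{security} for the pair $(m,n)$ at server $k=u'$. The cleaner route is a chain similar to~\eqref{pfex1t2}:
\begin{align*}
H(\Phi\mid Z_{\mathcal{B}},W_{\mathcal{K}})
&\ge H(\Phi\mid Z_{\mathcal{B}},W_{\mathcal{B}})
  - I\Big(\Phi;W_{\mathcal{C}}\,\Big|\,Z_{\mathcal{B}},W_{\mathcal{B}}\Big).
\end{align*}
Then:
\begin{itemize}
\item Lemma~\ref{lemma1}, applied to the one-dimensional freedom of $W_{i,j}$ with all other inputs and keys revealed, gives the first term a contribution of at least $L$.
\item For the second term, we add $\sum_{(i,j)\in\mathcal{K}} W_{i,j}$ to the conditioning. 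We then need the information $\Phi$ carries about $W_{\mathcal{C}}$ beyond the sum to vanish.
\end{itemize}
This last requirement does \emph{not} follow directly from~\eqref{security}, since $\mathcal{C}$ is not a protected set. The usable statement is the reverse: by~\eqref{security} with $k=u'$, the view carries no information about $W_{\mathcal{S}_m}$ given the sum and $(W,Z)_{\mathcal{T}_n}$.

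\textbf{Step 3 (the main obstacle).} Therefore the argument should instead be by contradiction. Suppose $H(Z_{\mathcal{C}}\mid Z_{\mathcal{T}_n})<L$, and in the extreme case the keys of $\mathcal{C}$ are determined by $Z_{\mathcal{T}_n}$. Consider servers in $\mathcal{U}^{(m,n)}$ and server $u'$. Their users outside $\mathcal{S}_m$ lie in $\mathcal{T}_n$, so the colluders know these inputs and keys. The remaining unknowns in server $u'$'s view are then only the inputs $W_{\mathcal{S}_m\cap\mathcal{B}}$ and the keys of $\mathcal{B}\setminus\mathcal{T}_n$.

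For servers outside $\mathcal{U}^{(m,n)}\cup\{u'\}$, the definition of $\mathcal{B}$ shows that all their non-colluding users lie in $\mathcal{C}$. If their keys are known, their broadcasts $Y_u$ reveal $\sum_{v}W_{u,v}$ over unknown users. Combined with the global sum, this yields $\sum_{(i,j)\in\mathcal{S}_m\cap\mathcal{B}}W_{i,j}$ plus known terms, and possibly more. Turning this into a quantitative statement is where the work lies. I would quantify it with the entropy chain used in the example: show that
\[
I\Big(\text{view};\,W_{\mathcal{S}_m}\,\Big|\,\textstyle\sum W,\ (W,Z)_{\mathcal{T}_n}\Big)\ \ge\ L-H(Z_{\mathcal{C}}\mid Z_{\mathcal{T}_n}),
\]
using Lemma~\ref{lemma1} for the broadcasts $Y_u$ with $u\notin\mathcal{U}^{(m,n)}\cup\{u'\}$. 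Security forces the left side to be $0$, which gives the bound.

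The delicate point is when $\mathcal{S}_m\cap\mathcal{B}$ alone is already pinned by the sum. In that case I would invoke the condition $(\mathcal{S}_m\cap\mathcal{K}_{\{u'\}})\cap\mathcal{T}_n=\emptyset$ together with the nontriviality assumption $\bigcup_m\mathcal{S}_m\neq\emptyset$ to ensure the protected part is nonempty and not already determined.
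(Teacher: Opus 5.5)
\paragraph{There is a genuine gap, and it already occurs in Step~1.} Conditioning on $Z_{\mathcal{B}\setminus\mathcal{T}_n}$ reduces the lemma to $H(\Phi\mid Z_{\mathcal{B}},W_{\mathcal{K}})\ge L$. Since $\Phi$ is a function of $Z_{\mathcal{C}}$ once $(Z_{\mathcal{B}},W_{\mathcal{K}})$ is fixed, this would imply $H(Z_{\mathcal{C}}\mid Z_{\mathcal{B}})\ge L$, which is false in general. Take the scheme of Example~\ref{ex1} with $(u',m,n)=(1,4,16)$. There $\mathcal{C}=\{(1,3)\}$, and $Z_{1,3}=N_2=-(Z_{1,1}+Z_{2,1})$ is a function of $Z_{\mathcal{B}}$. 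More generally, your target is $0$ whenever $|\mathcal{C}|=1$ and the keys sum to zero.

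The keys of $\mathcal{C}$ are not forced to carry randomness that is fresh relative to $Z_{\mathcal{B}}$. They are forced to be \emph{correlated} with the masks inside $\mathcal{B}$, so that the sum can be unmasked, and conditioning on $Z_{\mathcal{B}\setminus\mathcal{T}_n}$ discards exactly this. This is also why Step~2 stalls: in the same example $\Phi=X_{1,3}$ and $I(\Phi;W_{\mathcal{C}}\mid Z_{\mathcal{B}},W_{\mathcal{B}})=L$, which cancels the $L$ supplied by Lemma~\ref{lemma1}. Step~3 then only asserts the inequality you need, with no derivation. The tool you suggest for it (Lemma~\ref{lemma1} on the outside broadcasts) does not produce it.

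\paragraph{How the paper's proof keeps this correlation.} Let $\mathcal{B}'\triangleq(\mathcal{S}_m\cap\mathcal{K}_{\{u'\}})\cup\mathcal{K}_{\mathcal{U}^{(m,n)}}$. The proof starts from $H(Z_{\mathcal{C}}\mid Z_{\mathcal{T}_n})\ge I(Z_{\mathcal{C}};Z_{\mathcal{B}'}\mid Z_{\mathcal{T}_n})$ and attaches the inputs to both arguments via \eqref{ind}. Data processing and the chain rule then give $I(X_{\mathcal{C}},W_{\mathcal{C}};W_{\mathcal{B}'}\mid (W,Z)_{\mathcal{T}_n},X_{\mathcal{S}_m\cap\mathcal{K}_{\{u'\}}},Y_{\mathcal{U}^{(m,n)}})$, which is a difference of two conditional entropies of $W_{\mathcal{B}'}$.
\begin{itemize}
\item With $(X_{\mathcal{C}},W_{\mathcal{C}})$ in the conditioning, server $u'$'s entire view is determined. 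Each server $u\neq u'$ outside $\mathcal{U}^{(m,n)}$ has all its users in $\mathcal{C}\cup\mathcal{T}_n$, and if $u'\in\mathcal{U}^{(m,n)}$ its own users lie in $\mathcal{S}_m\cup\mathcal{T}_n$. So \eqref{correctness} pins down $\sum_{(u,v)\in\mathcal{B}'\setminus\mathcal{T}_n}W_{u,v}$, and this entropy is at most $(|\mathcal{B}'\setminus\mathcal{T}_n|-1)L$.
\item Without them, add the global sum to the conditioning and invoke \eqref{security} at $k=u'$. This is valid because $\mathcal{B}'\setminus\mathcal{T}_n\subseteq\mathcal{S}_m$ and the messages involved are functions of server $u'$'s view. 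The entropy left is $|\mathcal{B}'\setminus\mathcal{T}_n|L$, because $\mathcal{C}\neq\emptyset$ keeps $W_{\mathcal{B}'\setminus\mathcal{T}_n}$ uniform given the global sum and $(W,Z)_{\mathcal{T}_n}$.
\end{itemize}
The difference is $L$.

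\paragraph{The degenerate case.} Your last paragraph cannot handle it through $\bigcup_m\mathcal{S}_m\neq\emptyset$. What is needed is $\mathcal{B}'\setminus\mathcal{T}_n\neq\emptyset$ for the given triple. The paper's final count also uses this tacitly, and without it the bound can fail. For instance, take $U=3$, $V_u=1$, and $\bm{\mathcal{S}}=\bm{\mathcal{T}}=\{\emptyset,\{(1,1)\}\}$. The keys $Z_{1,1}=N$, $Z_{2,1}=-N$, $Z_{3,1}=0$ give a valid scheme. Yet $\mathcal{S}_m=\emptyset$, $\mathcal{T}_n=\{(1,1)\}$ gives $H(Z_{2,1},Z_{3,1}\mid Z_{1,1})=0$.
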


\begin{IEEEproof}
We have
\begin{align}
&H\Big(\{Z_{u,v}\}_{(u,v) \in \mathcal{K} \setminus((\mathcal{S}_m \cap \mathcal{K}_{\{u'\}}) \cup \mathcal{K}_{\mathcal{U}^{(m,n)}} \cup \mathcal{T}_n)}\;\big|\;\{Z_{u,v}\}_{(u,v)\in\mathcal{T}_n}\Big) \notag\\
\geq&I\left(\{Z_{u,v}\}_{(u,v) \in \mathcal{K} \setminus((\mathcal{S}_m \cap \mathcal{K}_{\{u'\}}) \cup \mathcal{K}_{\mathcal{U}^{(m,n)}} \cup \mathcal{T}_n)};\{Z_{u,v}\}_{(u,v) \in(\mathcal{S}_m\cap\mathcal{K}_{\{u'\}})\cup \mathcal{K}_{\mathcal{U}^{(m,n)}}}\big|\{Z_{u,v}\}_{(u,v)\in\mathcal{T}_n}\right)\\
\overset{(\ref{ind})}{=}&I\left(\{Z_{u,v},W_{u,v}\}_{(u,v) \in \mathcal{K} \setminus((\mathcal{S}_m \cap \mathcal{K}_{\{u'\}}) \cup \mathcal{K}_{\mathcal{U}^{(m,n)}} \cup \mathcal{T}_n)};\{Z_{u,v},W_{u,v}\}_{(u,v) \in(\mathcal{S}_m\cap\mathcal{K}_{\{u'\}})\cup \mathcal{K}_{\mathcal{U}^{(m,n)}}}\big|\{Z_{u,v},W_{u,v}\}_{(u,v)\in\mathcal{T}_n}\right) \label{eq:tx12}\\
\overset{\substack{(\ref{messageX})\\(\ref{messageY})}}{\ge}&I\Big(\{X_{u,v},W_{u,v}\}_{(u,v) \in \mathcal{K} \setminus((\mathcal{S}_m \cap \mathcal{K}_{\{u'\}}) \cup \mathcal{K}_{\mathcal{U}^{(m,n)}} \cup \mathcal{T}_n)};\{W_{u,v}\}_{(u,v)\in(\mathcal{S}_m\cap\mathcal{K}_{\{u'\}})\cup \mathcal{K}_{\mathcal{U}^{(m,n)}}},\{X_{u,v}\}_{(u,v)\in(\mathcal{S}_m\cap\mathcal{K}_{\{u'\}})},\notag\\
&\{Y_u\}_{u\in \mathcal{U}^{(m,n)}}\big|\{Z_{u,v},W_{u,v}\}_{(u,v)\in\mathcal{T}_n}\Big)\label{lemma2_pf_tt1}\\
\geq&I\left(\{X_{u,v},W_{u,v}\}_{(u,v)\in \mathcal{K} \setminus((\mathcal{S}_m \cap \mathcal{K}_{\{u'\}}) \cup \mathcal{K}_{\mathcal{U}^{(m,n)}} \cup \mathcal{T}_n)};\{W_{u,v}\}_{(u,v)\in(\mathcal{S}_m\cap\mathcal{K}_{\{u'\}})\cup \mathcal{K}_{\mathcal{U}^{(m,n)}}}\big|\right.\notag\\
&\left.\{Z_{u,v},W_{u,v}\}_{(u,v)\in\mathcal{T}_n},\{X_{u,v}\}_{(u,v)\in(\mathcal{S}_m\cap\mathcal{K}_{\{u'\}})},\{Y_u\}_{u\in \mathcal{U}^{(m,n)}}\right)\\
=&H\left(\{W_{u,v}\}_{(u,v)\in(\mathcal{S}_m \cap \mathcal{K}_{\{u'\}}) \cup \mathcal{K}_{\mathcal{U}^{(m,n)}}}\big|\{Z_{u,v},W_{u,v}\}_{(u,v)\in\mathcal{T}_n},\{X_{u,v}\}_{(u,v)\in(\mathcal{S}_m\cap\mathcal{K}_{\{u'\}})},\{Y_u\}_{u\in \mathcal{U}^{(m,n)}}\right)\notag \\
&-~H\left(\{W_{u,v}\}_{(u,v)\in(\mathcal{S}_m \cap \mathcal{K}_{\{u'\}}) \cup \mathcal{K}_{\mathcal{U}^{(m,n)}}}\big|\{Z_{u,v},W_{u,v}\}_{(u,v)\in\mathcal{T}_n},\{X_{u,v}\}_{(u,v)\in(\mathcal{S}_m\cap\mathcal{K}_{\{u'\}})},\{Y_u\}_{u\in \mathcal{U}^{(m,n)}},\right.\notag\\
&\left.\{X_{u,v},W_{u,v}\}_{(u,v)\in \mathcal{K} \setminus((\mathcal{S}_m \cap \mathcal{K}_{\{u'\}}) \cup \mathcal{K}_{\mathcal{U}^{(m,n)}} \cup \mathcal{T}_n)}\right)\\
\geq&H\left(\{W_{u,v}\}_{(u,v)\in(\mathcal{S}_m \cap \mathcal{K}_{\{u'\}}) \cup \mathcal{K}_{\mathcal{U}^{(m,n)}}}\Big|\sum_{(u,v)\in \mathcal{K}}W_{u,v},\{Z_{u,v},W_{u,v}\}_{(u,v)\in\mathcal{T}_n},\{X_{u,v}\}_{(u,v)\in(\mathcal{S}_m\cap\mathcal{K}_{\{u'\}})},\{Y_u\}_{u\in \mathcal{U}^{(m,n)}}\right)\notag \\
&-~H\left(\{W_{u,v}\}_{(u,v)\in((\mathcal{S}_m \cap \mathcal{K}_{\{u'\}}) \cup \mathcal{K}_{\mathcal{U}^{(m,n)}})\setminus\mathcal{T}_n}\Big|\sum_{(u,v)\in ((\mathcal{S}_m \cap \mathcal{K}_{\{u'\}}) \cup \mathcal{K}_{\mathcal{U}^{(m,n)}})\setminus\mathcal{T}_n}W_{u,v}\right)\\
=&~~~~H\left(\{W_{u,v}\}_{(u,v)\in(\mathcal{S}_m \cap \mathcal{K}_{\{u'\}}) \cup \mathcal{K}_{\mathcal{U}^{(m,n)}}}\Big|\sum_{(u,v)\in \mathcal{K}}W_{u,v},\{Z_{u,v},W_{u,v}\}_{(u,v)\in\mathcal{T}_n}\right)-\notag \\
&\underbrace{I\left(\{W_{u,v}\}_{(u,v)\in(\mathcal{S}_m \cap \mathcal{K}_{\{u'\}}) \cup \mathcal{K}_{\mathcal{U}^{(m,n)}}};\{X_{u,v}\}_{(u,v)\in(\mathcal{S}_m\cap\mathcal{K}_{\{u'\}})},\{Y_u\}_{u\in \mathcal{U}^{(m,n)}}\Big|\sum_{(u,v)\in \mathcal{K}}W_{u,v},\{Z_{u,v},W_{u,v}\}_{(u,v)\in\mathcal{T}_n}\right)}_{\overset{(\ref{security})}{=}0}\notag \\
&-~H\left(\{W_{u,v}\}_{(u,v)\in((\mathcal{S}_m \cap \mathcal{K}_{\{u'\}}) \cup \mathcal{K}_{\mathcal{U}^{(m,n)}})\setminus\mathcal{T}_n}\Big|\sum_{(u,v)\in ((\mathcal{S}_m \cap \mathcal{K}_{\{u'\}}) \cup \mathcal{K}_{\mathcal{U}^{(m,n)}})\setminus\mathcal{T}_n}W_{u,v}\right)\label{pf_lemma2_t1}\\
=&H\left(\{W_{u,v}\}_{(u,v)\in((\mathcal{S}_m \cap \mathcal{K}_{\{u'\}}) \cup \mathcal{K}_{\mathcal{U}^{(m,n)}})\setminus\mathcal{T}_n}\Big|\sum_{(u,v)\in \mathcal{K}}W_{u,v},\{Z_{u,v},W_{u,v}\}_{(u,v)\in\mathcal{T}_n}\right)\notag \\
&-~H\left(\{W_{u,v}\}_{(u,v)\in((\mathcal{S}_m \cap \mathcal{K}_{\{u'\}}) \cup \mathcal{K}_{\mathcal{U}^{(m,n)}})\setminus\mathcal{T}_n}\Big|\sum_{(u,v)\in ((\mathcal{S}_m \cap \mathcal{K}_{\{u'\}}) \cup \mathcal{K}_{\mathcal{U}^{(m,n)}})\setminus\mathcal{T}_n}W_{u,v}\right)\label{pf_lemma2_t3}\\
\overset{(\ref{h2})}{=}&|((\mathcal{S}_m \cap \mathcal{K}_{\{u'\}}) \cup \mathcal{K}_{\mathcal{U}^{(m,n)}})\setminus\mathcal{T}_n|L-(|((\mathcal{S}_m \cap \mathcal{K}_{\{u'\}}) \cup \mathcal{K}_{\mathcal{U}^{(m,n)}})\setminus\mathcal{T}_n|-1)L\\
=&L.
\end{align}
The inequality in (\ref{lemma2_pf_tt1}) follows from the functional dependence of each message on its local input and key, together with the data processing inequality: for $(u,v)\in(\mathcal{S}_m\cap\mathcal{K}_{\{u'\}})$, each $X_{u,v}$ is a deterministic function of $(W_{u,v},Z_{u,v})$ by (\ref{messageX}), and for $(u,v)\in\mathcal{K}_{\mathcal{U}^{(m,n)}}$, the user messages are determined by their inputs and keys. Each server message $Y_u$, $u\in\mathcal{U}^{(m,n)}$, is generated from the associated user messages by (\ref{messageY}). Replacing $(Z_{u,v},W_{u,v})$ with $(X_{u,v},W_{u,v})$ and including $\{Y_u\}_{u\in\mathcal{U}^{(m,n)}}$ cannot increase the mutual information.  
The mutual information term in (\ref{pf_lemma2_t1}) vanishes due to the security requirement in (\ref{security}): under the admissible side information $\mathcal{T}_n$, the messages $\{X_{u,v}\}_{(u,v)\in(\mathcal{S}_m\cap\mathcal{K}_{\{u'\}})}$ and $\{Y_u\}_{u\in \mathcal{U}^{(m,n)}}$ reveal no additional information about the protected inputs beyond the aggregate sum.  
Therefore, the conditional entropy in (\ref{lemma2_eqX}) is lower bounded by $L$, as claimed.
\end{IEEEproof}

By specializing Lemma~\ref{lemma2} to the case where 
$\left|(\mathcal{S}_m \cap \mathcal{K}_{\{u'\}}) \cup \mathcal{K}_{\mathcal{U}^{(m,n)}} \cup \mathcal{T}_n\right| = K-1$, 
we obtain a direct consequence for the indices belonging to the implicit security input set $\mathcal{S}_{I}$ (see Definition~\ref{def:imp}).

\begin{corollary} \label{corol1}
\tit{Fix $u' \in [U]$, $m \in [M]$, and $n \in [N]$. 
Consider the sets $\mathcal{K}_{\{u'\}}$, $\mathcal{K}_{\mathcal{U}^{(m,n)}}$, $\mathcal{S}_m$, and $\mathcal{T}_n$ such that
$\left|(\mathcal{S}_m \cap \mathcal{K}_{\{u'\}}) \cup \mathcal{K}_{\mathcal{U}^{(m,n)}} \cup \mathcal{T}_n\right| = K-1.$
Let the unique remaining index be
$\{(u,v)\} = \mathcal{K} \setminus \big((\mathcal{S}_m \cap \mathcal{K}_{\{u'\}}) \cup \mathcal{K}_{\mathcal{U}^{(m,n)}} \cup \mathcal{T}_n\big)$.
Then the entropy of the corresponding key satisfies}
\begin{eqnarray}
    H\left(Z_{u,v}\,\big|\,\{Z_{i,j}\}_{(i,j)\in\mathcal{T}_n}\right) 
    \overset{(\ref{lemma2_eqX})}{\geq} L. \label{corollary1_eqX}
\end{eqnarray}
\end{corollary}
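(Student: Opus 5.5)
This is a direct specialization of Lemma~\ref{lemma2}, so the plan is to check its hypotheses and then observe that the set of keys on the left of \eqref{lemma2_eqX} is exactly $\{Z_{u,v}\}$.

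\emph{Step 1: cardinality hypothesis.} The assumption $\left|(\mathcal{S}_m \cap \mathcal{K}_{\{u'\}}) \cup \mathcal{K}_{\mathcal{U}^{(m,n)}} \cup \mathcal{T}_n\right| = K-1$ immediately gives the bound $\le K-1$ required in Lemma~\ref{lemma2}.

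\emph{Step 2: disjointness hypothesis.} Lemma~\ref{lemma2} also asks that $(\mathcal{S}_m \cap \mathcal{K}_{\{u'\}}) \cap \mathcal{T}_n = \emptyset$. I would not verify this directly, since the security constraint \eqref{security} is only meaningful for overlapping pairs. Instead, I would pass to a modified pair. Let $\mathcal{S}_m' = \mathcal{S}_m \setminus \mathcal{T}_n$, which lies in $\bm{\mathcal{S}}$ by monotonicity. Then check two facts:
\begin{itemize}
\item the union $(\mathcal{S}_m' \cap \mathcal{K}_{\{u'\}}) \cup \mathcal{K}_{\mathcal{U}^{(m',n)}} \cup \mathcal{T}_n$ is unchanged, because the removed elements already belong to $\mathcal{T}_n$;
\item $\mathcal{U}^{(m',n)} = \mathcal{U}^{(m,n)}$ whenever $\mathcal{S}_m' \cap \mathcal{K}_{\{u\}} \neq \emptyset$.
\end{itemize}
If $\mathcal{S}_m' \cap \mathcal{K}_{\{u\}} = \emptyset$, the server $u$ moves from $\mathcal{U}$ to $\mathcal{F}$, but its users are then all in $\mathcal{T}_n$, so they are still covered by the union.

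\emph{Step 3: apply the lemma.} With the hypotheses in place, $\mathcal{K} \setminus \big((\mathcal{S}_m \cap \mathcal{K}_{\{u'\}}) \cup \mathcal{K}_{\mathcal{U}^{(m,n)}} \cup \mathcal{T}_n\big) = \{(u,v)\}$. The left-hand side of \eqref{lemma2_eqX} therefore reduces to $H\big(Z_{u,v} \,\big|\, \{Z_{i,j}\}_{(i,j)\in\mathcal{T}_n}\big)$, and \eqref{lemma2_eqX} gives the bound $\ge L$ directly.

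\emph{Main obstacle.} The only nontrivial point is the disjointness hypothesis in Step~2. The fallback reduction needs care, because the case analysis in the definition of $\mathcal{U}^{(m,n)}$ changes when users are removed from $\mathcal{S}_m$; showing that the covered union stays the same is the key check. Everything else is a direct substitution into Lemma~\ref{lemma2}.
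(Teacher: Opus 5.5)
Your proposal is correct and matches the paper, which obtains the corollary by directly specializing Lemma~\ref{lemma2} to the case where the covered union has size $K-1$, so that its complement is the single index $(u,v)$. Your Step~2 passes to $\mathcal{S}_m\setminus\mathcal{T}_n$, which is allowed by monotonicity and leaves both the covered union and the remaining index $(u,v)$ unchanged; this supplies the disjointness hypothesis of Lemma~\ref{lemma2}, which the paper leaves implicit here and invokes as a ``without loss of generality'' remark only in its converse for $R_{Z_\Sigma}\ge e^*$.
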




Lemma~\ref{lemma1} provides a lower bound on the entropy of individual messages. 
We now extend this result to sets of messages, taking into account the inputs and keys known to colluding users.

\begin{lemma}\label{lemma3}
\tit{For any $u'\in[U]$, $\mathcal{U}\subseteq[U]$, and $\mathcal{V}\subseteq[V_{u'}]$, 
such that
$(\mathcal{S}_m \cap \mathcal{K}_{\{u'\}}) \cap \mathcal{T}_n = \emptyset$ and
$|(\mathcal{S}_m \cap \mathcal{K}_{\{u'\}}) 
\cup \mathcal{K}_{\mathcal{U}^{(m,n)}} 
\cup \mathcal{T}_n| \le K-1$, we have}
\begin{align}
&H\left(\{X_{u',v}\}_{v\in\mathcal V}\,\middle|\,
\{W_{i,j},Z_{i,j}\}_{(i,j)\in \mathcal{T}_n}\right)\ge |\mathcal V|L,
\label{lemma3xvwz}\\
&H\left(\{Y_u\}_{u\in \mathcal{U}} \,\middle|\,
\{X_{u',v}\}_{v\in [V_{u'}]},
\{W_{i,j},Z_{i,j}\}_{(i,j)\in \mathcal{T}_n}\right)
\ge |\mathcal{U}\setminus \{u'\}|L.
\label{lemma3yuwz}
\end{align}
\end{lemma}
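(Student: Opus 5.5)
The plan is to prove both inequalities with the same three-step pattern: expand the left-hand side by the chain rule, lower bound each conditional term by enlarging the conditioning to all inputs and keys except those of a single user, and then invoke Lemma~\ref{lemma1}. Enlarging the conditioning is legitimate because the variables we add either determine or already contain the variables we condition on, by (\ref{messageX}) and (\ref{messageY}). Conditioning on more variables can only reduce entropy.

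For (\ref{lemma3xvwz}), write $\mathcal{V}=\{v_1,\dots,v_{|\mathcal V|}\}$ and apply the chain rule:
\begin{align}
H\big(\{X_{u',v}\}_{v\in\mathcal V}\,\big|\,\{W_{i,j},Z_{i,j}\}_{(i,j)\in\mathcal T_n}\big)
=\sum_{t=1}^{|\mathcal V|} H\big(X_{u',v_t}\,\big|\,\{X_{u',v_s}\}_{s<t},\{W_{i,j},Z_{i,j}\}_{(i,j)\in\mathcal T_n}\big).
\end{align}
Each $X_{u',v_s}$ with $s<t$ is a function of $(W_{u',v_s},Z_{u',v_s})$ by (\ref{messageX}). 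Hence the $t$-th term is at least $H(X_{u',v_t}\mid\{W_{i,j},Z_{i,j}\}_{(i,j)\in\mathcal K\setminus\{(u',v_t)\}})$. Provided $(u',v_t)\notin\mathcal T_n$, this quantity is at least $L$ by (\ref{lemma1X>=L}). Summing over $t$ gives $|\mathcal V|L$.

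For (\ref{lemma3yuwz}), first discard $Y_{u'}$ if $u'\in\mathcal U$, since removing a variable cannot increase entropy. List $\mathcal U\setminus\{u'\}=\{u_1,\dots,u_r\}$ and apply the chain rule again. For each $u_t$, pick a user $(u_t,v_t)\notin\mathcal T_n$. We then enlarge the conditioning of the $t$-th term to $\{W_{i,j},Z_{i,j}\}_{(i,j)\in\mathcal K\setminus\{(u_t,v_t)\}}$. This is justified because:
\begin{itemize}
\item the earlier messages $Y_{u_s}$, $s<t$, are functions of the inputs and keys of servers $u_s\neq u_t$, by (\ref{messageX}) and (\ref{messageY});
\item $\{X_{u',v}\}_{v\in[V_{u'}]}$ is a function of the inputs and keys of server $u'\neq u_t$;
\item the variables of $\mathcal T_n$ are already contained in the enlarged set.
\end{itemize}
Applying (\ref{lemma1Y>=L}) with the user $(u_t,v_t)$ bounds each term below by $L$, which gives $|\mathcal U\setminus\{u'\}|L$ in total.

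The main obstacle is the requirement, used in both parts, that the relevant users lie outside $\mathcal T_n$. If $(u',v)\in\mathcal T_n$, then $X_{u',v}$ is determined by the conditioning and its entropy is $0$. Likewise, if all users of some $u\in\mathcal U\setminus\{u'\}$ collude, then $Y_u$ is determined. So the lemma must be read, as in its later uses, with the following restrictions:
\begin{itemize}
\item $\mathcal V$ indexes users of $u'$ outside $\mathcal T_n$; for example $\mathcal S_m\cap\mathcal K_{\{u'\}}$, using the hypothesis $(\mathcal S_m\cap\mathcal K_{\{u'\}})\cap\mathcal T_n=\emptyset$.
\item $\mathcal U$ consists of servers that each have a non-colluding user; for example $\mathcal U^{(m,n)}$, each of whose servers contains an element of $\mathcal S_m$, or in general servers outside $\mathcal F^{(m,n)}$ that are not fully colluding.
\end{itemize}
I would begin the formal proof by stating explicitly that these are the $\mathcal V$ and $\mathcal U$ being quantified over. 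I would also check that, in each application, $\mathcal S_m$ meets a server of $\mathcal U^{(m,n)}$ outside $\mathcal T_n$. The hypothesis $|(\mathcal S_m\cap\mathcal K_{\{u'\}})\cup\mathcal K_{\mathcal U^{(m,n)}}\cup\mathcal T_n|\le K-1$ is not needed for these bounds themselves; it only guarantees the setting is nontrivial, consistent with Lemma~\ref{lemma2}, so that correctness can actually be used inside Lemma~\ref{lemma1}.
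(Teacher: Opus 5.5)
Your proposal is correct and follows essentially the same route as the paper: apply the chain rule, enlarge each conditioning set to all inputs and keys except those of a single user, and then invoke Lemma~\ref{lemma1}. The paper's proof silently relies on $(u',v)\notin\mathcal{T}_n$ for every $v\in\mathcal{V}$, and on the existence of a non-colluding user $(u,v')$ for each $u\in\mathcal{U}\setminus\{u'\}$; your explicit restriction on $\mathcal{V}$ and $\mathcal{U}$ is exactly the condition under which the stated bounds hold.
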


\begin{IEEEproof}
For (\ref{lemma3xvwz}), we have
    \begin{align}
        &H\left(\{X_{u',v}\}_{v\in\Vc}\,\middle|\,\{W_{i,j}\}_{(i,j)\in \mathcal{T}_n},\{Z_{i,j}\}_{(i,j)\in \mathcal{T}_n}\right)\notag\\
        =& \sum_{v\in \Vc  }H\left( X_{u',v}| \{X_{u',l}\}_{l\in \Vc\backslash \{v\},l<v }    ,\{W_{i,j},Z_{i,j}\}_{(i,j)\in \mathcal{T}_n}\right)\\
 \ge& \sum_{v\in \Vc  }H\left( X_{u',v}| \{X_{u',l}\}_{l\in \Vc\backslash \{v\} }    ,\{W_{i,j},Z_{i,j}\}_{(i,j)\in \mathcal{T}_n}\right)\\
\ge &\sum_{v\in \Vc  }H\left( X_{u',v}|\{W_{u',l}, Z_{u',l}\}_{l\in \Vc\backslash \{v\} }, \{X_{u',l}\}_{l\in \Vc\backslash \{v\} }    ,\{W_{i,j},Z_{i,j}\}_{(i,j)\in \mathcal{T}_n}\right)\\
 \overset{(\ref{messageX})}{=}& \sum_{v\in \Vc  }H\left( X_{u',v}|\{W_{u',l}, Z_{u',l}\}_{l\in \Vc\backslash \{v\} }   ,\{W_{i,j},Z_{i,j}\}_{(i,j)\in \mathcal{T}_n}\right)\label{lemma2ttt0}\\
\geq& \sum_{v\in \Vc  }H\left( X_{u',v}|\{W_{u',l}, Z_{u',l}\}_{(u',l)\in \mathcal{K}\backslash \{(u',v)\} } \right)\label{lemma2ttt1}\\
 \overset{(\ref{lemma1X>=L})}{\ge}&|\Vc|L,
    \end{align}
where (\ref{lemma2ttt1}) follows from the fact that conditioning reduces entropy. For any $v\in\Vc$, we have $\{W_{u',l}, Z_{u',l}\}_{l\in \Vc\setminus \{v\}}$ $ \subseteq \{W_{u',l}, Z_{u',l}\}_{l\in [V]\setminus \{v\}} \subseteq \{W_{i,j}, Z_{i,j}\}_{(i,j)\in \mathcal{K}\setminus \{(u',v)\}}$, and since $(u',v)\notin \mathcal{T}_n$, it also holds that $\{W_{i,j},Z_{i,j}\}_{(i,j)\in \mathcal{T}_n}$ $ \subseteq \{W_{i,j}, Z_{i,j}\}_{(i,j)\in \mathcal{K}\setminus \{(u',v)\}}$. Therefore, the conditioning set $\{W_{u',l}, Z_{u',l}\}_{l\in \Vc\backslash \{v\} }   \cup\{W_{i,j},Z_{i,j}\}_{(i,j)\in \mathcal{T}_n}$ in (\ref{lemma2ttt0}) is a subset of $\{W_{i,j}, Z_{i,j}\}_{(i,j)\in \mathcal{K}\setminus \{(u',v)\}}$, which yields (\ref{lemma2ttt1}).

For (\ref{lemma3yuwz}), we have
\begin{align}
    &H\left(\{Y_u\}_{u\in \mathcal{U}} \middle|\{X_{u',v}\}_{v\in [V_{u'}]},\{Z_{i,j}\}_{(i,j)\in \mathcal{T}_n}\right)\notag\\
    =&H\left(\{Y_u\}_{u\in \mathcal{U}\setminus \{u'\}} \middle|\{X_{u',v}\}_{v\in [V_{u'}]},Y_{u'},\{Z_{i,j}\}_{(i,j)\in \mathcal{T}_n}\right)\\
    = & \sum_{u\in \mathcal{U}\setminus \{u'\}}H\left( Y_{u}| \{Y_{k}\}_{k\in \mathcal{U}\setminus \{u\},k<u } ,\{X_{u',v}\}_{v\in [V_{u'}]},Y_{u'},\{W_{i,j},Z_{i,j}\}_{(i,j)\in \mathcal{T}_n}\right) \\
    \geq & \sum_{u\in \mathcal{U}\setminus \{u'\}}H\left( Y_{u}| \{Y_{k}\}_{k\in \mathcal{U}\setminus\{u\} } ,\{X_{u',v}\}_{v\in [V_{u'}]},Y_{u'},\{W_{i,j},Z_{i,j}\}_{(i,j)\in \mathcal{T}_n}\right) \\
    \geq & \sum_{u\in \mathcal{U}\setminus \{u'\}}H\big( Y_{u}|\{W_{k,v},Z_{k,v}\}_{(k,v)\in \mathcal{K}_{\mathcal{U}\setminus \{u\}}},  \{Y_{k}\}_{k\in \mathcal{U}\setminus\{u\} } ,\{W_{u',v},Z_{u',v}\}_{v\in [V_{u'}]},\{X_{u',v}\}_{v\in [V_{u'}]},Y_{u'},\notag\\
    &\{W_{i,j},Z_{i,j}\}_{(i,j)\in \mathcal{T}_n}\big) \\
    \overset{(\ref{messageX})(\ref{messageY})}{=}& \sum_{u\in \mathcal{U}\setminus \{u'\}}H\left( Y_{u}|\{W_{k,v},Z_{k,v}\}_{(k,v)\in \mathcal{K}_{\mathcal{U}\setminus \{u\}}},\{W_{u',v},Z_{u',v}\}_{v\in [V_{u'}]},\{W_{i,j},Z_{i,j}\}_{(i,j)\in \mathcal{T}_n}\right) \label{lemma2ttt22}\\
    \geq &\sum_{u\in \mathcal{U}\setminus \{u'\}}H\left( Y_{u}|\{W_{k,v},Z_{k,v}\}_{(k,v)\in \mathcal{K}\setminus \{(u,v')\}} \right)\label{lemma2ttt2}\\
    \overset{(\ref{lemma1Y>=L})}{\geq}&|\mathcal{U}\setminus\{u'\}|L,
\end{align}
where (\ref{lemma2ttt2}) follows from the fact that conditioning reduces entropy. For any $u\in\mathcal{U}\setminus\{u'\}$ and $k\in \mathcal{U}\setminus\{u\}$, the variables $\{W_{k,v},Z_{k,v}\}_{(k,v)\in \mathcal{K}_{\mathcal{U}\setminus \{u\}}} \subseteq \{W_{k,v}, Z_{k,v}\}_{(k,v)\in \mathcal{K}\setminus \{(u,v')\}}$. Moreover, since $u'\neq u$, it holds that $\{W_{u',v}, Z_{u',v}\}_{v\in [V_{u'}]} \subseteq \{W_{k,v}, Z_{k,v}\}_{(k,v)\in \mathcal{K}\setminus \{(u,v')\}}$. Finally, as $(u,v')\notin \mathcal{T}_n$, we also have $\{W_{i,j}, Z_{i,j}\}_{(i,j)\in \mathcal{T}_n} \subseteq \{W_{i,j}, Z_{i,j}\}_{(i,j)\in \mathcal{K}\setminus \{(u,v')\}}$. Therefore, the conditioning set in (\ref{lemma2ttt22}) is a subset of $\{W_{k,v}, Z_{k,v}\}_{(k,v)\in \mathcal{K}\setminus \{(u,v')\}}$, which yields (\ref{lemma2ttt2}).
\end{IEEEproof}

Next, we extend the entropy lower bounds to the elements in the set $\mathcal{T}_n \cap \overline{\mathcal{S}}$, both individually and collectively.

\begin{lemma}
\label{lemma4ZV>=VL}
\tit{For each element $(u',v') \in \mathcal{T}_n \cap \overline{\mathcal{S}}$, we have}
\begin{align}
H\left(Z_{u',v'}\big|\{Z_{u,v}\}_{(u,v)\in \mathcal{T}_n\setminus\{(u',v')\}}\right)&\geq L, \label{lemma4zv1} \\
H\left(\{Z_{u,v}\}_{(u,v) \in \mathcal{T}_n\cap\overline{\mathcal{S}}}\big|\{Z_{u,v}\}_{(u,v)\in\mathcal{T}_n\setminus \overline{\mathcal{S}}}\right)&\ge |\mathcal{T}_n\cap\overline{\mathcal{S}}| L. \label{lemma4zv} 
\end{align}
\end{lemma}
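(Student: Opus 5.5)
The plan is to prove (\ref{lemma4zv1}) first and then obtain (\ref{lemma4zv}) from it by the chain rule. For (\ref{lemma4zv1}), fix $(u',v')\in\mathcal{T}_n\cap\overline{\mathcal{S}}$. The idea is that the key of a user in $\overline{\mathcal{S}}$ must hide that user's input from some server, even when all other relevant keys and inputs are known. Two cases arise.

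\emph{Case 1: $(u',v')\in\mathcal{S}_{m'}$ for some $m'$.} By monotonicity of the set system $\bm{\mathcal{S}}$, we have $\{(u',v')\}\in\bm{\mathcal{S}}$. Similarly, $\mathcal{T}_n\setminus\{(u',v')\}\in\bm{\mathcal{T}}$, say it equals $\mathcal{T}_{n'}$. Take $k=u'$ and apply the security constraint (\ref{security}) to the pair $(\{(u',v')\},\mathcal{T}_{n'})$. Mimicking the argument of the example converse, I would write
\begin{align}
H(Z_{u',v'}\mid\{Z_{i,j}\}_{\mathcal{T}_{n'}})\ge I(Z_{u',v'};X_{u',v'}\mid \{W_{i,j},Z_{i,j}\}_{\mathcal{T}_{n'}},W_{u',v'}),
\end{align}
and then expand this as $H(X_{u',v'}\mid \{W_{i,j},Z_{i,j}\}_{\mathcal{T}_{n'}})-I(X_{u',v'};W_{u',v'}\mid\cdot)$. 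The first term is at least $L$ by (\ref{lemma3xvwz}) with $\mathcal{V}=\{v'\}$, which applies because $(u',v')\notin\mathcal{T}_{n'}$. For the leakage term, add the sum $\sum W$ to the conditioning. Its mutual information with $W_{u',v'}$ given $\mathcal{T}_{n'}$ is zero, because $|\mathcal{T}_{n'}|\le K-3$, so at least one other unknown input remains to make the sum independent of $W_{u',v'}$. The remaining conditional term vanishes by (\ref{security}).

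\emph{Case 2: $(u',v')\in\mathcal{S}_I$.} Here, by definition, there exist $(u,m,n'')$ such that $(u',v')$ is the unique user outside $(\mathcal{S}_m\cap\mathcal{K}_{\{u\}})\cup\mathcal{K}_{\mathcal{U}^{(m,n'')}}\cup\mathcal{T}_{n''}$. Corollary~\ref{corol1} then gives $H(Z_{u',v'}\mid\{Z\}_{\mathcal{T}_{n''}})\ge L$. However, the conditioning there is on $\mathcal{T}_{n''}$, not on $\mathcal{T}_n\setminus\{(u',v')\}$. To bridge this, I would rerun the proof of Lemma~\ref{lemma2} with the colluding set replaced by $\mathcal{T}_n\setminus\{(u',v')\}$, together with the same $\mathcal{S}_m$ and the same server. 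The difficulty is that $\mathcal{U}^{(m,\cdot)}$ changes when the colluding set changes. This is the main obstacle. My first attempt would be to bound the key entropy by the leakage argument using only $X_{u',v'}$ and $Y_{u'}$, exploiting that $\mathcal{K}\setminus\{(u',v')\}$ is known up to the protected part. If that fails, I would restrict to the subcase where the protected set and colluding set can be chosen compatibly via monotonicity, for instance by shrinking $\mathcal{T}_{n''}$.

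\emph{Deriving (\ref{lemma4zv}).} Order $\mathcal{T}_n\cap\overline{\mathcal{S}}=\{t_1,\dots,t_r\}$ and expand the left side of (\ref{lemma4zv}) with the chain rule. Each term has the form $H(Z_{t_i}\mid \{Z_{t_j}\}_{j<i},\{Z\}_{\mathcal{T}_n\setminus\overline{\mathcal{S}}})$. Enlarging the conditioning to $\{Z\}_{\mathcal{T}_n\setminus\{t_i\}}$ can only lower the entropy, so each term is at least $L$ by (\ref{lemma4zv1}). Summing the $r$ terms yields $|\mathcal{T}_n\cap\overline{\mathcal{S}}|L$.
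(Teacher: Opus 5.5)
Your Case~1 and your chain-rule derivation of \eqref{lemma4zv} match the paper's argument. The paper bounds the first term by Lemma~\ref{lemma1} plus conditioning rather than by \eqref{lemma3xvwz}, which makes no difference. The paper has no Case~2 at all. It asserts that every $(u',v')\in\overline{\mathcal{S}}$ lies in some $\mathcal{S}_{m'}$, which is false for elements of $\mathcal{S}_I$, since Definition~\ref{def:imp} removes $\bigcup_i\mathcal{S}_i$.

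You found this hole, but your Case~2 only lists attempts. No attempt can succeed, because \eqref{lemma4zv1} is false for implicit elements. Here is a counterexample.
\begin{itemize}
\item Setup: $U=3$ with users $s=(1,1)$; $a=(2,1)$; $b=(3,1)$, $c=(3,2)$. Let $\bm{\mathcal{S}}=\{\emptyset,\{s\}\}$, and let $\bm{\mathcal{T}}$ consist of all subsets of $\{b,c\}$ together with all subsets of $\{a,s\}$. This system is monotone and all sets have size $\le K-2$.
\item $a$ is implicit: for $\mathcal{S}_m=\{s\}$, $\mathcal{T}_n=\{b,c\}$ we get $\mathcal{U}^{(m,n)}=\{1\}$, and the union $\{s,b,c\}$ has size $K-1$. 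Hence $a\in\mathcal{S}_I$ and $\overline{\mathcal{S}}=\{s,a\}$.
\item Scheme: take $L=1$ and $N_1,N_2$ i.i.d.\ uniform. Set $Z_s=(N_1,N_2)$, $Z_a=N_1$, $Z_b=0$, $Z_c=-(N_1+N_2)$, zero-padded to a common length. Send $X_s=W_s+N_2$, $X_a=W_a+N_1$, $X_b=W_b$, $X_c=W_c-N_1-N_2$, and $Y_u=\sum_v X_{u,v}$.
\item Correctness holds because the keys used in the messages cancel.
\item Security, coalitions containing $s$: every constraint is trivial, because $\{s\}$ is the only nonempty protected set.
\item Security, coalitions $\emptyset,\{a\},\{b\},\{c\},\{b,c\}$: each learns at most one of $N_1$, $N_1+N_2$, so $N_2$ stays uniform and independent of everything it knows. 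A direct check shows that server~3's view $(W_b,\,W_c-N_1-N_2,\,W_s+N_2,\,W_a+N_1)$ is independent of $W_s$ given $\sum W$ and the coalition's side information. This view determines the other two servers' views.
\end{itemize}
Yet for $\mathcal{T}_n=\{a,s\}$ we have $H(Z_a\mid Z_s)=0<L$. No constraint protects $W_a$ against a coalition containing $s$, so $s$ may legally hold a copy of $a$'s key. Your suggested leakage argument through $X_{u',v'}$ and $Y_{u'}$ therefore has nothing to work with.

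The converse only uses \eqref{lemma4zv}, and that statement is still true. It must be obtained by ordering the chain rule differently, not from \eqref{lemma4zv1}.

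The missing observation is this. Suppose $a\in\mathcal{S}_I$ is witnessed by $(u,m,n'')$, i.e.\ $\mathcal{K}\setminus\{a\}=(\mathcal{S}_m\cap\mathcal{K}_{\{u\}})\cup\mathcal{K}_{\mathcal{U}^{(m,n'')}}\cup\mathcal{T}_{n''}$. Then every user outside $\mathcal{S}_m\cup\{a\}$ lies in $\mathcal{T}_{n''}$, because $\mathcal{K}_{\{w\}}\subseteq\mathcal{S}_m\cup\mathcal{T}_{n''}$ for each $w\in\mathcal{U}^{(m,n'')}$. Corollary~\ref{corol1}, together with the fact that conditioning reduces entropy, then gives $H(Z_a\mid\{Z_{i,j}\}_{(i,j)\in\mathcal{C}})\ge L$ for every $\mathcal{C}$ with $a\notin\mathcal{C}$ and $\mathcal{C}$ disjoint from $\bigcup_i\mathcal{S}_i$.

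Now expand the left side of \eqref{lemma4zv} in two stages:
\begin{itemize}
\item Implicit elements of $\mathcal{T}_n$ first. Each is conditioned on $\mathcal{T}_n\setminus\overline{\mathcal{S}}$ and the earlier implicit elements. That set avoids $\bigcup_i\mathcal{S}_i$, so each term is at least $L$.
\item Explicit elements last. Each is conditioned on a subset of $\mathcal{T}_n\setminus\{(u',v')\}$, which your Case~1 handles.
\end{itemize}
In the example this reads $H(Z_a,Z_s)=H(Z_a)+H(Z_s\mid Z_a)\ge 2L$, which indeed holds. Your proof becomes correct once \eqref{lemma4zv1} is restricted to $(u',v')\in\mathcal{T}_n\cap\bigcup_m\mathcal{S}_m$ and \eqref{lemma4zv} is proved with this ordering.
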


\begin{IEEEproof}
We now prove (\ref{lemma4zv1}) by bounding the entropy of each individual element in $\mathcal{T}_n \cap \overline{\mathcal{S}}$:
\begin{align}
    &H\left(Z_{u',v'}\big|\{Z_{u,v}\}_{(u,v)\in \mathcal{T}_n\setminus\{(u',v')\}}\right)\\
    \geq&H\left(Z_{u',v'}\big|\{Z_{u,v}\}_{(u,v)\in \mathcal{T}_n\setminus\{(u',v')\}},\{W_{u,v}\}_{(u,v)\in \mathcal{T}_n}\right)\\
    \geq&I\left(Z_{u',v'};X_{u',v'}\big|\{Z_{u,v}\}_{(u,v)\in \mathcal{T}_n\setminus\{(u',v')\}},\{W_{u,v}\}_{(u,v)\in \mathcal{T}_n}\right)\\
    =&H\left(X_{u',v'}\big|\{Z_{u,v}\}_{(u,v)\in \mathcal{T}_n\setminus\{(u',v')\}},\{W_{u,v}\}_{(u,v)\in \mathcal{T}_n}\right)\\
    =&H\left(X_{u',v'}\big|\{Z_{u,v},W_{u,v}\}_{(u,v)\in \mathcal{T}_n\setminus\{(u',v')\}}\right)\notag\\
    &-I\left(X_{u',v'};W_{u',v'}\big|\{Z_{u,v},W_{u,v}\}_{(u,v)\in \mathcal{T}_n\setminus\{(u',v')\}}\right)\\
    \overset{(\ref{lemma1X>=L})}{\geq}&L-I\left(X_{u',v'},\sum_{(u,v)\in \mathcal{K}}W_{u,v};W_{u',v'}\Bigg|\{Z_{u,v},W_{u,v}\}_{(u,v)\in \mathcal{T}_n\setminus\{(u',v')\}}\right)\\
    =&L-\underbrace{I\left(\sum_{(u,v)\in \mathcal{K}}W_{u,v};W_{u',v'}\Bigg|\{Z_{u,v},W_{u,v}\}_{(u,v)\in \mathcal{T}_n\setminus\{(u',v')\}}\right)}_{\overset{(\ref{ind})}{=}0}\notag\\
    &-\underbrace{I\left(X_{u',v'};W_{u',v'}\Bigg|\sum_{(u,v)\in \mathcal{K}}W_{u,v},\{Z_{u,v},W_{u,v}\}_{(u,v)\in \mathcal{T}_n\setminus\{(u',v')\}}\right)}_{\overset{(\ref{security})}{=}0}\label{lem4t1}\\
    =&L.
\end{align} 
Where the third term in (\ref{lem4t1}) is zero since $(u',v') \in \mathcal{T}_n \cap \overline{\mathcal{S}}$. 
By definition of $\overline{\mathcal{S}}$, there exists $m'$ such that $(u',v') \in \mathcal{S}_{m'}$, 
and the security constraint (\ref{security}) implies that the above mutual information is zero.

For (\ref{lemma4zv}), we have
\begin{eqnarray}
&&H\left(\{Z_{u,v}\}_{(u,v) \in \mathcal{T}_n\cap\overline{\mathcal{S}}}\big|\{Z_{u,v}\}_{(u,v)\in\mathcal{T}_n\setminus \overline{\mathcal{S}}}\right)\notag\\
 &\geq&\sum_{u',v': (u',v')\in \mathcal{T}_n\cap\overline{\mathcal{S}}}H\left(Z_{u',v'}\big|\{Z_{u,v}\}_{(u,v)\in\mathcal{T}_n\setminus \overline{\mathcal{S}}},\{Z_{u,v}\}_{(u,v)\in (\mathcal{T}_n\cap\overline{\mathcal{S}})\setminus\{(u',v')\}}\right) \\
&\geq&\sum_{u',v': (u',v')\in \mathcal{T}_n\cap\overline{\mathcal{S}}}H\left(Z_{u',v'}\big|\{Z_{u,v}\}_{(u,v)\in (\mathcal{T}_n)\setminus\{(u',v')\}}\right) \\
&\overset{(\ref{lemma4zv1})}{\geq}&|\mathcal{T}_n\cap\overline{\mathcal{S}}|L.\label{pf_lemma4_t4}
\end{eqnarray}
The last step follows by applying (\ref{lemma4zv1}) to each $(u',v')\in \mathcal{T}_n\cap\overline{\mathcal{S}}$.
\end{IEEEproof}

We now consider arbitrary sets $\mathcal{S}_m$ and $\mathcal{T}_n$. 
Intuitively, the total entropy of the keys associated with the explicit 
security input set $\mathcal{S}_m \cap \mathcal{K}_{\{u'\}}$, given the 
information available to colluding users, must be at least 
$|\mathcal{S}_m \cap \mathcal{K}_{\{u'\}}|L$. 
Similarly, for the explicit security input set 
$\mathcal{K}_{\mathcal{U}^{(m,n)}}$, the required total entropy of the keys 
is at least $|\mathcal{U}^{(m,n)}\setminus \{u'\}|L$.

\begin{lemma} \label{lemma5} 
\tit{Let $u' \in [U]$, $m \in [M]$, and $n \in [N]$. 
Consider the sets $\mathcal{K}_{\{u'\}}, \mathcal{K}_{\mathcal{U}^{(m,n)}}, 
\mathcal{S}_m$, and $\mathcal{T}_n$ such that 
$(\mathcal{S}_m \cap \mathcal{K}_{\{u'\}}) \cap \mathcal{T}_n = \emptyset$ 
and 
$|(\mathcal{S}_m \cap \mathcal{K}_{\{u'\}}) 
\cup \mathcal{K}_{\mathcal{U}^{(m,n)}} 
\cup \mathcal{T}_n| \le K-1$. 
Then}
\begin{align}
H\big(\{Z_{u,v}\}_{(u,v) \in (\mathcal{S}_m \cap \mathcal{K}_{\{u'\}}) 
\cup \mathcal{K}_{\mathcal{U}^{(m,n)}}} 
\big| 
\{Z_{u,v}\}_{(u,v)\in \mathcal{T}_n}\big) 
\ge 
\left(|\mathcal{S}_m \cap \mathcal{K}_{\{u'\}}|
+|\mathcal{U}^{(m,n)}\setminus \{u'\}|\right)L.
\label{lemma5_eqY}
\end{align}
\end{lemma}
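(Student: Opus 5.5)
We follow the pattern of the converse for Example~\ref{ex1}. First we condition on the inputs, which the keys are independent of by (\ref{ind}). Then we lower bound the key entropy by the information the keys carry about the messages $\{X_{u',v}\}_{(u',v)\in\mathcal{S}_m\cap\mathcal{K}_{\{u'\}}}$ and $\{Y_u\}_{u\in\mathcal{U}^{(m,n)}\setminus\{u'\}}$.

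Abbreviate $\mathcal{B}\triangleq(\mathcal{S}_m\cap\mathcal{K}_{\{u'\}})\cup\mathcal{K}_{\mathcal{U}^{(m,n)}}$ and $\mathcal{C}\triangleq\mathcal{B}\cup\mathcal{T}_n$. The steps are as follows.

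\begin{enumerate}
\item Add $\{W_{u,v}\}_{(u,v)\in\mathcal{C}}$ to the conditioning. By independence this gives
\[
H(Z_{\mathcal{B}}\mid Z_{\mathcal{T}_n})=H(Z_{\mathcal{B}}\mid Z_{\mathcal{T}_n},W_{\mathcal{C}}).
\]
\item Lower bound this by $I(Z_{\mathcal{B}};X_{\mathcal{S}},Y_{\mathcal{U}}\mid Z_{\mathcal{T}_n},W_{\mathcal{C}})$, where $X_{\mathcal{S}}$ denotes the user messages of $\mathcal{S}_m\cap\mathcal{K}_{\{u'\}}$ and $Y_{\mathcal{U}}$ the broadcasts of $\mathcal{U}^{(m,n)}\setminus\{u'\}$.
\item These messages are deterministic functions of $(W_{\mathcal{C}},Z_{\mathcal{C}})$ by (\ref{messageX}) and (\ref{messageY}). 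Since $\mathcal{C}=\mathcal{B}\cup\mathcal{T}_n$, the mutual information equals $H(X_{\mathcal{S}},Y_{\mathcal{U}}\mid Z_{\mathcal{T}_n},W_{\mathcal{C}})$.
\item Remove $\{W_{u,v}\}_{(u,v)\in\mathcal{B}\setminus\mathcal{T}_n}$ from the conditioning, paying the mutual information term
\[
I(X_{\mathcal{S}},Y_{\mathcal{U}};W_{\mathcal{B}\setminus\mathcal{T}_n}\mid Z_{\mathcal{T}_n},W_{\mathcal{T}_n}).
\]
\item Show this term vanishes:
\begin{itemize}
\item Adjoin the global sum $\sum_{(i,j)\in\mathcal{K}}W_{i,j}$. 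Its mutual information with $W_{\mathcal{B}\setminus\mathcal{T}_n}$ given $\mathcal{T}_n$'s information is zero. This holds because some user outside $\mathcal{C}$ exists, since $|\mathcal{C}|\le K-1$, so the sum stays uniform; this is exactly where that hypothesis is used.
\item The remaining conditional mutual information is zero by the security constraint (\ref{security}), invoked for server $u'$. Server $u'$'s view contains $X_{\mathcal{S}}$ and the broadcasts $Y_u$ for $u\ne u'$, so the subset $(X_{\mathcal{S}},Y_{\mathcal{U}})$ is covered.
\end{itemize}
\item The main obstacle sits in this vanishing step. Security (\ref{security}) only protects $W_{\mathcal{S}_m}$, whereas $\mathcal{B}\setminus\mathcal{T}_n$ may also contain users of $\mathcal{K}_{\mathcal{U}^{(m,n)}}$ that are not in $\mathcal{S}_m$. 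By the definition of $\mathcal{U}^{(m,n)}$, however, every user of $\mathcal{K}_{\mathcal{U}^{(m,n)}}$ outside $\mathcal{S}_m$ lies in $\mathcal{T}_n$. Hence $\mathcal{B}\setminus\mathcal{T}_n\subseteq\mathcal{S}_m$, and monotonicity of $\bm{\mathcal{S}}$ lets us apply security to this subset. This is the same device used in (\ref{pf_lemma2_t1}).
\end{enumerate}

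With the term removed, we bound $H(X_{\mathcal{S}},Y_{\mathcal{U}}\mid Z_{\mathcal{T}_n},W_{\mathcal{T}_n})$ via the chain rule:
\[
H(X_{\mathcal{S}}\mid W_{\mathcal{T}_n},Z_{\mathcal{T}_n})+H(Y_{\mathcal{U}}\mid X_{\mathcal{S}},W_{\mathcal{T}_n},Z_{\mathcal{T}_n}).
\]
\begin{itemize}
\item The first term is at least $|\mathcal{S}_m\cap\mathcal{K}_{\{u'\}}|L$ by (\ref{lemma3xvwz}) of Lemma~\ref{lemma3}, using $(\mathcal{S}_m\cap\mathcal{K}_{\{u'\}})\cap\mathcal{T}_n=\emptyset$.
\item For the second term, first enlarge the conditioning to all of $\{X_{u',v}\}_{v\in[V_{u'}]}$, which can only decrease entropy. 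Then (\ref{lemma3yuwz}) gives at least $|\mathcal{U}^{(m,n)}\setminus\{u'\}|L$.
\end{itemize}

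In (\ref{lemma3yuwz}), the single-user bound from Lemma~\ref{lemma1} must be applied with a user $(u,v')\notin\mathcal{T}_n$. For $u\in\mathcal{U}^{(m,n)}$ such a user exists because $u$ has a user in $\mathcal{S}_m$; we take $v'$ with $(u,v')\in\mathcal{S}_m$, and we need $(u,v')\notin\mathcal{T}_n$. If $\mathcal{S}_m\cap\mathcal{T}_n\ne\emptyset$ the bound could fail, so we would either restrict to the case where $\mathcal{S}_m\cap\mathcal{K}_{\{u\}}\not\subseteq\mathcal{T}_n$ or, as in the paper, rely on the pairs $(m,n)$ implicitly considered. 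Adding the two bounds yields (\ref{lemma5_eqY}).
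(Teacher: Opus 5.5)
Your argument follows the paper's proof essentially step for step. You condition on the inputs of $\mathcal{C}$ and lower-bound by the mutual information between $Z_{\mathcal{B}}$ and $(X_{\mathcal{S}},Y_{\mathcal{U}})$. You turn this into a conditional entropy via (\ref{messageX})--(\ref{messageY}) and remove the inputs of $\mathcal{B}\setminus\mathcal{T}_n$ by adjoining the global sum and invoking (\ref{security}). You finish with the chain rule and both bounds of Lemma~\ref{lemma3}. Your justifications of the two vanishing terms are exactly what the paper uses without comment: a user outside $\mathcal{C}$ keeps the sum uniform, and $\mathcal{B}\setminus\mathcal{T}_n\subseteq\mathcal{S}_m$ by the definition of $\mathcal{U}^{(m,n)}$. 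Working with $\mathcal{U}^{(m,n)}\setminus\{u'\}$ instead of the paper's $\mathcal{U}^{(m,n)}$ changes nothing.

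The caveat in your last paragraph is genuine, and it is a defect of the statement rather than of your route. The paper applies (\ref{lemma3yuwz}) with $\mathcal{U}=\mathcal{U}^{(m,n)}$, and the proof of Lemma~\ref{lemma3} merely asserts $(u,v')\notin\mathcal{T}_n$. The stated hypothesis only makes $\mathcal{S}_m\cap\mathcal{K}_{\{u'\}}$ disjoint from $\mathcal{T}_n$, and with nothing more the lemma is false.

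Here is a counterexample. Take $U=3$, $V_1=V_2=V_3=1$, $\bm{\mathcal{S}}$ the subsets of $\{(1,1),(2,1)\}$, and $\bm{\mathcal{T}}=\{\emptyset,\{(2,1)\}\}$. Choose $u'=1$, $\mathcal{S}_m=\{(1,1),(2,1)\}$, $\mathcal{T}_n=\{(2,1)\}$. All hypotheses hold and $\mathcal{U}^{(m,n)}=\{1,2\}$, so the lemma claims $H(Z_{1,1}\mid Z_{2,1})\ge 2L$.

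Now consider the scheme $Z_{1,1}=N_1$, $Z_{2,1}=N_2$, $Z_{3,1}=-(N_1+N_2)$, with $X_{u,1}=W_{u,1}+Z_{u,1}$ and $Y_u=X_{u,1}$. It is correct, and it is secure for every pair: each server sees $X_{1,1},X_{2,1},X_{3,1}$, which given the sum are equivalent to $(W_{1,1}+N_1,\,W_{2,1}+N_2)$. This leaks nothing whether or not $(W_{2,1},N_2)$ is known. Yet $H(Z_{1,1}\mid Z_{2,1})=L$.

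So rather than hedging, add the hypothesis explicitly, e.g.\ $\mathcal{S}_m\cap\mathcal{T}_n=\emptyset$, or minimally $\mathcal{K}_{\{u\}}\not\subseteq\mathcal{T}_n$ for every $u\in\mathcal{U}^{(m,n)}\setminus\{u'\}$. Under that hypothesis your proof is complete.
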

\begin{IEEEproof}
We have
\begin{align}
& H\Big(\{Z_{u,v}\}_{(u,v) \in (\mathcal{S}_m \cap \mathcal{K}_{\{u'\}}) \cup \mathcal{K}_{\mathcal{U}^{(m,n)}}} \big| \{Z_{u,v}\}_{(u,v)\in \mathcal{T}_n}\Big)\\
\geq&I\Big(\{Z_{u,v}\}_{(u,v) \in (\mathcal{S}_m \cap \mathcal{K}_{\{u'\}}) \cup \mathcal{K}_{\mathcal{U}^{(m,n)}}};\{X_{u',v}\}_{(u',v) \in (\mathcal{S}_m \cap \mathcal{K}_{\{u'\}})},\{Y_u\}_{u\in \mathcal{U}^{(m,n)}} \big|\notag\\
&\{Z_{u,v}\}_{(u,v)\in\mathcal{T}_n},\{W_{u,v}\}_{(u,v) \in (\mathcal{S}_m \cap \mathcal{K}_{\{u'\}}) \cup \mathcal{K}_{\mathcal{U}^{(m,n)}} \cup \mathcal{T}_n}\Big) ~~~~\\
\overset{(\ref{messageX})}{=}&H\left(\{X_{u',v}\}_{(u',v) \in (\mathcal{S}_m \cap \mathcal{K}_{\{u'\}})},\{Y_u\}_{u\in \mathcal{U}^{(m,n)}}\big|\{Z_{u,v}\}_{(u,v)\in\mathcal{T}_n},\{W_{u,v}\}_{(u,v) \in (\mathcal{S}_m \cap \mathcal{K}_{\{u'\}}) \cup \mathcal{K}_{\mathcal{U}^{(m,n)}} \cup \mathcal{T}_n}\right) \label{pf_lemma5_t1}\\
=&H\Big(\{X_{u',v}\}_{(u',v) \in (\mathcal{S}_m \cap \mathcal{K}_{\{u'\}})},\{Y_u\}_{u\in \mathcal{U}^{(m,n)}}\big|\{Z_{u,v},W_{u,v}\}_{(u,v)\in\mathcal{T}_n}\Big)\notag\\
&-I\Big(\{X_{u',v}\}_{(u',v) \in (\mathcal{S}_m \cap \mathcal{K}_{\{u'\}})},\{Y_u\}_{u\in \mathcal{U}^{(m,n)}};\{W_{u,v}\}_{(u,v) \in (\mathcal{S}_m \cap \mathcal{K}_{\{u'\}}) \cup \mathcal{K}_{\mathcal{U}^{(m,n)}}}\Big|\{Z_{u,v},W_{u,v}\}_{(u,v)\in\mathcal{T}_n}\Big)\\
\geq&H\Big(\{X_{u',v}\}_{(u',v) \in (\mathcal{S}_m \cap \mathcal{K}_{\{u'\}})}\big|\{Z_{u,v},W_{u,v}\}_{(u,v)\in\mathcal{T}_n}\Big)\notag\\
&+H\Big(\{Y_u\}_{u\in \mathcal{U}^{(m,n)}}\big|\{X_{u',v}\}_{(u',v) \in (\mathcal{S}_m \cap \mathcal{K}_{\{u'\}})},\{Z_{u,v},W_{u,v}\}_{(u,v)\in\mathcal{T}_n}\Big)-\notag\\
&I\Bigg(\{X_{u',v}\}_{(u',v) \in (\mathcal{S}_m \cap \mathcal{K}_{\{u'\}})},\{Y_u\}_{u\in \mathcal{U}^{(m,n)}},\sum_{(u,v)\in \mathcal{K}}W_{u,v};\{W_{u,v}\}_{(u,v) \in (\mathcal{S}_m \cap \mathcal{K}_{\{u'\}}) \cup \mathcal{K}_{\mathcal{U}^{(m,n)}}}\Bigg|\{Z_{u,v},W_{u,v}\}_{(u,v)\in\mathcal{T}_n}\Bigg)\\
\geq&H\Big(\{X_{u',v}\}_{(u',v) \in (\mathcal{S}_m \cap \mathcal{K}_{\{u'\}})}\big|\{Z_{u,v},W_{u,v}\}_{(u,v)\in\mathcal{T}_n}\Big)\notag\\
&+H\Big(\{Y_u\}_{u\in \mathcal{U}^{(m,n)}}\big|\{X_{u',v}\}_{v \in [V_{u'}]},\{Z_{u,v},W_{u,v}\}_{(u,v)\in\mathcal{T}_n}\Big)\notag\\
&-\underbrace{I\Bigg(\sum_{(u,v)\in \mathcal{K}}W_{u,v};\{W_{u,v}\}_{(u,v) \in (\mathcal{S}_m \cap \mathcal{K}_{\{u'\}}) \cup \mathcal{K}_{\mathcal{U}^{(m,n)}}}\Bigg|\{Z_{u,v},W_{u,v}\}_{(u,v)\in\mathcal{T}_n}\Bigg)}_{\overset{(\ref{ind})}{=}0}-\notag\\
&\underbrace{I\Bigg(\{X_{u',v}\}_{(u',v) \in (\mathcal{S}_m \cap \mathcal{K}_{\{u'\}})},\{Y_u\}_{u\in \mathcal{U}^{(m,n)}};\{W_{u,v}\}_{(u,v) \in (\mathcal{S}_m \cap \mathcal{K}_{\{u'\}}) \cup \mathcal{K}_{\mathcal{U}^{(m,n)}}}\Bigg|\sum_{(u,v)\in \mathcal{K}}W_{u,v},\{Z_{u,v},W_{u,v}\}_{(u,v)\in\mathcal{T}_n}\Bigg)}_{\overset{(\ref{security})}{=}0}\label{pf_lemma5_t2}\\
\geq&\left(|\mathcal{S}_m \cap \mathcal{K}_{\{u'\}}|
+|\mathcal{U}^{(m,n)}\setminus \{u'\}|\right)L.
\end{align}
(\ref{pf_lemma5_t1}) holds because the messages 
$\{X_{u',v}\}_{(u',v)\in (\mathcal{S}_m \cap \mathcal{K}_{\{u'\}})}$ 
are deterministic functions of the corresponding inputs and keys, i.e., 
$\{W_{u',v}, Z_{u',v}\}_{(u',v)\in (\mathcal{S}_m \cap \mathcal{K}_{\{u'\}})}$, 
and the server messages $\{Y_u\}_{u\in \mathcal{U}^{(m,n)}}$ are deterministic 
functions of $\{X_{u,v}\}_{(u,v)\in \mathcal{K}_{\mathcal{U}^{(m,n)}}}$.
In (\ref{pf_lemma5_t2}), the first two terms follow from the chain rule of 
entropy together with the lower bounds established in Lemma~\ref{lemma3} 
((\ref{lemma3xvwz}) and (\ref{lemma3yuwz})), respectively. 
The third term is zero due to the independence between inputs and keys 
(see (\ref{ind})), while the fourth term is zero due to the security 
constraint associated with $\mathcal{S}_m$ and $\mathcal{T}_n$ 
(see (\ref{security})).
\end{IEEEproof}

Equipped with the Lemmas, the converse bounds on the communication rates $R_X$, $R_Y$, and the source key rate $R_{Z_\Sigma}$ follow immediately.

\subsubsection{Proof of $R_X \ge 1$}
For any $(u,v) \in \mathcal{K}$, we have
\begin{align}
L_X &= H(X_{u,v}) \ge H\Big(X_{u,v} \,\big|\, \{W_{i,j},Z_{i,j}\}_{(i,j)\in \mathcal{K}\backslash \{(u,v)\}} \Big)
\overset{(\ref{lemma1X>=L})}{\ge} L,\\
\Rightarrow \quad R_X &= \frac{L_X}{L} \ge 1.
\end{align}

\subsubsection{Proof of $R_Y \ge 1$}
For any $u \in [U]$, we similarly have
\begin{align}
L_Y &= H(Y_u) \ge H\Big(Y_u \,\big|\, \{W_{i,j},Z_{i,j}\}_{(i,j)\in \mathcal{K}\backslash \{(u,v)\}} \Big)
\overset{(\ref{lemma1Y>=L})}{\ge} L,\\
\Rightarrow \quad R_Y &= \frac{L_Y}{L} \ge 1.
\end{align}

Note that these communication rate lower bounds hold regardless of the security constraints. Since the server must recover the sum of all users' inputs, the information corresponding 
each user's input of size $L$ must be transmitted through the corresponding communication links
Consequently, $R_X \ge 1$ represents the minimum required rate on the user-to-server links, whereas $R_Y \ge 1$ represents the minimum required rate on the server-to-server links.

\subsubsection{Proof of $R_{Z_{\Sigma}}\ge \min\{e^*, K-1\}$}

We are now ready to show that $R_{Z_\Sigma} \geq  K-1$ when $e^*=K$, and $R_{Z_\Sigma} \geq e^*$ when $e^* \le K-1$ and $a^* \le |\overline{\mathcal{S}}|-1$ or $e^* \le K-1$, $a^* = |\overline{\mathcal{S}}|$, and $|\mathcal{Q}| \le K-1$. 
First, we show that $R_{Z_\Sigma} \geq e^*$.
Consider arbitrary $\mathcal{K}_{\{u'\}}$ and $\mathcal{U}^{(m,n)}$ 
whose corresponding sets $\mathcal{S}_m$ and $\mathcal{T}_n$ satisfy 
the conditions in Lemmas~\ref{lemma4ZV>=VL} and \ref{lemma5} (note that the set systems are monotone so that we may assume without loss that $(\mathcal{S}_m \cap \mathcal{K}_{\{u'\}}) \cap \mathcal{T}_n = \emptyset$).
\begin{eqnarray}
    H(Z_{\Sigma})&\overset{(\ref{sourcekey})}{\geq}&H\left(\{Z_{u,v}\}_{(u,v)\in \mathcal{K}}\right)\\
    &=&H\left(\{Z_{u,v}\}_{(u,v)\in \mathcal{T}_n}\right)+H\left(\{Z_{u,v}\}_{(u,v)\in \mathcal{K}}|\{Z_{u,v}\}_{(u,v)\in \mathcal{T}_n}\right)\\
    &\geq&H\left(\{Z_{u,v}\}_{(u,v) \in \mathcal{T}_n\cap\overline{\mathcal{S}}}\big|\{Z_{u,v}\}_{(u,v)\in\mathcal{T}_n\setminus \overline{\mathcal{S}}}\right)\notag\\
    &&+H\big(\{Z_{u,v}\}_{(u,v) \in (\mathcal{S}_m \cap \mathcal{K}_{\{u'\}}) \cup \mathcal{K}_{\mathcal{U}^{(m,n)}}} \big| \{Z_{u,v}\}_{(u,v)\in \mathcal{T}_n}\big)~~\\
    &\overset{(\ref{lemma4zv})(\ref{lemma5_eqY})}{\geq}& \big|\mathcal{T}_{n}\cap\overline{\mathcal{S}}\big| L+ |\mathcal{S}_m \cap \mathcal{K}_{\{u'\}}|L+|\mathcal{U}^{(m,n)}\setminus \{u'\}|L\\
    &=&\Big( | ((\mathcal{S}_m \cap \mathcal{K}_{\{u'\}})\cup \mathcal{T}_n)\cap\overline{\mathcal{S}}|+|\mathcal{U}^{(m,n)}\setminus \{u'\}|\Big)L \label{totalkey_pf_t1}\\
\Rightarrow &&H(Z_{\Sigma}) \geq \max_{u',m,n} \Big( | ((\mathcal{S}_m \cap \mathcal{K}_{\{u'\}})\cup \mathcal{T}_n)\cap\overline{\mathcal{S}}|+|\mathcal{U}^{(m,n)}\setminus \{u'\}|\Big)L =e^*L,\label{totalrand_eq1}\\
\Rightarrow &&R_{Z_{\Sigma}} = \frac{b_{Z_{\Sigma}}}{L} \geq \frac{H(Z_{\Sigma})}{L} \geq e^*,
\end{eqnarray}
where (\ref{totalkey_pf_t1}) follows from the facts that $(\mathcal{S}_m \cap \mathcal{K}_{\{u'\}})\cap \mathcal{T}_n = \emptyset$ and $(\mathcal{S}_m \cap \mathcal{K}_{\{u'\}})\subseteq \overline{\mathcal{S}}$.

Second, a slight twist shows that $R_{Z_\Sigma} \ge K-1$ when $e^*=K$. 
Specifically, there exist sets $\mathcal{K}_{\{u\}}, \mathcal{S}_m, \mathcal{T}_n$ such that
$|((\mathcal{S}_m \cap \mathcal{K}_{\{u\}}) \cup \mathcal{T}_n) \cap \overline{\mathcal{S}}|
+ |\mathcal{U}^{(m,n)} \setminus \{u\}| = K$, 
and
$|(\mathcal{S}_{m} \cap \mathcal{K}_{\{u\}}) \cup \mathcal{K}_{\mathcal{U}^{(m,n)}} \cup \mathcal{T}_{n}| = K$.
Since the set systems $\bm{\mathcal{S}}$ and $\bm{\mathcal{T}}$ are monotone, there then exist 
$\mathcal{K}_{\{u'\}}, \mathcal{S}_{m'}, \mathcal{T}_{n'}$ such that
$|(\mathcal{S}_{m'} \cap \mathcal{K}_{\{u'\}}) \cup \mathcal{K}_{\mathcal{U}^{(m',n')}} \cup \mathcal{T}_{n'}| = K-1$ 
and 
$|((\mathcal{S}_{m'} \cap \mathcal{K}_{\{u'\}}) \cup \mathcal{T}_{n'}) \cap \overline{\mathcal{S}}| + |\mathcal{U}^{(m',n')} \setminus \{u'\}| = K-1$.
Applying Lemmas~\ref{lemma4ZV>=VL} and \ref{lemma5}, we then obtain the desired lower bound.
\begin{eqnarray}
H(Z_{\Sigma})&\overset{(\ref{sourcekey})}{\geq}&H\left(\{Z_{u,v}\}_{(u,v)\in \mathcal{K}}\right)\\
    &=&H\left(\{Z_{u,v}\}_{(u,v)\in \mathcal{T}_n}\right)+H\left(\{Z_{u,v}\}_{(u,v)\in \mathcal{K}}|\{Z_{u,v}\}_{(u,v)\in \mathcal{T}_{n'}}\right)\\
    &\geq&H\left(\{Z_{u,v}\}_{(u,v) \in \mathcal{T}_{n'}\cap\overline{\mathcal{S}}}\big|\{Z_{u,v}\}_{(u,v)\in\mathcal{T}_{n'}\setminus \overline{\mathcal{S}}}\right)\notag\\
    &&+H\big(\{Z_{u,v}\}_{(u,v) \in (\mathcal{S}_{m'} \cap \mathcal{K}_{\{u'\}}) \cup \mathcal{K}_{\mathcal{U}^{(m',n')}}} \big| \{Z_{u,v}\}_{(u,v)\in \mathcal{T}_{n'}}\big)~~\\
    &\overset{(\ref{lemma4zv})(\ref{lemma5_eqY})}{\geq}& \big|\mathcal{T}_{n'}\cap\overline{\mathcal{S}}\big| L+ |\mathcal{S}_{m'} \cap \mathcal{K}_{\{u'\}}|L+|\mathcal{U}^{(m',n')}\setminus \{u'\}|L\\
    &=&\Big( | ((\mathcal{S}_{m'} \cap \mathcal{K}_{\{u'\}})\cup \mathcal{T}_{n'})\cap\overline{\mathcal{S}}|+|\mathcal{U}^{(m',n')}\setminus \{u'\}|\Big)L  = (K-1)L,\\
\Rightarrow ~~~ R_{Z_{\Sigma}} &=& \frac{b_{Z_{\Sigma}}}{L} \geq \frac{H(Z_{\Sigma})}{L} \geq K-1.
\end{eqnarray}

The proof of the converse is now complete. Note that the established bounds on the communication rates $R_X$, $R_Y$, and the source key rate $R_{Z_\Sigma}$ also apply to the setting of Theorem~\ref{thm2}, so it remains to construct an achievable scheme to complete the proof.

\subsection{General Achievability Proof of Theorem \ref{thm1}}
\label{subsec:achievability}

Before presenting the achievable schemes, we outline the structure of this section. 
Each subsubsection constructs a coding scheme corresponding to a 
different condition on the source key rate $R_{Z_\Sigma}$. 
These schemes together establish the general achievability of Theorem~\ref{thm1}, 
ensuring both correctness and security under the respective conditions.

\subsubsection{Achievable Scheme for the Case $R_{Z_\Sigma}=K-1$ When $e^*=K$}
\label{sec:ach1}

Each input $(u,v)$ is assigned a key variable $Z_{u,v}$. 
Let $\{N_{u,v}\}_{(u,v)\in \mathcal{K}\setminus\{(U,V)\}}$ be $K-1$ independent and identically distributed (i.i.d.) uniform random variables over $\mathbb{F}_q$. 
We set
\begin{eqnarray}
Z_{u,v} &=& N_{u,v}, \forall (u,v)\in\mathcal{K}\setminus\{(U,V)\} \notag\\
Z_{U,V} &=& -\sum_{(u,v)\in\mathcal{K}\setminus\{(U,V)\}} N_{u,v}.
\label{eq:t1t1}
\end{eqnarray}
We have
\begin{eqnarray}
\sum_{(u,v)\in\mathcal{K}} Z_{u,v} =0 .
\label{eq:t1t2}
\end{eqnarray}
The transmitted messages are 
\begin{eqnarray}
X_{u,v} &=& W_{u,v}+Z_{u,v}, \quad (u,v)\in\mathcal{K}, 
\label{eq:t1t3}\\
Y_u &=& \sum_{v\in[V_u]} X_{u,v}, \quad u\in[U].
\label{eq:t1t4}
\end{eqnarray}
Therefore, the achieved communication rates are 
$R_X=L_X/L=1$ and $R_Y=L_Y/L=1$, 
while the source key rate is 
$R_{Z_\Sigma}=b_{Z_\Sigma}/L=K-1$.
Correctness follows since for any server $k\in[U]$,
\begin{eqnarray}
    \sum_{v\in[V_k]} X_{k,v}
+\sum_{u\in[U]\setminus\{k\}} Y_u &\overset{(\ref{eq:t1t4})}{=}&\sum_{v\in[V_k]} X_{k,v}
+\sum_{u\in[U]\setminus\{k\}} \sum_{v\in[V_u]} X_{u,v} \notag\\
&=&\sum_{(u,v)\in\mathcal{K}} X_{u,v} \notag\\
&\overset{(\ref{eq:t1t3})(\ref{eq:t1t2})}{=}&
\sum_{(u,v)\in\mathcal{K}} W_{u,v}.
\end{eqnarray}
The proof of security is deferred to Section \ref{pfsecurity}.

\subsubsection{Achievable Scheme of $R_{Z_\Sigma} = e^*$ for the `Otherwise' Case where $e^* \leq K-1$ and $a^* \leq |\overline{\mathcal{S}}|-1$}
\label{sec:ach2}

Each input $(u,v)$ in the (implicit and explicit) security input set $\overline{\mathcal{S}}$ is assigned a key variable. 
We operate over the finite field $\mathbb{F}_q$. To guarantee the existence of coefficient vectors satisfying the required linear independence conditions, we choose the field size such that $q > e^* \binom{|\overline{S}|}{e^*}$. 

\begin{remark}
The existence of such coefficient vectors follows from the Schwartz--Zippel lemma: the determinant of any $e^* \times e^*$ submatrix formed by the selected vectors is a non-zero polynomial of total degree at most $e^*$ in the entries of $\mathbf{h}_{u,v}$. With $q$ exceeding the maximum number of roots of this polynomial, a non-vanishing assignment exists. A sufficient condition is $q > e^* \binom{|\overline{S}|}{e^*}$.
\end{remark}

Let $N_1,\ldots,N_{e^*}$ be $e^*$ i.i.d. uniform random variables over $\mathbb{F}_{q}$, and define the column vector
$Z_\Sigma = {\bf N} = (N_1,\ldots,N_{e^*})^\top \in \mathbb{F}_{q}^{e^* \times 1}$.
The  \indiv keys are given by
\begin{eqnarray}
Z_{u,v} &=& {\bf h}_{u,v}\,{\bf N}, \quad (u,v)\in \overline{\mathcal{S}} \notag\\
Z_{u,v} &=& 0, \quad (u,v)\in \mathcal{K}\setminus \overline{\mathcal{S}},
\label{eq:c111}
\end{eqnarray}
where ${\bf h}_{u,v} \in \mathbb{F}_{q}^{1\times e^*}$ are coefficient vectors constructed as follows. 
Suppose
$\overline{\mathcal{S}} = \{(u_1,v_1),\ldots,(u_{|\overline{\mathcal{S}}|},v_{|\overline{\mathcal{S}}|})\}$.
We choose
${\bf h}_{u_1,v_1},\ldots,{\bf h}_{u_{|\overline{\mathcal{S}}|-1},v_{|\overline{\mathcal{S}}|-1}}$
independently and uniformly from $\mathbb{F}_{q}^{1\times e^*}$, and define
\begin{eqnarray}
{\bf h}_{u_{|\overline{\mathcal{S}}|},v_{|\overline{\mathcal{S}}|}}
=
-\left({\bf h}_{u_1,v_1}+\cdots+{\bf h}_{u_{|\overline{\mathcal{S}}|-1},v_{|\overline{\mathcal{S}}|-1}}\right).
\label{eq:c121}
\end{eqnarray}
There must exist a realization\footnote{
Since $e^* < |\overline{\mathcal{S}}|$, the determinant associated with any selection of $e^*$ distinct vectors ${\bf h}_{u,v}$ defines a non-zero polynomial. Applying the Schwartz--Zippel lemma, the product of all such determinant polynomials has degree $e^* \binom{|\overline{\mathcal{S}}|}{e^*}$. By choosing a field $\mathbb{F}_q$ with $q$ larger than this degree, the probability that the product polynomial evaluates to a non-zero value is strictly positive, which ensures the validity of (\ref{eq:sz111}).
}
of $\{{\bf h}_{u,v}:(u,v)\in\overline{\mathcal{S}}\}$ such that
\begin{eqnarray}
\mbox{\tit{Any set of at most $e^*$ vectors chosen from} } 
\{{\bf h}_{u,v}:(u,v)\in\overline{\mathcal{S}}\} \notag\\
\mbox{\tit{and}} 
\left\{\sum_{v\in[V_u]}{\bf h}_{u,v}:u\in\mathcal{U}^{(m,n)}\right\}
\mbox{\tit{is linearly independent.}}
\label{eq:sz111}
\end{eqnarray}
Moreover,
\begin{eqnarray}
\sum_{(u,v)\in\mathcal{K}} Z_{u,v}
\overset{(\ref{eq:c111})}{=}
\sum_{(u,v)\in\overline{\mathcal{S}}} Z_{u,v}
\overset{(\ref{eq:c121})}{=} 0 .
\label{eq:t2t1}
\end{eqnarray}
The transmitted messages are defined as
\begin{eqnarray}
X_{u,v} &=& W_{u,v} + Z_{u,v}, \quad (u,v)\in\mathcal{K} \label{eq:achm1X1}\\
Y_u &=& \sum_{v\in[V_u]} X_{u,v}, \quad u\in[U]. 
\label{eq:achm1Y11}
\end{eqnarray}
Thus, the achieved communication rates are
$R_X=L_X/L=1$ and $R_Y=L_Y/L=1$, while the source key rate is
$R_{Z_\Sigma}=b_{Z_\Sigma}/L=e^*$.
Correctness follows since for any server $k\in[U]$
\begin{eqnarray}
\sum_{v\in[V_k]}X_{k,v}+\sum_{u\in[U]\setminus\{k\}}Y_u &\overset{(\ref{eq:achm1Y11})}{=}&\sum_{v\in[V_k]}X_{k,v}+\sum_{u\in[U]\setminus\{k\}}\sum_{v\in[V_u]}X_{u,v} \notag\\
&=&\sum_{(u,v)\in\mathcal{K}}X_{u,v} \notag\\
&\overset{(\ref{eq:achm1X1})(\ref{eq:t2t1})}{=}
&\sum_{(u,v)\in\mathcal{K}}W_{u,v}.
\end{eqnarray}
The security proof is deferred to Section~\ref{pfsecurity}, which relies on proving the existence of a realization of the randomly generated vectors ${\bf h}_{u,v}$ satisfying the required full-rank property.

\subsubsection{Achievable Scheme of $R_{Z_\Sigma} = e^*$ for the `Otherwise' Case where $e^* \leq K-1$, $a^* = |\overline{\mathcal{S}}|$, and $|\mathcal{Q}| \leq K-1$}
\label{sec:ach3}

In this case, each input in the total security input set $\overline{\mathcal{S}}$, together with one additional input outside $\mathcal{Q}$, is assigned a key variable. 
We operate over the finite field $\mathbb{F}_q$ with field size chosen such that
$q > e^* \binom{|\overline{\mathcal{S}}|+1}{e^*}$.

Let $N_1,\ldots,N_{e^*}$ be $e^*$ independent and identically distributed (i.i.d.) uniform random variables over $\mathbb{F}_q$, and define
$Z_\Sigma = {\bf N} = (N_1,\ldots,N_{e^*})^\top \in \mathbb{F}_q^{e^* \times 1}$.
Since $|\mathcal{Q}| < K$, there exists $(u',v') \in \mathcal{K}\setminus\mathcal{Q}$. 
Moreover, $(u',v') \notin \overline{\mathcal{S}}$ because $e^* = |\overline{\mathcal{S}}|$ and $\overline{\mathcal{S}} \subset \mathcal{Q}$.
The \indiv keys are given by
\begin{eqnarray}
Z_{u,v} &=& {\bf h}_{u,v}{\bf N}, \quad (u,v)\in (\overline{\mathcal{S}}\cup\{(u',v')\}) \notag\\
Z_{u,v} &=& 0, \quad (u,v)\in \mathcal{K}\setminus(\overline{\mathcal{S}}\cup\{(u',v')\})
\label{eq:c21}
\end{eqnarray}
where ${\bf h}_{u,v}\in\mathbb{F}_q^{1\times e^*}$ are chosen as follows. 
Suppose
$\overline{\mathcal{S}}=\{(u_1,v_1),\ldots,(u_{|\overline{\mathcal{S}}|},v_{|\overline{\mathcal{S}}|})\}$.
We choose
${\bf h}_{u_1,v_1},\ldots,{\bf h}_{u_{|\overline{\mathcal{S}}|},v_{|\overline{\mathcal{S}}|}}$
independently and uniformly from $\mathbb{F}_q^{1\times e^*}$ and define
\begin{eqnarray}
{\bf h}_{u',v'} =
-\left({\bf h}_{u_1,v_1}+\cdots+{\bf h}_{u_{|\overline{\mathcal{S}}|},v_{|\overline{\mathcal{S}}|}}\right).
\label{eq:c22}
\end{eqnarray}
Using the same argument based on Schwartz-Zippel lemma \cite{Demillo_Lipton,Schwartz,Zippel}, there exists a realization of $\{{\bf h}_{u,v}\}$ such that
\begin{eqnarray}
\mbox{\tit{Any set of at most $e^*$ vectors chosen from}}
\{{\bf h}_{u,v}:(u,v)\in\overline{\mathcal{S}}\cup\{(u',v')\}\} \notag\\
\mbox{\tit{and}}
\left\{\sum_{v\in[V_u]}{\bf h}_{u,v}:u\in\mathcal{U}^{(m,n)}\right\}
\mbox{\tit{are linearly independent.}}
\label{eq:sz2}
\end{eqnarray}
Moreover,
\begin{eqnarray}
\sum_{(u,v)\in\mathcal{K}} Z_{u,v}
\overset{(\ref{eq:c21})}{=}
\sum_{(u,v)\in(\overline{\mathcal{S}}\cup\{(u',v')\})} Z_{u,v}
\overset{(\ref{eq:c22})}{=}0.
\label{eq:c23}
\end{eqnarray}
The transmitted messages are 
\begin{eqnarray}
X_{u,v} &=& W_{u,v}+Z_{u,v}, \quad (u,v)\in\mathcal{K}
\label{eq:achm2X}\\
Y_u &=& \sum_{v\in[V_u]}X_{u,v}, \quad u\in[U].
\label{eq:achm2Y}
\end{eqnarray}
The achieved communication rates satisfy
$R_X=L_X/L=1$ and $R_Y=L_Y/L=1$.
Since each symbol is from $\mathbb{F}_q$, we have $b_{Z_\Sigma}=e^*L$, which gives the key rate
$R_{Z_\Sigma}=e^*$.
To verify correctness, for any server $k\in[U]$ we have
\begin{eqnarray}
    \sum_{v\in[V_k]}X_{k,v}+\sum_{u\in[U]\setminus\{k\}}Y_u 
&\overset{(\ref{eq:achm2Y})}{=}&\sum_{v\in[V_k]}X_{k,v}
+\sum_{u\in[U]\setminus\{k\}}\sum_{v\in[V_u]}X_{u,v} \notag\\
&=&\sum_{(u,v)\in\mathcal{K}}X_{u,v} \notag\\
&\overset{(\ref{eq:achm2X})(\ref{eq:c23})}{=}
&\sum_{(u,v)\in\mathcal{K}}W_{u,v}.
\end{eqnarray}
The security proof is deferred to Section~\ref{pfsecurity}, where we show that the randomly generated vectors ${\bf h}_{u,v}$ satisfy the required full-rank property.

\section{Proof of Theorem \ref{thm2}} \label{pfthm2}

This section focuses on constructing achievable schemes for Theorem~\ref{thm2}. 
Since the converse bounds for the communication and source key rates have already been established in Section~\ref{convpfthm1}, 
we only need to demonstrate achievable schemes that attain the upper bound for the source key rate. 
We start with a concrete example to illustrate the main ideas before generalizing.

\subsection{Upper bound Proof of Example \ref{ex2}}
\begin{example}
\label{ex2}
Consider a system with $U=3$ servers and $K=8$ users, where $V_1=4$, $V_2=3$, and $V_3=1$. 
The security input sets are given by
$(\mathcal{S}_1,$ $\cdots,$ $\mathcal{S}_{8})=(\emptyset,\{(1,1)\},\{(2,1)\},\{(2,2)\},\{(1,1),(2,1)\},$ $\{(1,1),(2,2)\},\{(2,1),$ $(2,2)\},\{(1,1),(2,1),(2,2)\})$ and collusion sets
$(\mathcal{T}_1, \cdots, \mathcal{T}_{28}) =(~\emptyset, \{(1,2)\},\{(1,3)\},\{(1,4)\},\{(2,1)\},\{(2,3)\},$ $\{(3,1)\},\{(1,2),(1,3)\},\{(1,2),$ $(1,4)\},~\{(1,2),(2,3)\},~\{(1,2),(3,1)\},~\{(1,3),(1,4)\},~\{(1,3),(2,1)\},~\{(1,3),$ $(2,3)\},\{(1,3),(3,1)\},\{(1,4),(2,1)\},\{(1,4),(2,3)\},\{(1,4),(3,1)\},\{(2,1),(3,1)\},\{(2,3),(3,1)\},~\{(1,2),(1,3),$ $(2,3)\},~\{(1,2),(1,4),(2,3)\},~\{(1,2),(2,3),(3,1)\},~\{(1,3),(1,4),(2,1)\},~\{(1,3),(1,4),(3,1)\},\{(1,3),(2,1),$ $(3,1)\},\{(1,4),(2,1),(3,1)\},\{(1,3),(1,4),(2,1),(3,1)\}).$ 
In this example, we have $a^* = 3$, $e^* = 2$, and $|\overline{\mathcal{S}}| = 3$. 
Furthermore, the set $\mathcal{Q}$ can be computed as
$|\mathcal{Q}|=|\mathcal{Q}_1|=|\cup_{u,m,n:|\mathcal{A}_{(u,m,n)}|=|\overline{\mathcal{S}}|}(\mathcal{S}_m \cap \mathcal{K}_{\{u\}})
\cup \mathcal{K}_{\mathcal{U}^{(m,n)}}
\cup \mathcal{T}_n|=|((\mathcal{S}_8 \cap \mathcal{K}_{\{1\}})
\cup \mathcal{K}_{\mathcal{U}^{(8,21)}}
\cup \mathcal{T}_{21})\cup((\mathcal{S}_8 \cap \mathcal{K}_{\{1\}})
\cup \mathcal{K}_{\mathcal{U}^{(8,22)}}
\cup \mathcal{T}_{22})\cup((\mathcal{S}_8 \cap \mathcal{K}_{\{1\}})
\cup \mathcal{K}_{\mathcal{U}^{(8,23)}}
\cup \mathcal{T}_{23})|=|\{(1,1),(1,2),(1,3),(1,4),(2,1),(2,2),(2,3),(3,1)\}|=K=8$.
Hence, the conditions specified in Theorem~\ref{thm2} are satisfied, since
$e^* = 2 \leq K-1 = 7$, 
$a^* = |\overline{\mathcal{S}}| = 3$, and 
$|\mathcal{Q}| = K = 8$.

For the lower bound, the inequality $R_{Z_{\Sigma}} \geq e^*$ has already been established in Theorem~\ref{thm1}. 
It remains to show the achievability of the region $R_{Z_{\Sigma}} \geq e^* + b^*$ by constructing a scheme with rate $e^* + b^*$ that satisfies both the correctness constraint~(\ref{correctness}) and the security constraint~(\ref{security}).



Before specifying the scheme, we first characterize the required key sizes based on the constraints induced by the linear program. 
For any $(u,v)\in \overline{\mathcal{S}}$, the corresponding key must satisfy $H(Z_{u,v}) \ge L$. 
For $(u,v)\in \mathcal{K}\setminus \overline{\mathcal{S}}$, the key sizes are determined by the constraints in Lemma~\ref{lemma2}.

Consider all sets $\mathcal{A}_{u,m,n}$. When $|\mathcal{A}_{u,m,n}| \le |\overline{\mathcal{S}}|-1$, the constraints in Lemma~\ref{lemma2} are automatically satisfied for the keys in $\mathcal{K}\setminus \overline{\mathcal{S}}$ due to the cardinality gap. 
Hence, it suffices to focus on the cases where $|\mathcal{A}_{u,m,n}| = |\overline{\mathcal{S}}|$.
In Example~\ref{ex2}, there are exactly three such sets, namely $\mathcal{A}_{1,8,21}$, $\mathcal{A}_{1,8,22}$, and $\mathcal{A}_{1,8,23}$. 
Applying Lemma~\ref{lemma2} to these sets yields the only nontrivial constraints:
\begin{eqnarray}
    \mathcal{A}_{1,8,21}: && H(Z_{1,4},Z_{3,1}\mid Z_{1,2},Z_{1,3},Z_{2,3})\geq L, \notag \\
    \mathcal{A}_{1,8,22}: && H(Z_{1,3},Z_{3,1}\mid Z_{1,2},Z_{1,4},Z_{2,3})\geq L, \notag \\
    \mathcal{A}_{1,8,23}: && H(Z_{1,3},Z_{1,4}\mid Z_{1,2},Z_{2,3},Z_{3,1})\geq L.  \label{eq:tt3}
\end{eqnarray}
These constraints completely characterize the feasible region of key assignments for this example.

Next, we reformulate the constraints in (\ref{eq:tt3}) as a linear program with variables $b_{u,v}$. 
To this end, we express the joint entropy in terms of incremental entropies under a fixed (lexicographic) ordering as follows:
\begin{align}
    &H(Z_{1,2}) = b_{1,2}L,\;
H(Z_{1,3}\mid Z_{1,2}) = b_{1,3}L,\;
H(Z_{1,4}\mid Z_{1,3},Z_{1,2}) = b_{1,4}L,\notag\\
&H(Z_{2,3}\mid Z_{1,2},Z_{1,3},Z_{1,4}) = b_{2,3}L,\;
H(Z_{3,1}\mid Z_{1,2},Z_{1,3},Z_{1,4},Z_{2,3}) = b_{3,1}L.
\end{align}
By normalizing the total entropy $H(Z_{1,2},Z_{1,3},Z_{1,4},Z_{2,3},Z_{3,1})$ by $L$ and applying the chain rule to (\ref{eq:tt3}), we obtain the following linear program:
 \begin{eqnarray}
&& \min~~  b_{1,2}+b_{1,3}+b_{1,4}+b_{2,3}+b_{3,1}   
\\
 \text{subject to} 
     &&  b_{1,4}+b_{3,1}\geq (H(Z_{1,4}|Z_{1,2},Z_{1,3},Z_{2,3})+H(Z_{3,1}|Z_{1,2},Z_{1,3},Z_{1,4},Z_{2,3}))/L\geq 1, \notag \\
      &&  b_{1,3}+b_{3,1}\geq (H(Z_{1,3}|Z_{1,2},Z_{1,4},Z_{2,3})+H(Z_{3,1}|Z_{1,2},Z_{1,3},Z_{1,4},Z_{2,3}))/L\geq 1, \notag \\
     &&  b_{1,3}+b_{1,4}\geq (H(Z_{1,3}|Z_{1,2},Z_{2,3},Z_{3,1})+H(Z_{1,4}|Z_{1,2},Z_{1,3},Z_{2,3},Z_{3,1}))/L\geq 1, \notag \\
    && b_{1,2}\geq 0, b_{1,3}\geq 0, b_{1,4}\geq 0,b_{2,3}\geq 0, b_{3,1}\geq 0.
    \label{eq:upplp222}
 \end{eqnarray}
It can be verified that one optimal solution is given by
$b^*_{1,2}=b^*_{2,3}=0$ and 
$b^*_{1,3}=b^*_{1,4}=b^*_{3,1}=1/2$, 
which yields $b^* = 3/2$.
Therefore, the source key rate $R_{Z_\Sigma} = e^* + b^* = 2 + 3/2 = 7/2$ is achievable.
The optimal LP solution specifies the required entropy allocation for each key variable in $\mathcal{K}_{[U]}\setminus\overline{\mathcal{S}}$, which guides the construction of an achievable scheme satisfying (\ref{correctness}) and (\ref{security}).

Let $q=5$ and operate over $\mathbb{F}_5$. 
We generate $7$ i.i.d. uniform random variables $N_1,\ldots,N_7$ and let $Z_\Sigma = (N_1,\ldots,N_7)$, so that the total key size is $L_\Sigma=7$. 
Set $L=2$, i.e., each message consists of two symbols $W_{u,v}=(W^{(1)}_{u,v},W^{(2)}_{u,v})$ for $(u,v)\in\mathcal{K}$.
Based on the optimal solution of (\ref{eq:upplp222}), the key entropies are given by
$H(Z_{1,1})=2$, $H(Z_{1,2})=0$, $H(Z_{1,3})=1$, $H(Z_{1,4})=1$, 
$H(Z_{2,1})=2$, $H(Z_{2,2})=2$, $H(Z_{2,3})=0$, and $H(Z_{3,1})=1$.
We now construct the individual keys as linear combinations of $(N_1,\ldots,N_7)$:
\begin{align}
 &Z_{1,1} = (- (N_1 + N_2 + N_3 + N_5 + N_7),- (N_1 + 2N_2 + N_4 + N_6 + 3N_7 )), \notag\\
 &Z_{1,2} = 0,\quad Z_{1,3} = N_1,\quad Z_{1,4} = N_2, \notag\\
 &Z_{2,1} = (N_3,N_4),\quad Z_{2,2} = (N_5,N_6),\quad Z_{2,3} = 0,\quad Z_{3,1} = N_7.\label{eq:ex3key}
\end{align}

Each User $(u,v)$ sends a message $X_{u,v}=W_{u,v}+Z_{u,v}$ to Server $u$ \af:
\begin{align}
& X_{1,1} = \left[
\begin{array}{c}
W^{(1)}_{1,1} - (N_1 + N_2 + N_3 + N_5 + N_7)\\
W^{(2)}_{1,1} - (N_1 + 2N_2 + N_4 + N_6 + 3N_7 )
\end{array}
\right],
X_{1,2} = \left[
\begin{array}{c}
W^{(1)}_{1,2}  \\
W^{(2)}_{1,2}  
\end{array}
\right],~
X_{1,3} = \left[
\begin{array}{c}
W^{(1)}_{1,3} + N_1 \\
W^{(2)}_{1,3} + N_1 
\end{array}
\right],\notag\\
& 
X_{1,4} = \left[
\begin{array}{c}
W^{(1)}_{1,4}+ N_2  \\
W^{(2)}_{1,4}+ 2N_2 
\end{array}
\right], ~
X_{2,1} = \left[
\begin{array}{c}
W^{(1)}_{2,1} + N_3 \\
W^{(2)}_{2,1} + N_4 
\end{array}
\right],~
X_{2,2} = \left[
\begin{array}{c}
W^{(1)}_{2,2} + N_5 \\
W^{(2)}_{2,2} + N_6 
\end{array}
\right],\notag\\
& 
X_{2,3} = \left[
\begin{array}{c}
W^{(1)}_{2,3}  \\
W^{(2)}_{2,3}  
\end{array}
\right],~
X_{3,1} = \left[
\begin{array}{c}
W^{(1)}_{3,1} + N_7  \\
W^{(2)}_{3,1} + 3N_7
\end{array}
\right].\label{eq:ex3messageX}
\end{align}
Upon receiving the messages from its associated users, Server $u$ computes 
$Y_u=\sum_{v\in [V_u]}X_{u,v}$ and sends it to the other servers. We have
\begin{align}
& Y_1=X_{1,1}+X_{1,2}+X_{1,3}+X_{1,4} = \left[
\begin{array}{c}
W^{(1)}_{1,1}+W^{(1)}_{1,2} +W^{(1)}_{1,3} +W^{(1)}_{1,4} -N_3 - N_5 - N_7 \\
W^{(2)}_{1,1}+ W^{(2)}_{1,2}+ W^{(2)}_{1,3}+ W^{(2)}_{1,4} - N_4 - N_6 - 3N_7 
\end{array}
\right],\notag
\\
& 
Y_2=X_{2,1}+X_{2,2}+X_{2,3} = \left[
\begin{array}{c}
W^{(1)}_{2,1}+W^{(1)}_{2,2}+W^{(1)}_{2,3} + N_3 +N_5\\
W^{(2)}_{2,1}+ W^{(2)}_{2,2}+W^{(2)}_{2,3} + N_4 + N_6 
\end{array}
\right],\notag\\
& 
Y_3=X_{3,1} = \left[
\begin{array}{c}
W^{(1)}_{3,1}  + N_7 \\
W^{(2)}_{3,1}+ 3N_7 
\end{array}
\right].\label{eq:ex3messageY}
\end{align}

\textbf{Correctness:} 
For any server $k \in [3]$, the total sum of all inputs can be recovered as 
$\sum_{v \in [V_k]} X_{k,v} + \sum_{u \in [3] \setminus \{k\}} Y_u
= \sum_{(u,v)\in\mathcal{K}} W_{u,v}.$
This follows from the fact that the aggregate of all transmitted messages equals the sum of the inputs plus the sum of all key variables, and the key construction in (\ref{eq:ex3key}) ensures that the sum of all key variables evaluates to zero.

\textbf{Security:}
To verify security, we consider the specific case $u=1, m=8, n=28$; other cases follow similarly and are deferred to the general proof in Section~\ref{pfsecurity}. We compute the conditional mutual information: 
\begin{align}
&I\Big(W_{1,1},W_{2,1},W_{2,2} ; X_{1,1},X_{1,2},X_{1,3},X_{1,4},Y_2,Y_3 \Big| \sum_{(u,v)\in \mathcal{K}}W_{u,v},  W_{1,3}, Z_{1,3}, W_{1,4}, Z_{1,4},W_{2,1}, Z_{2,1},W_{3,1}, Z_{3,1} \Big) \notag\\
=& H\Big( X_{1,1},X_{1,2},X_{1,3},X_{1,4},Y_2,Y_3 \Big| \sum_{(u,v)\in \mathcal{K}}W_{u,v},  W_{1,3}, Z_{1,3}, W_{1,4}, Z_{1,4},W_{2,1}, Z_{2,1},W_{3,1}, Z_{3,1} \Big)- \notag\\
&H\Big(X_{1,1},X_{1,2},X_{1,3},X_{1,4},Y_2,Y_3 \Big| \sum_{(u,v)\in \mathcal{K}}W_{u,v},  W_{1,1},W_{2,1},W_{2,2} , W_{1,3}, Z_{1,3}, W_{1,4}, Z_{1,4},W_{2,1}, Z_{2,1},W_{3,1}, Z_{3,1} \Big) \notag\\
=& H\Big( X_{1,1},X_{1,2},Y_2 \Big| \sum_{(u,v)\in \mathcal{K}}W_{u,v},  W_{1,3}, Z_{1,3}, W_{1,4}, Z_{1,4},W_{2,1}, Z_{2,1},W_{3,1}, Z_{3,1} \Big) \notag\\
&-H\Big(X_{1,1},X_{1,2},Y_2 \Big| \sum_{(u,v)\in \mathcal{K}}W_{u,v},  W_{1,1},W_{2,1},W_{2,2} , W_{1,3}, Z_{1,3}, W_{1,4}, Z_{1,4},W_{2,1}, Z_{2,1},W_{3,1}, Z_{3,1} \Big)\label{pfex2t1}\\
=& H\Big(W^{(1)}_{1,1}+W^{(1)}_{1,2} - N_5, W^{(2)}_{1,1}+ W^{(2)}_{1,2}  - N_6, W^{(1)}_{1,2}, W^{(2)}_{1,2}, W^{(1)}_{2,2}+W^{(1)}_{2,3} +N_5, W^{(2)}_{2,2}+W^{(2)}_{2,3} + N_6\notag\\
&\Big| \sum_{(u,v)\in \mathcal{K}}W_{u,v},  W_{1,3}, N_{1}, W_{1,4}, N_{2},W_{2,1}, N_{3},N_4,W_{3,1}, N_{7} \Big) \notag\\
&-H\Big(W^{(1)}_{1,2} - N_5,  W^{(2)}_{1,2}  - N_6, W^{(1)}_{1,2}, W^{(2)}_{1,2}, W^{(1)}_{2,3} +N_5, W^{(2)}_{2,3} + N_6\notag\\
&\Big| \sum_{(u,v)\in \mathcal{K}}W_{u,v},W_{1,1},W_{2,1},W_{2,2} ,  W_{1,3}, N_{1}, W_{1,4}, N_{2},W_{2,1}, N_{3},N_4,W_{3,1}, N_{7} \Big)\label{pfex2t2}\\
=& H\Big(W^{(1)}_{1,1} - N_5, W^{(2)}_{1,1}  - N_6, W^{(1)}_{1,2}, W^{(2)}_{1,2}, W^{(1)}_{2,2}+W^{(1)}_{2,3} +N_5, W^{(2)}_{2,2}+W^{(2)}_{2,3} + N_6\notag\\
&\Big| W^{(1)}_{1,1} + W^{(1)}_{1,2}+ W^{(1)}_{2,2}+W^{(1)}_{2,3},W^{(2)}_{1,1} + W^{(2)}_{1,2}+ W^{(2)}_{2,2}+W^{(2)}_{2,3} \Big) \notag\\
&-H\Big( N_5,  N_6, W^{(1)}_{1,2}, W^{(2)}_{1,2}, W^{(1)}_{2,3} , W^{(2)}_{2,3} \Big|  W^{(1)}_{1,2}+ W^{(1)}_{2,3}, W^{(2)}_{1,2}+ W^{(2)}_{2,3} \Big)\label{pfex2t3}\\
=& H\Big(W^{(1)}_{1,1} - N_5, W^{(2)}_{1,1}  - N_6, W^{(1)}_{1,2}, W^{(2)}_{1,2}, W^{(1)}_{2,2}+W^{(1)}_{2,3} +N_5, W^{(2)}_{2,2}+W^{(2)}_{2,3} + N_6,W^{(1)}_{1,1} + W^{(1)}_{1,2}+ W^{(1)}_{2,2}+W^{(1)}_{2,3},\notag\\
& W^{(2)}_{1,1} + W^{(2)}_{1,2}+ W^{(2)}_{2,2}+W^{(2)}_{2,3} \Big) - H\Big(W^{(1)}_{1,1} + W^{(1)}_{1,2}+ W^{(1)}_{2,2}+W^{(1)}_{2,3},W^{(2)}_{1,1} + W^{(2)}_{1,2}- W^{(2)}_{2,2}+W^{(2)}_{2,3} \Big)- \notag\\
&H\Big( N_5,  N_6, W^{(1)}_{1,2}, W^{(2)}_{1,2}, W^{(1)}_{2,3} , W^{(2)}_{2,3} ,  W^{(1)}_{1,2}+ W^{(1)}_{2,3}, W^{(2)}_{1,2}+ W^{(2)}_{2,3} \Big)+H\Big(W^{(1)}_{1,2}+ W^{(1)}_{2,3}, W^{(2)}_{1,2}+ W^{(2)}_{2,3}\Big)\\
=&6-2-6+2=0.
\end{align}
(\ref{pfex2t1}) holds since $X_{1,3}, X_{1,4}, Y_3$ are deterministic functions of $(W_{1,3}, Z_{1,3}, W_{1,4}, Z_{1,4}, W_{3,1}, Z_{3,1})$, which are already included in the conditioning set, and hence can be removed without affecting the conditional entropy. In~(\ref{pfex2t2}), we substitute the encoding relations $X_{u,v} = W_{u,v} + Z_{u,v}$ and $Y_u = \sum_{v \in [V_u]} X_{u,v}$, and rewrite all signals in terms of message symbols and independent noise variables, while removing redundant conditioned variables. Equation~(\ref{pfex2t3}) follows by observing that the aggregate sums $W^{(1)}_{1,1} + W^{(1)}_{1,2} + W^{(1)}_{2,2} + W^{(1)}_{2,3}$ and $W^{(2)}_{1,1} + W^{(2)}_{1,2} + W^{(2)}_{2,2} + W^{(2)}_{2,3}$ are determined by the conditioning set, and similarly for $W^{(1)}_{1,2}+ W^{(1)}_{2,3}$ and $W^{(2)}_{1,2}+ W^{(2)}_{2,3}$ after revealing additional messages. Using these relations, the entropy terms can be expressed via invertible linear transformations of independent components. Finally, since all message symbols $\{W_{u,v}\}$ are mutually independent and independent of the keys $\{Z_{u,v}\}$, each independent component contributes one unit of entropy, and a direct dimension counting yields $6 - 2 - 6 + 2 = 0$, which establishes the desired result.

The achieved \comm rates are $R_X = L_X / L=2/2 = 1$ and $R_Y = L_Y / L=2/2 = 1$. The achieved key rate is $R_{Z_\Sigma} = b_{Z_\Sigma}/L = 7/2 =2+3/2= e^* + b^*$, as desired.
\end{example}

We now extend the insights from Example~\ref{ex2} to the general setting. 
The key assignment and message aggregation principles illustrated in the example can be applied systematically to any number of servers and users, leading to an achievable scheme that meets the correctness and security requirements for arbitrary parameters.

\subsection{General Proof of Theorem \ref{thm2}}\label{thm2subsec2}

The lower bound has already been established in Subsection~\ref{convpfthm1}. 
Here, we focus on proving the achievability of the rate region $R_X \ge 1, \; R_Y \ge 1, \; R_{Z_\Sigma} \ge e^* + b^*$,
under the assumptions $e^* \le K-1$, $a^* = |\overline{\mathcal{S}}|$, and $|\mathcal{Q}| = K$.
To compute the optimal $b^*$, we first allocate a key of size $L$ to each input in the total security input set $\overline{\mathcal{S}}$. 
For the remaining inputs in $\mathcal{K}\setminus \overline{\mathcal{S}}$, additional keys are assigned in such a way that the constraints in Lemma~\ref{lemma2} are satisfied. 
In particular, for any triple $(u',m,n)$ with $|\mathcal{A}_{u',m,n}| = |\overline{\mathcal{S}}|$, the key assignments must satisfy
\begin{align}
H\Big(
\{Z_{u,v}\}_{(u,v) \in \mathcal{K} \setminus ((\mathcal{S}_m \cap \mathcal{K}_{\{u'\}}) \cup \mathcal{K}_{\mathcal{U}^{(m,n)}} \cup \mathcal{T}_n)}
\,\Big|\,
\{Z_{u,v}\}_{(u,v)\in\mathcal{T}_n}
\Big)
\ge L.
\label{eq:upp1}
\end{align}
This condition guarantees that each critical set $\mathcal{A}_{u',m,n}$ of size $|\overline{\mathcal{S}}|$ has sufficient residual key entropy to meet the correctness and security requirements.

Next, we reformulate the constraints in (\ref{eq:upp1}) as a linear program over variables $b_{u,v}$, defined by
\begin{equation}
b_{u,v} \triangleq \frac{1}{L} H\Big(Z_{u,v} \,\big|\, (Z_{i,j})_{(i,j)\in\mathcal{K}\setminus\overline{\mathcal{S}},\,(i,j)<(u,v)}\Big),
\quad \forall (u,v)\in \mathcal{K}\setminus \overline{\mathcal{S}}.
\label{assume_eq1}
\end{equation}
For any $u', m, n$, all inputs in $\mathcal{K}\setminus\overline{\mathcal{S}}$ may potentially belong to the colluding set $\mathcal{T}_n$. 
By normalizing the total entropy $H(\{Z_{u,v}\}_{(u,v)\in\mathcal{K}\setminus\overline{\mathcal{S}}})$ by $L$ and applying the chain rule in lexicographic order, we arrive at the following linear program:
 \begin{eqnarray}
&& \min  \sum_{(u,v)\in \mathcal{K}\setminus\overline{\mathcal{S}}} b_{u,v} \label{eq:uppmin}
\\
 \text{subject to} 
    && \sum_{(u,v)\in \mathcal{K}\setminus ((\mathcal{S}_m \cap \mathcal{K}_{\{u'\}}) \cup \mathcal{K}_{\mathcal{U}^{(m,n)}} \cup \mathcal{T}_n)} b_{u,v} 
    \geq 1,~~~\forall u',m,n ~\mbox{such that}~ |\mathcal{A}_{u',m,n}| = |\overline{\mathcal{S}}| \label{eq:upplp20}
    \\
    && ~~~~~~~~~~~b_{u,v}\geq 0, ~~~~~~~~~~~~~~~~~~~~~\forall (u,v)\in \mathcal{K}\setminus\overline{\mathcal{S}}.
    \label{eq:upplp22}
 \end{eqnarray}
Returning to the key assignment for inputs in $\mathcal{K}\setminus\overline{\mathcal{S}}$, 
the size of the key allocated to each input is determined by the optimal solution of the linear program (\ref{eq:uppmin})--(\ref{eq:upplp22}). 
Let the optimal solution be
\begin{eqnarray}
b_{u,v} = b_{u,v}^* = \frac{p_{u,v}}{\overline{q}}, 
\quad \forall (u,v)\in \mathcal{K}\setminus\overline{\mathcal{S}},
\end{eqnarray}
where $p_{u,v}$ and $\overline{q}$ are non-negative integers. 
Then, the total key contribution from $\mathcal{K}\setminus\overline{\mathcal{S}}$ is
\begin{eqnarray}
b^* = \sum_{(u,v)\in \mathcal{K}\setminus\overline{\mathcal{S}}} b_{u,v}^*
= \frac{\overline{p}}{\overline{q}},
\label{eq:pq111}
\end{eqnarray}
where $\overline{p}=\sum_{(u,v)\in \mathcal{K}\setminus\overline{\mathcal{S}}} p_{u,v}$.



Choose a finite field $\mathbb{F}_q$ such that
$q > (e^*+ b^*)\overline{q} \binom{K \overline{q}}{(e^*+ b^*)\overline{q}}$. 
Consider $\overline{p}+e^*\overline{q}$ i.i.d. uniform variables
$Z_\Sigma={\bf N}=(N_1,\ldots,N_{\overline{p}+e^*\overline{q}})^\top \in \mathbb{F}_q^{(\overline{p}+e^*\overline{q})\times 1}$. To satisfy the correctness constraint, we set the keys as random linear combinations of ${\bf N}$, and then adjust a single matrix ${\bf H}_{i,j}$ so that the overall sum of keys is zero.
Set the key variables as
\begin{align}
Z_{u,v} &= {\bf F}_{u,v} {\bf G}_{u,v} {\bf N}, \quad (u,v) \in \mathcal{K}\setminus \overline{\mathcal{S}},\notag\\
Z_{u,v} &= {\bf H}_{u,v} {\bf N}, \quad (u,v) \in \overline{\mathcal{S}}, \label{schpft1}
\end{align}
where the entries of ${\bf F}_{u,v}\in \mathbb{F}_q^{\overline{q}\times p_{u,v}}$, 
${\bf G}_{u,v}\in \mathbb{F}_q^{p_{u,v}\times (\overline{p}+e^*\overline{q})}$, 
${\bf H}_{u,v}\in \mathbb{F}_q^{\overline{q}\times (\overline{p}+e^*\overline{q})}$ 
are drawn uniformly and independently.\footnote{The existence of matrices $\mathbf{F}_{u,v}, \mathbf{G}_{u,v}, \mathbf{H}_{u,v}$ satisfying the required linear independence properties follows from the Schwartz--Zippel lemma, provided $q$ is larger than the total degree of the determinant polynomial of any relevant square submatrix. The sufficient condition above ensures such an instantiation exists.}
For a chosen $(i,j)\in \overline{\mathcal{S}}$, let
\begin{eqnarray}
{\bf H}_{i,j} = - \Bigg( \sum_{(u,v)\in \mathcal{K}\setminus \overline{\mathcal{S}}} {\bf F}_{u,v} {\bf G}_{u,v} 
+ \sum_{(u,v)\in \overline{\mathcal{S}}\setminus \{(i,j)\}} {\bf H}_{u,v} \Bigg). \label{eq:uppttt22}
\end{eqnarray}
Consequently,
\begin{eqnarray}
\sum_{(u,v)\in \mathcal{K}\setminus \overline{\mathcal{S}}} {\bf F}_{u,v} {\bf G}_{u,v} + \sum_{(u,v)\in \overline{\mathcal{S}}} {\bf H}_{u,v} = {\bf 0} \quad \Rightarrow \quad \sum_{(u,v)\in\mathcal{K}} Z_{u,v} = 0. \label{eq:achthm3t33311}
\end{eqnarray}

By a similar argument using the Schwartz-Zippel lemma, we can guarantee the existence of an instantiation of the matrices 
${\bf H}_{u,v}, {\bf F}_{u,v}, {\bf G}_{u,v}$ such that the following linear independence property holds (here, ${\bf F}_{u,v}^1$ denotes the first $p_{u,v}$ rows of ${\bf F}_{u,v}$):
\begin{align}
    &\mbox{\tit{The rows of}} {\bf F}_{u,v}^1 {\bf G}_{u,v}, {\bf H}_{u,v}, 
\mbox{ \tit{and}  $\sum_{v\in[V_u]} ({\bf F}_{u,v}^1 {\bf G}_{u,v} + {\bf H}_{u,v}),~ u \in \mathcal{U}^{(m,n)},$} ~\mbox{\tit{are linearly independent}} \notag\\
&\mbox{\tit{whenever their total number of rows does not exceed} } (e^* + b^* )\overline{q}, \label{eq:sz32} 
\end{align}
which ensures that the key contributions cannot be linearly combined to leak additional information to colluding servers.
Finally, setting $L=\overline{q}$, we represent each message as a vector
${W}_{u,v}=(W^{(1)}_{u,v},\ldots,W^{(\overline{q})}_{u,v})^\top \in \mathbb{F}_{q}^{\overline{q} \times 1}$, 
and encode the transmitted messages using the assigned keys:
\begin{eqnarray}
     {X}_{u,v} &=& {W}_{u,v} + {Z}_{u,v}, \quad \forall (u,v) \in \mathcal{K}_{[U]},\label{eq:uppm3X}\\
     {Y}_{u}&=&\sum_{v\in [V_u]}X_{u,v}, \quad \forall u \in [U].\label{eq:uppm3Y}
\end{eqnarray}



\textbf{Correctness:} 
For any server $k\in[U]$, correctness follows from
\begin{eqnarray}
    \sum_{v\in[V_k]} X_{k,v} + \sum_{u\in[U]\setminus\{k\}} Y_u &\overset{(\ref{eq:uppm3Y})}{=}& \sum_{v\in[V_k]} X_{k,v} + \sum_{u\in[U]\setminus\{k\}} \sum_{v\in[V_u]} X_{u,v} \\
&=& \sum_{(u,v)\in\mathcal{K}} X_{u,v} \notag\\
&\overset{(\ref{eq:uppm3X})(\ref{eq:achthm3t33311})}{=}
& \sum_{(u,v)\in\mathcal{K}} W_{u,v}.
\end{eqnarray}

The achieved communication rates are
$R_X = L_X/L = \overline{q}/\overline{q} = 1$ and
$R_Y = L_Y/L = \overline{q}/\overline{q} = 1$.
The achieved key rate is
$R_{Z_\Sigma}
= b_{Z_\Sigma}/L 
= (\overline{p}+e^*\overline{q})/\overline{q} 
\overset{(\ref{eq:pq111})}{=} 
e^* + \sum_{(u,v)\in\mathcal{K}\setminus\overline{\mathcal{S}}} b_{u,v}^* 
= e^* + b^* .$
The security proof is presented in a unified manner in Section~\ref{pfsecurity}.


\section{Proof of Security} \label{pfsecurity}
In this section, we show that the schemes in Subsections~\ref{sec:ach1}, \ref{sec:ach2}, \ref{sec:ach3}, and Section~\ref{thm2subsec2} satisfy the security constraint.

We first consider the scheme in Subsection~\ref{sec:ach1} for the case $e^*=K$. 
For any $k,m,n$ satisfying
$|((\mathcal{S}_m \cap \mathcal{K}_{\{k\}})\cup \mathcal{T}_n)\cap\overline{\mathcal{S}}|
+|\mathcal{U}^{(m,n)}\setminus\{k\}| = K$ and 
$|((\mathcal{S}_m \cap \mathcal{K}_{\{k\}}) 
\cup \mathcal{K}_{\mathcal{U}^{(m,n)}} 
\cup \mathcal{T}_n)\cap\overline{\mathcal{S}}| = K$, 
we have
\begin{align}
    &I\Big(\{Y_u\}_{u\in[U]\setminus \{k\}},\{X_{k,v}\}_{v\in [V_k]};\{W_{u,v} \}_{(u,v )\in \mathcal{S}_m} \Big|\sum_{(u,v)\in\mathcal{K}}W_{u,v}, \{W_{u,v}, Z_{u,v } \}_{(u,v )\in \mathcal{T}_n} \Big)\notag\\
=&H\Big(\{Y_u\}_{u\in[U]\setminus \{k\}},\{X_{k,v}\}_{v\in [V_k]}
\Big|\sum_{(u,v)\in\mathcal{K}}W_{u,v}, \{W_{u,v}, Z_{u,v } \}_{(u,v )\in \mathcal{T}_n} \Big)\notag\\
&-H\Big(\{Y_u\}_{u\in[U]\setminus \{k\}},\{X_{k,v}\}_{v\in [V_k]}
\Big| \{W_{u,v} \}_{(u,v )\in \mathcal{S}_m},\sum_{(u,v)\in\mathcal{K}}W_{u,v}, \{W_{u,v}, Z_{u,v } \}_{(u,v )\in \mathcal{T}_n} \Big)\\
 \overset{\substack{(\ref{eq:t1t3})\\(\ref{eq:t1t4})}}{=}&H\Big(\Big\{\sum_{v\in [V_u]}\{W_{u,v} + Z_{u,v}\}\Big\}_{u\in[U]\setminus \{k\}},\{W_{k,v} + Z_{k,v}\}_{v\in [V_u]}
\Big|\sum_{(u,v)\in\mathcal{K}}W_{u,v}, \{W_{u,v}, Z_{u,v } \}_{(u,v )\in \mathcal{T}_n} \Big)\notag\\
&-H\Big(\Big\{\sum_{v\in [V_u]}\{W_{u,v} + Z_{u,v}\}\Big\}_{u\in[U]\setminus \{k\}},\{W_{k,v} + Z_{k,v}\}_{v\in [V_u]}
\Big| \{W_{u,v} \}_{(u,v )\in \mathcal{K}}, \{Z_{u,v } \}_{(u,v )\in \mathcal{T}_n} \Big)\label{secpfs1t1}\\
=&H\Big(\Big\{\sum_{v\in [V_u]}\{W_{u,v} + Z_{u,v}\}\Big\}_{u\in[U]\setminus (\{k\}\cup \mathcal{F}^{(m,n)})},\{W_{u,v} + Z_{u,v}\}_{(u,v)\in \mathcal{K}_{\{k\}}\setminus \mathcal{T}_n}
\Big|\notag\\
& \sum_{(u,v)\in\mathcal{K}\setminus (\mathcal{K}_{\mathcal{F}^{(m,n)}}\cup (\mathcal{K}_{\{k\}}\cap \mathcal{T}_n))}W_{u,v}, \{W_{u,v},Z_{u,v } \}_{(u,v )\in \mathcal{T}_n} \Big)\notag\\
& -H\Big(\Big\{\sum_{v\in [V_u]}Z_{u,v}\Big\}_{u\in[U]\setminus (\{k\}\cup \mathcal{F}^{(m,n)})},\{ Z_{u,v}\}_{(u,v)\in \mathcal{K}_{\{k\}}\setminus \mathcal{T}_n}
\Big| \{W_{u,v} \}_{(u,v )\in \mathcal{K}}, \{Z_{u,v } \}_{(u,v )\in \mathcal{T}_n} \Big)\label{secpfs1t2}\\
\leq&H\Big(\Big\{\sum_{v\in [V_u]}\{W_{u,v} + Z_{u,v}\}\Big\}_{u\in[U]\setminus (\{k\}\cup \mathcal{F}^{(m,n)})},\{W_{u,v} + Z_{u,v}\}_{(u,v)\in \mathcal{K}_{\{k\}}\setminus \mathcal{T}_n}
\Big|\sum_{(u,v)\in\mathcal{K}\setminus (\mathcal{K}_{\mathcal{F}^{(m,n)}}\cup (\mathcal{K}_{\{k\}}\cap \mathcal{T}_n))}W_{u,v}\Big)\notag\\
& -H\Big(\Big\{\sum_{v\in [V_u]}Z_{u,v}\Big\}_{u\in[U]\setminus (\{k\}\cup \mathcal{F}^{(m,n)})},\{ Z_{u,v}\}_{(u,v)\in \mathcal{K}_{\{k\}}\setminus \mathcal{T}_n}, \{Z_{u,v } \}_{(u,v )\in \mathcal{T}_n} \Big)+H\Big(\{Z_{u,v } \}_{(u,v )\in \mathcal{T}_n} \Big)\label{secpfs1t3}\\
=&(|[U]\setminus (\{k\}\cup \mathcal{F}^{(m,n)})|+|\mathcal{K}_{\{k\}}\setminus \mathcal{T}_n|-1)-(K-1)+|\mathcal{T}_n|\\
=&|\mathcal{U}^{(m,n)}\setminus \{k\}|+|\mathcal{S}_m\cap\mathcal{K}_{\{k\}}|+|\mathcal{T}_n\cap \overline{\mathcal{S}}|-K\label{secpfs1t4}\\
=&|\mathcal{U}^{(m,n)}\setminus \{k\}|+|(\mathcal{S}_m\cap\mathcal{K}_{\{k\}})\cup\mathcal{T}_n\cap \overline{\mathcal{S}}|-K\\
=&0.\label{secpfs1t5}
\end{align}

We next justify each step in the above derivation.
In (\ref{secpfs1t1}), the second entropy term follows since 
$\mathcal{S}_m \cup \mathcal{T}_n = \mathcal{K}$, 
which implies that all message variables are revealed in the conditioning.
In (\ref{secpfs1t2}), the reduction of the first term follows from the fact that
$\mathcal{K}_{\mathcal{F}^{(m,n)}} 
\cup (\mathcal{K}_{\{k\}} \cap \mathcal{T}_n) \subseteq \mathcal{T}_n$,
and hence the corresponding message variables are already known.
In (\ref{secpfs1t3}), the second entropy term equals $K-1$ since the total number of involved key variables is
$|[U]\setminus (\{k\}\cup \mathcal{F}^{(m,n)})|
+|\mathcal{K}_{\{k\}}\setminus \mathcal{T}_n|
+|\mathcal{T}_n| = K,$
and one linear constraint reduces the entropy by one.
In (\ref{secpfs1t4}), we use
$|\mathcal{K}_{\{k\}}\setminus \mathcal{T}_n|
=|\mathcal{S}_m\cap\mathcal{K}_{\{k\}}|$,
which follows from
$|(\mathcal{S}_m \cup \mathcal{T}_n)\cap \mathcal{K}_{\{k\}}|
=|\mathcal{K}_{\{k\}}|$.
Moreover,
$|\mathcal{T}_n| = |\mathcal{T}_n \cap \overline{\mathcal{S}}|$
since $|\overline{\mathcal{S}}|=K$.
Finally, (\ref{secpfs1t5}) follows from the condition
$|\mathcal{U}^{(m,n)}\setminus \{k\}|
+|(\mathcal{S}_m\cap\mathcal{K}_{\{k\}})
\cup(\mathcal{T}_n\cap \overline{\mathcal{S}})| = e^* = K,$
which implies that the mutual information equals zero.




Second, we consider the schemes in Subsections~\ref{sec:ach2}, \ref{sec:ach3}, and Section~\ref{thm2subsec2} for the case $e^* \leq K-1$, i.e., $|\mathcal{S}_m \cup \mathcal{T}_n| \leq K-1$. 
For any $k,m,n$ satisfying
$|((\mathcal{S}_m \cap \mathcal{K}_{\{k\}})\cup \mathcal{T}_n)\cap\overline{\mathcal{S}}|
+|\mathcal{U}^{(m,n)}\setminus\{k\}| \leq K-1,\\
|((\mathcal{S}_m \cap \mathcal{K}_{\{k\}})
\cup \mathcal{K}_{\mathcal{U}^{(m,n)}}
\cup \mathcal{T}_n)\cap\overline{\mathcal{S}}| \leq K-1,$
we follow an argument identical to that in Subsection~\ref{sec:ach1} and conclude that the mutual information is zero.
\begin{align}
    &I\Big(\{Y_u\}_{u\in[U]\setminus \{k\}},\{X_{k,v}\}_{v\in [V_k]};\{W_{u,v} \}_{(u,v )\in \mathcal{S}_m} \Big|\sum_{(u,v)\in\mathcal{K}}W_{u,v}, \{W_{u,v}, Z_{u,v } \}_{(u,v )\in \mathcal{T}_n} \Big)\notag\\
=&H\Big(\{Y_u\}_{u\in[U]\setminus \{k\}},\{X_{k,v}\}_{v\in [V_k]}
\Big|\sum_{(u,v)\in\mathcal{K}}W_{u,v}, \{W_{u,v}, Z_{u,v } \}_{(u,v )\in \mathcal{T}_n} \Big)\notag\\
&-H\Big(\{Y_u\}_{u\in[U]\setminus \{k\}},\{X_{k,v}\}_{v\in [V_k]}
\Big| \{W_{u,v} \}_{(u,v )\in \mathcal{S}_m},\sum_{(u,v)\in\mathcal{K}}W_{u,v}, \{W_{u,v}, Z_{u,v } \}_{(u,v )\in \mathcal{T}_n} \Big)\\
 =&H\Big(\Big\{\sum_{v\in [V_u]}\{W_{u,v} + Z_{u,v}\}\Big\}_{u\in[U]\setminus \{k\}},\{W_{k,v} + Z_{k,v}\}_{v\in [V_u]}
\Big|\sum_{(u,v)\in\mathcal{K}}W_{u,v}, \{W_{u,v}, Z_{u,v } \}_{(u,v )\in \mathcal{T}_n} \Big)-\notag\\
&H\Big(\Big\{\sum_{v\in [V_u]}\{W_{u,v} + Z_{u,v}\}\Big\}_{u\in[U]\setminus \{k\}},\{W_{k,v} + Z_{k,v}\}_{v\in [V_u]}
\Big| \{W_{u,v} \}_{(u,v )\in \mathcal{S}_m},\sum_{(u,v)\in\mathcal{K}}W_{u,v}, \{W_{u,v}, Z_{u,v } \}_{(u,v )\in \mathcal{T}_n} \Big)\label{secpfs2t1}\\
=&H\Big(\Big\{\sum_{v\in [V_u]}\{W_{u,v} + Z_{u,v}\}\Big\}_{u\in \mathcal{U}^{(m,n)}\setminus \{k\}},\{W_{u,v} + Z_{u,v}\}_{(u,v)\in (\mathcal{S}_m\cap\mathcal{K}_{\{k\}})\setminus \mathcal{T}_n}
\Big| \sum_{(u,v)\in\mathcal{K}}W_{u,v}, \{W_{u,v}, Z_{u,v } \}_{(u,v )\in \mathcal{T}_n} \Big)\notag\\
&+H\Big(\Big\{\sum_{v\in [V_u]}\{W_{u,v} + Z_{u,v}\}\Big\}_{u\in[U]\setminus (\mathcal{U}^{(m,n)}\cup\{k\})},\{W_{u,v} + Z_{u,v}\}_{(u,v)\in \mathcal{K}_{\{k\}}\setminus( \mathcal{S}_m\cup\mathcal{T}_n)}
\Big|\{W_{u,v}, Z_{u,v } \}_{(u,v )\in \mathcal{T}_n},\notag\\
&\sum_{(u,v)\in\mathcal{K}}W_{u,v}, \Big\{\sum_{v\in [V_u]}\{W_{u,v} + Z_{u,v}\}\Big\}_{u\in \mathcal{U}^{(m,n)}\setminus \{k\}},\{W_{u,v} + Z_{u,v}\}_{(u,v)\in (\mathcal{S}_m\cap\mathcal{K}_{\{k\}})\setminus \mathcal{T}_n} \Big)\notag\\
&-H\Big(\Big\{\sum_{v\in [V_u]}\{W_{u,v} + Z_{u,v}\}\Big\}_{u\in \mathcal{U}^{(m,n)}\setminus \{k\}},\{W_{u,v} + Z_{u,v}\}_{(u,v)\in (\mathcal{S}_m\cap\mathcal{K}_{\{k\}})\setminus \mathcal{T}_n}
\Big|\notag\\
& \{W_{u,v} \}_{(u,v )\in \mathcal{S}_m},\sum_{(u,v)\in\mathcal{K}}W_{u,v}, \{W_{u,v}, Z_{u,v } \}_{(u,v )\in \mathcal{T}_n} \Big)\notag\\
&-H\Big(\Big\{\sum_{v\in [V_u]}\{W_{u,v} + Z_{u,v}\}\Big\}_{u\in[U]\setminus (\mathcal{U}^{(m,n)}\cup\{k\})},\{W_{u,v} + Z_{u,v}\}_{(u,v)\in \mathcal{K}_{\{k\}}\setminus( \mathcal{S}_m\cup\mathcal{T}_n)}
\Big|\{W_{u,v}, Z_{u,v } \}_{(u,v )\in \mathcal{T}_n},\notag\\
&  \{W_{u,v} \}_{(u,v )\in \mathcal{S}_m},\sum_{(u,v)\in\mathcal{K}}W_{u,v}, \Big\{\sum_{v\in [V_u]}\{W_{u,v} + Z_{u,v}\}\Big\}_{u\in \mathcal{U}^{(m,n)}\setminus \{k\}},\{W_{u,v} + Z_{u,v}\}_{(u,v)\in (\mathcal{S}_m\cap\mathcal{K}_{\{k\}})\setminus \mathcal{T}_n} \Big)\\
\leq &H\Big(\Big\{\sum_{v\in [V_u]}\{W_{u,v} + Z_{u,v}\}\Big\}_{u\in \mathcal{U}^{(m,n)}\setminus \{k\}},\{W_{u,v} + Z_{u,v}\}_{(u,v)\in (\mathcal{S}_m\cap\mathcal{K}_{\{k\}})\setminus \mathcal{T}_n} \Big)\notag \\
&+H\Big(\Big\{\sum_{v\in [V_u]}\{W_{u,v} + Z_{u,v}\}\Big\}_{u\in[U]\setminus (\mathcal{U}^{(m,n)}\cup\{k\}\cup\mathcal{F}^{(m,n)})},\{W_{u,v} + Z_{u,v}\}_{(u,v)\in \mathcal{K}_{\{k\}}\setminus( \mathcal{S}_m\cup\mathcal{T}_n)}\Big|\notag\\
&\sum_{(u,v)\in \mathcal{K}_{[U]}\setminus(\mathcal{K}_{(\mathcal{U}^{(m,n)}\cup\mathcal{F}^{(m,n)})\setminus \{k\}}\cup((\mathcal{S}_m\cup\mathcal{T}_n)\cap\mathcal{K}_{\{k\}}))}(W_{u,v}+Z_{u,v}) \Big)\notag\\
&-H\Big(\Big\{\sum_{v\in [V_u]}Z_{u,v}\Big\}_{u\in \mathcal{U}^{(m,n)}\setminus \{k\}},\{ Z_{u,v}\}_{(u,v)\in (\mathcal{S}_m\cap\mathcal{K}_{\{k\}})\setminus \mathcal{T}_n}
\Big| \{W_{u,v} \}_{(u,v )\in \mathcal{K}},\{ Z_{u,v } \}_{(u,v )\in \mathcal{T}_n} \Big)\notag\\
&-H\Big(\Big\{\sum_{v\in [V_u]}\{W_{u,v}\}\Big\}_{u\in[U]\setminus (\mathcal{U}^{(m,n)}\cup\{k\}\cup\mathcal{F}^{(m,n)})},\{ W_{u,v}\}_{(u,v)\in \mathcal{K}_{\{k\}}\setminus( \mathcal{S}_m\cup\mathcal{T}_n)}
\Big|\{ Z_{u,v } \}_{(u,v )\in \mathcal{K}},\notag\\
&  \{W_{u,v} \}_{(u,v )\in (\mathcal{S}_m\cup\mathcal{T}_n)},\sum_{(u,v)\in \mathcal{K}_{[U]}\setminus(\mathcal{K}_{(\mathcal{U}^{(m,n)}\cup\mathcal{F}^{(m,n)})\setminus \{k\}}\cup((\mathcal{S}_m\cup\mathcal{T}_n)\cap\mathcal{K}_{\{k\}}))}W_{u,v} \Big)\label{secpfs2t2}\\
=&(|\mathcal{U}^{(m,n)}\setminus \{k\}|+|(\mathcal{S}_m\cap\mathcal{K}_{\{k\}})\setminus \mathcal{T}_n|)L+ (|[U]\setminus(\mathcal{U}^{(m,n)}\cup\{k\}\cup\mathcal{F}^{(m,n)})|+|\mathcal{K}_{\{k\}}\setminus( \mathcal{S}_m\cup\mathcal{T}_n)|-1)L\notag\\
&-(|\mathcal{U}^{(m,n)}\setminus \{k\}|+|(\mathcal{S}_m\cap\mathcal{K}_{\{k\}})\setminus \mathcal{T}_n|)L- (|[U]\setminus(\mathcal{U}^{(m,n)}\cup\{k\}\cup\mathcal{F}^{(m,n)})|+|\mathcal{K}_{\{k\}}\setminus( \mathcal{S}_m\cup\mathcal{T}_n)|-1)L\\
=&0.
\end{align}
In (\ref{secpfs2t1}), we substitute the expressions of the transmitted messages,
namely $X_{u,v}=W_{u,v}+Z_{u,v}$ and $Y_u=\sum_{v\in[V_u]}X_{u,v}$, as defined in
(\ref{eq:achm1X1}), (\ref{eq:achm1Y11}), (\ref{eq:achm2X}), (\ref{eq:achm2Y}),
(\ref{eq:uppm3X}), and (\ref{eq:uppm3Y}).
In (\ref{secpfs2t2}), the inequalities follow from the fact that conditioning reduces entropy,
together with the independence between the message symbols and the key variables.
The second term is obtained using the set inclusions
$(\mathcal{S}_m\cup\mathcal{T}_n)\cap\mathcal{K}_{\{k\}}
\subseteq(\mathcal{S}_m\cap\mathcal{K}_{\{k\}})\cup\mathcal{T}_n$ and
$(\mathcal{U}^{(m,n)}\cup\mathcal{F}^{(m,n)})\setminus\{k\}
\subseteq(\mathcal{U}^{(m,n)}\setminus\{k\})\cup\mathcal{F}^{(m,n)}$.
Moreover, since $\mathcal{K}_{\mathcal{F}^{(m,n)}}\subseteq\mathcal{T}_n$, we have
$\mathcal{K}_{(\mathcal{U}^{(m,n)}\cup\mathcal{F}^{(m,n)})\setminus\{k\}}
\cup((\mathcal{S}_m\cup\mathcal{T}_n)\cap\mathcal{K}_{\{k\}})
\subseteq
\mathcal{K}_{\mathcal{U}^{(m,n)}\setminus\{k\}}
\cup((\mathcal{S}_m\cap\mathcal{K}_{\{k\}})\cup\mathcal{T}_n).$
Similarly, the fourth term follows since
$\mathcal{K}_{(\mathcal{U}^{(m,n)}\cup\mathcal{F}^{(m,n)})\setminus\{k\}}
\cup((\mathcal{S}_m\cup\mathcal{T}_n)\cap\mathcal{K}_{\{k\}})
\subseteq
\mathcal{S}_m\cup\mathcal{T}_n.$
The third term evaluates to
$(|\mathcal{U}^{(m,n)}\setminus\{k\}|+
|(\mathcal{S}_m\cap\mathcal{K}_{\{k\}})\setminus\mathcal{T}_n|)L$,
which will be established in the next lemma.
Finally, the positive and negative entropy terms cancel exactly, implying that
the entropy contributed by the keys matches the remaining uncertainty of the message components, and hence the mutual information equals zero.

To complete the proof, it remains to show the following lemma.

\begin{lemma}\label{lemma:independent}
\tit{Under condition when
Under the achievable schemes constructed in Subsections \ref{sec:ach1}, \ref{sec:ach2}, \ref{sec:ach3}, and Section~\ref{thm2subsec2}, we have}
\begin{eqnarray}
&&H\Big(\Big\{\sum_{v\in [V_u]}Z_{u,v}\Big\}_{u\in \mathcal{U}^{(m,n)}\setminus \{k\}},\{ Z_{u,v}\}_{(u,v)\in (\mathcal{S}_m\cap\mathcal{K}_{\{k\}})\setminus \mathcal{T}_n}
\Big| \{ Z_{u,v } \}_{(u,v )\in \mathcal{T}_n} \Big) \notag\\
&=&(|\mathcal{U}^{(m,n)}\setminus \{k\}|+|(\mathcal{S}_m\cap\mathcal{K}_{\{k\}})\setminus \mathcal{T}_n|)L.
\label{eq:independent}
\end{eqnarray}
\end{lemma}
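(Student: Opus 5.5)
The plan is to exploit the linearity of all the constructions: every individual key is $Z_{u,v}=\mathbf{M}_{u,v}\mathbf{N}$ for a coefficient matrix $\mathbf{M}_{u,v}$ ($\mathbf{h}_{u,v}$, $\mathbf{F}_{u,v}\mathbf{G}_{u,v}$ or $\mathbf{H}_{u,v}$, stacked over the $L$ symbols). With $\mathbf{N}$ i.i.d.\ uniform over $\mathbb{F}_q$, any conditional entropy of linear functions of $\mathbf{N}$ is a rank difference. Writing $\mathbf{A}$ for the matrix stacking $\sum_{v\in[V_u]}\mathbf{M}_{u,v}$, $u\in\mathcal{U}^{(m,n)}\setminus\{k\}$, together with $\mathbf{M}_{u,v}$, $(u,v)\in(\mathcal{S}_m\cap\mathcal{K}_{\{k\}})\setminus\mathcal{T}_n$, and $\mathbf{B}$ for the stack of $\mathbf{M}_{u,v}$, $(u,v)\in\mathcal{T}_n$, the left side of (\ref{eq:independent}) equals $\mathrm{rank}[\mathbf{A};\mathbf{B}]-\mathrm{rank}\,\mathbf{B}$ (in units of $L$). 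So the claim is that the rows of $\mathbf{A}$ have full row rank and are linearly independent of the row space of $\mathbf{B}$. The upper bound ``$\le$'' is immediate, since $\mathbf{A}$ has exactly the stated number of rows; all the work is in the lower bound.

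First I will reduce $\mathbf{B}$. Keys of users in $\mathcal{T}_n\setminus\overline{\mathcal{S}}$ are zero in Subsections~\ref{sec:ach2}--\ref{sec:ach3}, except possibly $(u',v')$ in the latter case. In Section~\ref{thm2subsec2} they contribute at most $\sum b_{u,v}\overline{q}$ rows. Hence $\mathbf{B}$ can be replaced by a basis consisting of rows of $\mathbf{M}_{u,v}$ for $(u,v)\in\mathcal{T}_n\cap\overline{\mathcal{S}}$ plus at most the rows allotted to $\mathcal{T}_n\setminus\overline{\mathcal{S}}$. Next I will count the total number of rows in $[\mathbf{A};\mathbf{B}]$. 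It is
\[
\big(|((\mathcal{S}_m\cap\mathcal{K}_{\{k\}})\cup\mathcal{T}_n)\cap\overline{\mathcal{S}}|+|\mathcal{U}^{(m,n)}\setminus\{k\}|\big)L+(\text{rows from }\mathcal{T}_n\setminus\overline{\mathcal{S}}),
\]
which by the definition of $e^*$ is at most $e^*L$ in Subsections~\ref{sec:ach2}, \ref{sec:ach3}, and at most $(e^*+b^*)\overline{q}$ in Section~\ref{thm2subsec2}. Since the row count does not exceed the threshold, the generic-independence properties (\ref{eq:sz111}), (\ref{eq:sz2}) and (\ref{eq:sz32}) apply and give full rank of $[\mathbf{A};\mathbf{B}]$, which is exactly the required equality. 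For Subsection~\ref{sec:ach1} I will argue directly. The keys $N_{u,v}$ are independent except for the single relation $\sum Z=0$. The sets appearing in $\mathbf{A}$ and $\mathbf{B}$ are disjoint groups of users that do not cover all of $\mathcal{K}$, because $k$'s users outside $\mathcal{S}_m\cup\mathcal{T}_n$ or the servers in $\mathcal{F}^{(m,n)}$ are excluded whenever the count is below $K$. Hence these sums are distinct linear forms in a set of $K-1$ independent variables, and they are independent.

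The main obstacle is justifying that the Schwartz--Zippel independence claims really cover the mixed families that arise. Specifically, a server sum $\sum_{v}\mathbf{M}_{u,v}$ with $u\in\mathcal{U}^{(m,n)}$ may share users with $\mathcal{T}_n$, so its rows are not ``fresh''. In addition, the one vector defined as minus the sum of the others, (\ref{eq:c121}), (\ref{eq:c22}), (\ref{eq:uppttt22}), could make a determinant identically zero. To handle this I will show that each relevant determinant is a nonzero polynomial by exhibiting one explicit assignment. The plan is to let the involved matrices be distinct standard basis blocks, to give the constrained user block a term absent from the others, and to set all remaining free blocks so that the sum relation is consistent. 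This uses that the number of involved users is at most $K-1$, so at least one user block lies outside the selection and can absorb the sum constraint. Once nonvanishing is established, the field-size choice guarantees a common good realization, and the rank identity yields the lemma.
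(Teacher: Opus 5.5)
Your overall route is the paper's: write the left side as $\mathrm{rank}[\mathbf{A};\mathbf{B}]-\mathrm{rank}\,\mathbf{B}$, count rows, and invoke (\ref{eq:sz111}), (\ref{eq:sz2}), (\ref{eq:sz32}). However, the row count is wrong in Subsection~\ref{sec:ach3}.

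When $(u',v')\in\mathcal{T}_n$, its key adds $L$ rows to $\mathbf{B}$. The total is then $(e'+1)L$, where $e'=|((\mathcal{S}_m\cap\mathcal{K}_{\{k\}})\cup\mathcal{T}_n)\cap\overline{\mathcal{S}}|+|\mathcal{U}^{(m,n)}\setminus\{k\}|$. The definition of $e^*$ only bounds this by $(e^*+1)L$. Since $Z_\Sigma$ has only $e^*L$ symbols, $(e^*+1)L$ such rows are always dependent, so no Schwartz--Zippel argument can rescue the step. The bound has to come from the choice of $(u',v')$, which is exactly the paper's first sub-case:
\begin{itemize}
\item $a^*=|\overline{\mathcal{S}}|$ forces $\mathcal{Q}_1\neq\emptyset$, so $\mathcal{Q}=\mathcal{Q}_1\cup\mathcal{Q}_2$.
\item Since $(u',v')\in\mathcal{T}_n\subseteq(\mathcal{S}_m\cap\mathcal{K}_{\{k\}})\cup\mathcal{K}_{\mathcal{U}^{(m,n)}}\cup\mathcal{T}_n$ and $(u',v')\notin\mathcal{Q}_2$, the triple $(k,m,n)$ cannot attain $e^*$.
\item Hence $e'\le e^*-1$ and the count is at most $e^*L$. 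Likewise $(u',v')\notin\mathcal{Q}_1$ gives $|\mathcal{A}_{k,m,n}|\le|\overline{\mathcal{S}}|-1$.
\end{itemize}
Your proposal never uses the hypothesis $|\mathcal{Q}|\le K-1$ or the way $(u',v')$ was chosen. Without them the claimed equality can fail.

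The same omission undermines your non-degeneracy plan. After subtracting the $\mathcal{T}_n$ parts, the forms are sums over disjoint groups: single users, and $\mathcal{K}_{\{u\}}\setminus\mathcal{T}_n$ for $u\in\mathcal{U}^{(m,n)}\setminus\{k\}$. If these groups cover the whole \emph{support} of the keys, the zero-sum relation makes their sum vanish and the rank drops, whatever the row count. So the absorbing block must lie inside the support and carry enough rank; ``at most $K-1$ involved users'' is not the relevant condition. What each scheme actually needs is:
\begin{itemize}
\item Subsection~\ref{sec:ach2}: users outside $\overline{\mathcal{S}}$ have zero keys, so you need $|\mathcal{A}_{k,m,n}|\le|\overline{\mathcal{S}}|-1$, i.e., $a^*\le|\overline{\mathcal{S}}|-1$.
\item Subsection~\ref{sec:ach3}: you need $(u',v')\notin\mathcal{Q}_1$.
\item Section~\ref{thm2subsec2}: a single outside user with $p_{u,v}<\overline{q}$ cannot absorb $\overline{q}$ relations. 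What is needed is exactly the LP constraint (\ref{eq:upplp20}).
\end{itemize}

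In Subsection~\ref{sec:ach1}, your inference ``count below $K$ $\Rightarrow$ the groups do not cover $\mathcal{K}$'' is also false. A server in $\mathcal{U}^{(m,n)}\setminus\{k\}$ with several users outside $\mathcal{T}_n$ contributes only one row. For example, take $V_1=1,V_2=2,V_3=1$, with $\bm{\mathcal{S}}$ generated by $\{(1,1),(2,1),(2,2)\}$ and $\bm{\mathcal{T}}$ by $\{(2,2),(3,1)\}$; this configuration has $\overline{\mathcal{S}}=\mathcal{K}$ and $e^*=K=4$. For $k=1$, $\mathcal{S}_m=\{(1,1),(2,1),(2,2)\}$, $\mathcal{T}_n=\{(3,1)\}$, the count is $3<K$. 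Yet $H(Z_{1,1},Z_{2,1}+Z_{2,2}\mid Z_{3,1})=L$, not $2L$. The hypothesis that makes the claim true, and the one the paper imposes, is $|\mathcal{A}_{k,m,n}|\le K-1$.
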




We prove each case separately. First, consider the scheme in Subsection~\ref{sec:ach1}. 
For any $k,m,n$ such that 
$|((\mathcal{S}_m \cap \mathcal{K}_{\{k\}})\cup \mathcal{T}_n)\cap\overline{\mathcal{S}}| 
+|\mathcal{U}^{(m,n)}\setminus\{k\}| \leq K-1$ and 
$|((\mathcal{S}_m \cap \mathcal{K}_{\{k\}}) 
\cup \mathcal{K}_{\mathcal{U}^{(m,n)}} 
\cup \mathcal{T}_n)\cap\overline{\mathcal{S}}| \leq K-1$, 
we establish (\ref{eq:independent}).
\begin{align}
&H\Big(\Big\{\sum_{v\in [V_u]}Z_{u,v}\Big\}_{u\in \mathcal{U}^{(m,n)}\setminus \{k\}},\{ Z_{u,v}\}_{(u,v)\in (\mathcal{S}_m\cap\mathcal{K}_{\{k\}})\setminus \mathcal{T}_n}
\Big| \{ Z_{u,v } \}_{(u,v )\in \mathcal{T}_n} \Big) \notag\\
=&H\Big(\Big\{\sum_{v\in [V_u]}Z_{u,v}\Big\}_{u\in \mathcal{U}^{(m,n)}\setminus \{k\}},\{ Z_{u,v}\}_{(u,v)\in (\mathcal{S}_m\cap\mathcal{K}_{\{k\}})\setminus \mathcal{T}_n}, \{ Z_{u,v } \}_{(u,v )\in \mathcal{T}_n} \Big)-H(\{ Z_{u,v } \}_{(u,v )\in \mathcal{T}_n}) \\
=&H\Big(\Big\{\sum_{v\in [V_u]}Z_{u,v}\Big\}_{u\in \mathcal{U}^{(m,n)}\setminus \{k\}},\{ Z_{u,v}\}_{(u,v)\in (\mathcal{S}_m\cap\mathcal{K}_{\{k\}})\cup \mathcal{T}_n}\Big)-H(\{ Z_{u,v } \}_{(u,v )\in \mathcal{T}_n}) \\
=&H\Big(\Big\{\sum_{v\in [V_u]}Z_{u,v}\Big\}_{u\in \mathcal{U}^{(m,n)}\setminus \{k\}},\{ Z_{u,v}\}_{(u,v)\in ((\mathcal{S}_m\cap\mathcal{K}_{\{k\}})\cup \mathcal{T}_n)\cap \overline{\mathcal{S}}}\Big)-H(\{ Z_{u,v } \}_{(u,v )\in \mathcal{T}_n\cap \overline{\mathcal{S}}}) \label{lem7pfs1t1}\\
=&(|\mathcal{U}^{(m,n)}\setminus \{k\}|+|((\mathcal{S}_m\cap\mathcal{K}_{\{k\}})\cup \mathcal{T}_n)\cap \overline{\mathcal{S}}|)L- |\mathcal{T}_n\cap \overline{\mathcal{S}}|L\label{lem7pfs1t2}\\
=&(|\mathcal{U}^{(m,n)}\setminus \{k\}|+|(\mathcal{S}_m\cap\mathcal{K}_{\{k\}})\setminus \mathcal{T}_n|)L,
\end{align}
where (\ref{lem7pfs1t1}) follows from the fact that $|\overline{\mathcal{S}}|=K$ (see (\ref{eq:t1t1})). 
To obtain (\ref{lem7pfs1t2}), we use the condition 
$|((\mathcal{S}_m \cap \mathcal{K}_{\{k\}})\cup \mathcal{T}_n)\cap\overline{\mathcal{S}}| 
+|\mathcal{U}^{(m,n)}\setminus\{k\}| \leq K-1$, 
which ensures that any set of at most $K-1$ linear combinations of the $K$ keys is linearly independent (see (\ref{eq:t1t1})).
Note that the case $e^*=K$ has already been treated previously, and here we focus on the regime $e^*\leq K-1$. 
The last equality follows from the fact that $(\mathcal{S}_m \cap \mathcal{K}_{\{k\}}) \subseteq \overline{\mathcal{S}}$.

Second, consider the scheme in Section~\ref{sec:ach2}. We establish (\ref{eq:independent}) as follows:
\begin{align}
&H\Big(\Big\{\sum_{v\in [V_u]}Z_{u,v}\Big\}_{u\in \mathcal{U}^{(m,n)}\setminus \{k\}},\{ Z_{u,v}\}_{(u,v)\in (\mathcal{S}_m\cap\mathcal{K}_{\{k\}})\setminus \mathcal{T}_n}
\Big| \{ Z_{u,v } \}_{(u,v )\in \mathcal{T}_n} \Big) \notag\\
=&H\Big(\Big\{\sum_{v\in [V_u]}Z_{u,v}\Big\}_{u\in \mathcal{U}^{(m,n)}\setminus \{k\}},\{ Z_{u,v}\}_{(u,v)\in (\mathcal{S}_m\cap\mathcal{K}_{\{k\}})\setminus \mathcal{T}_n}, \{ Z_{u,v } \}_{(u,v )\in \mathcal{T}_n} \Big)-H(\{ Z_{u,v } \}_{(u,v )\in \mathcal{T}_n}) \\
=&H\Big(\Big\{\sum_{v\in [V_u]}Z_{u,v}\Big\}_{u\in \mathcal{U}^{(m,n)}\setminus \{k\}},\{ Z_{u,v}\}_{(u,v)\in (\mathcal{S}_m\cap\mathcal{K}_{\{k\}})\cup \mathcal{T}_n}\Big)-H(\{ Z_{u,v } \}_{(u,v )\in \mathcal{T}_n}) \\
=&H\Big(\Big\{\sum_{v\in [V_u]}Z_{u,v}\Big\}_{u\in \mathcal{U}^{(m,n)}\setminus \{k\}},\{ Z_{u,v}\}_{(u,v)\in ((\mathcal{S}_m\cap\mathcal{K}_{\{k\}})\cup \mathcal{T}_n)\cap \overline{\mathcal{S}}}\Big)-H(\{ Z_{u,v } \}_{(u,v )\in \mathcal{T}_n\cap \overline{\mathcal{S}}}) \label{lem7pfs2t1}\\
=&(|\mathcal{U}^{(m,n)}\setminus \{k\}|+|((\mathcal{S}_m\cap\mathcal{K}_{\{k\}})\cup \mathcal{T}_n)\cap \overline{\mathcal{S}}|)L- |\mathcal{T}_n\cap \overline{\mathcal{S}}|L\label{lem7pfs2t2}\\
=&(|\mathcal{U}^{(m,n)}\setminus \{k\}|+|(\mathcal{S}_m\cap\mathcal{K}_{\{k\}})\setminus \mathcal{T}_n|)L, 
\end{align}
where (\ref{lem7pfs2t1}) follows from the fact that 
$Z_{u,v}=0$ for $(u,v)\in \mathcal{K}_{[U]}\setminus\overline{\mathcal{S}}$ (see (\ref{eq:c111})).
To obtain (\ref{lem7pfs2t2}), we use the conditions 
$|\mathcal{A}_{k,m,n}|=
|((\mathcal{S}_m \cap \mathcal{K}_{\{k\}})
\cup \mathcal{K}_{\mathcal{U}^{(m,n)}}
\cup \mathcal{T}_n)\cap\overline{\mathcal{S}}|
\le K-1$
and 
$|((\mathcal{S}_m \cap \mathcal{K}_{\{k\}})\cup \mathcal{T}_n)\cap\overline{\mathcal{S}}|
+|\mathcal{U}^{(m,n)}\setminus\{k\}|
\le e^*$, 
which ensure that any set of at most $K-1$ vectors ${\bf h}_{u,v}$ is linearly independent (see (\ref{eq:sz111})).
The last equality follows from the fact that 
$(\mathcal{S}_m \cap \mathcal{K}_{\{k\}}) \subseteq \overline{\mathcal{S}}$.

Third, consider the scheme in Section~\ref{sec:ach3}. 
We establish (\ref{eq:independent}) by considering two sub-cases.
For the first sub-case, $(u',v') \in \mathcal{T}_n$. Then
$\big|(\mathcal{S}_m \cap \mathcal{K}_{\{k\}}) 
\cup \mathcal{K}_{\mathcal{U}^{(m,n)}} 
\cup \mathcal{T}_n\cap\overline{\mathcal{S}}\big| 
< |\overline{\mathcal{S}}|$
and
$|((\mathcal{S}_m \cap \mathcal{K}_{\{k\}})\cup \mathcal{T}_n)\cap\overline{\mathcal{S}}|
+|\mathcal{U}^{(m,n)}\setminus\{k\}|\leq e^*-1.$
Otherwise, either condition would imply that $(u',v') \in \mathcal{Q}$ according to Definitions~\ref{def:totset1} and~\ref{def:uni2}, which contradicts the choice of $(u',v') \notin \mathcal{Q}$ in Section~\ref{sec:ach2}.
\begin{align}
&H\Big(\Big\{\sum_{v\in [V_u]}Z_{u,v}\Big\}_{u\in \mathcal{U}^{(m,n)}\setminus \{k\}},\{ Z_{u,v}\}_{(u,v)\in (\mathcal{S}_m\cap\mathcal{K}_{\{k\}})\setminus \mathcal{T}_n}
\Big| \{ Z_{u,v } \}_{(u,v )\in \mathcal{T}_n} \Big) \notag\\
=&H\Big(\Big\{\sum_{v\in [V_u]}Z_{u,v}\Big\}_{u\in \mathcal{U}^{(m,n)}\setminus \{k\}},\{ Z_{u,v}\}_{(u,v)\in (\mathcal{S}_m\cap\mathcal{K}_{\{k\}})\setminus \mathcal{T}_n}, \{ Z_{u,v } \}_{(u,v )\in \mathcal{T}_n} \Big)-H(\{ Z_{u,v } \}_{(u,v )\in \mathcal{T}_n}) \\
=&H\Big(\Big\{\sum_{v\in [V_u]}Z_{u,v}\Big\}_{u\in \mathcal{U}^{(m,n)}\setminus \{k\}},\{ Z_{u,v}\}_{(u,v)\in (\mathcal{S}_m\cap\mathcal{K}_{\{k\}})\cup \mathcal{T}_n}\Big)-H(\{ Z_{u,v } \}_{(u,v )\in \mathcal{T}_n}) \\
=&H\Big(\Big\{\sum_{v\in [V_u]}Z_{u,v}\Big\}_{u\in \mathcal{U}^{(m,n)}\setminus \{k\}},\{ Z_{u,v}\}_{(u,v)\in ((\mathcal{S}_m\cap\mathcal{K}_{\{k\}})\cup \mathcal{T}_n)\cap (\overline{\mathcal{S}} \cup \{(u',v')\})}\Big)-H(\{ Z_{u,v } \}_{(u,v )\in \mathcal{T}_n\cap (\overline{\mathcal{S}} \cup \{(u',v')\})}) \\
=&(|\mathcal{U}^{(m,n)}\setminus \{k\}|+|((\mathcal{S}_m\cap\mathcal{K}_{\{k\}})\cup \mathcal{T}_n)\cap (\overline{\mathcal{S}} \cup \{(u',v')\})|)L- |\mathcal{T}_n\cap (\overline{\mathcal{S}} \cup \{(u',v')\})|L\label{lem7pfs3t1}\\
=&(|\mathcal{U}^{(m,n)}\setminus \{k\}|+|(\mathcal{S}_m\cap\mathcal{K}_{\{k\}})\setminus \mathcal{T}_n|)L.
\end{align}
Since 
$|((\mathcal{S}_m \cap \mathcal{K}_{\{k\}})\cup \mathcal{T}_n)\cap\overline{\mathcal{S}}|
+|\mathcal{U}^{(m,n)}\setminus\{k\}|\leq e^*-1$,
the total number of involved vectors is at most $e^*$.
Hence, by (\ref{eq:sz2}), these vectors are linearly independent, and the entropy equals the cardinality.
The last equality follows from the fact that $(u',v') \in \mathcal{T}_n$.

For the second sub-case, $(u',v') \notin \mathcal{T}_n$. 
Note that $(u',v') \notin \overline{\mathcal{S}}$ and $\mathcal{S}_m \subseteq \overline{\mathcal{S}}$. We have
\begin{align}
&H\Big(\Big\{\sum_{v\in [V_u]}Z_{u,v}\Big\}_{u\in \mathcal{U}^{(m,n)}\setminus \{k\}},\{ Z_{u,v}\}_{(u,v)\in (\mathcal{S}_m\cap\mathcal{K}_{\{k\}})\setminus \mathcal{T}_n}
\Big| \{ Z_{u,v } \}_{(u,v )\in \mathcal{T}_n} \Big) \notag\\
=&H\Big(\Big\{\sum_{v\in [V_u]}Z_{u,v}\Big\}_{u\in \mathcal{U}^{(m,n)}\setminus \{k\}},\{ Z_{u,v}\}_{(u,v)\in (\mathcal{S}_m\cap\mathcal{K}_{\{k\}})\setminus \mathcal{T}_n}, \{ Z_{u,v } \}_{(u,v )\in \mathcal{T}_n} \Big)-H(\{ Z_{u,v } \}_{(u,v )\in \mathcal{T}_n}) \\
=&H\Big(\Big\{\sum_{v\in [V_u]}Z_{u,v}\Big\}_{u\in \mathcal{U}^{(m,n)}\setminus \{k\}},\{ Z_{u,v}\}_{(u,v)\in (\mathcal{S}_m\cap\mathcal{K}_{\{k\}})\cup \mathcal{T}_n}\Big)-H(\{ Z_{u,v } \}_{(u,v )\in \mathcal{T}_n}) \\
=&H\Big(\Big\{\sum_{v\in [V_u]}Z_{u,v}\Big\}_{u\in \mathcal{U}^{(m,n)}\setminus \{k\}},\{ Z_{u,v}\}_{(u,v)\in ((\mathcal{S}_m\cap\mathcal{K}_{\{k\}})\cup \mathcal{T}_n)\cap (\overline{\mathcal{S}} \cup \{(u',v')\})}\Big)-H(\{ Z_{u,v } \}_{(u,v )\in \mathcal{T}_n\cap (\overline{\mathcal{S}} \cup \{(u',v')\})}) \\
=&(|\mathcal{U}^{(m,n)}\setminus \{k\}|+|((\mathcal{S}_m\cap\mathcal{K}_{\{k\}})\cup \mathcal{T}_n)\cap \overline{\mathcal{S}} |)L- |\mathcal{T}_n\cap \overline{\mathcal{S}}|L\label{lem7pfs3t2}\\
=&(|\mathcal{U}^{(m,n)}\setminus \{k\}|+|(\mathcal{S}_m\cap\mathcal{K}_{\{k\}})\setminus \mathcal{T}_n|)L.
\end{align}
Since 
$|((\mathcal{S}_m \cap \mathcal{K}_{\{k\}})\cup \mathcal{T}_n)\cap\overline{\mathcal{S}}|
+|\mathcal{U}^{(m,n)}\setminus\{k\}|\leq e^*$,
the corresponding vectors ${\bf h}_{u,v}$ are linearly independent by (\ref{eq:sz2}).
Moreover, $(u',v') \notin \mathcal{T}_n$ implies that it does not contribute to the conditioned set, leading to (\ref{lem7pfs3t2}).

Finally, consider the scheme in Section \ref{thm2subsec2}. 
We prove (\ref{eq:independent}) as follows:
\begin{align}
&H\Big(\Big\{\sum_{v\in [V_u]}Z_{u,v}\Big\}_{u\in \mathcal{U}^{(m,n)}\setminus \{k\}},\{ Z_{u,v}\}_{(u,v)\in (\mathcal{S}_m\cap\mathcal{K}_{\{k\}})\setminus \mathcal{T}_n}
\Big| \{ Z_{u,v } \}_{(u,v )\in \mathcal{T}_n} \Big) \notag\\
=&H\Big(\Big\{\sum_{v\in [V_u]}Z_{u,v}\Big\}_{u\in \mathcal{U}^{(m,n)}\setminus \{k\}},\{ Z_{u,v}\}_{(u,v)\in (\mathcal{S}_m\cap\mathcal{K}_{\{k\}})\setminus \mathcal{T}_n}, \{ Z_{u,v } \}_{(u,v )\in \mathcal{T}_n} \Big)-H(\{ Z_{u,v } \}_{(u,v )\in \mathcal{T}_n}) \\
=&H\Big(\Big\{\sum_{v\in [V_u]}Z_{u,v}\Big\}_{u\in \mathcal{U}^{(m,n)}\setminus \{k\}},\{ Z_{u,v}\}_{(u,v)\in (\mathcal{S}_m\cap\mathcal{K}_{\{k\}})\cup \mathcal{T}_n}\Big)-H(\{ Z_{u,v } \}_{(u,v )\in \mathcal{T}_n}) \\
=&H\Big(\Big\{\sum_{v\in [V_u]}Z_{u,v}\Big\}_{u\in \mathcal{U}^{(m,n)}\setminus \{k\}},\{ Z_{u,v}\}_{(u,v)\in (((\mathcal{S}_m\cap\mathcal{K}_{\{k\}})\cup \mathcal{T}_n)\cap \overline{\mathcal{S}})\cup (\mathcal{T}_n \setminus \overline{\mathcal{S}})}\Big)-H(\{ Z_{u,v } \}_{(u,v )\in (\mathcal{T}_n\cap \overline{\mathcal{S}})\cup (\mathcal{T}_n \setminus \overline{\mathcal{S}})}) \\
=&\Big(|\mathcal{U}^{(m,n)}\setminus \{k\}|\overline{q}+|((\mathcal{S}_m\cap\mathcal{K}_{\{k\}})\cup \mathcal{T}_n)\cap \overline{\mathcal{S}}|\overline{q} + \sum_{(u,v) \in \mathcal{T}_n \setminus \overline{\mathcal{S}}} p_{u,v} \Big)- \Big(|\mathcal{T}_n\cap \overline{\mathcal{S}}|\overline{q} + \sum_{(u,v) \in \mathcal{T}_n \setminus \overline{\mathcal{S}}} p_{u,v} \Big)\label{lem7pfs4t1}\\
=&|\mathcal{U}^{(m,n)}\setminus \{k\}|\overline{q}+|(\mathcal{S}_m\cap\mathcal{K}_{\{k\}})\setminus \mathcal{T}_n|\overline{q}\\
=&(|\mathcal{U}^{(m,n)}\setminus \{k\}|+|(\mathcal{S}_m\cap\mathcal{K}_{\{k\}})\setminus \mathcal{T}_n|)L, 
\end{align}
where $L=\overline q$ for the scheme in Section \ref{thm2subsec2}, which is used in the last step.
To justify (\ref{lem7pfs4t1}), we note that 
$H((Z_{u,v})_{(u,v)\in\mathcal T_n\setminus\overline{\mathcal S}})$ 
is determined by the first $p_{u,v}$ rows, since 
${\bf F}_{u,v}{\bf G}_{u,v}$ has rank $p_{u,v}$ (see (\ref{schpft1})).
It remains to verify that the first term in (\ref{lem7pfs4t1}) does not exceed $(e^*+b^*)\overline q$. 
Indeed, we have
$|\mathcal{U}^{(m,n)}\setminus \{k\}|\overline q
+|((\mathcal S_m\cap\mathcal K_{\{k\}})\cup\mathcal T_n)\cap\overline{\mathcal S}|\overline q
+\sum_{(u,v)\in\mathcal T_n\setminus\overline{\mathcal S}} p_{u,v} 
\le\; e^*\overline q 
+ \overline q \sum_{(u,v)\in\mathcal K\setminus\overline{\mathcal S}} b_{u,v}^*
\overset{(\ref{eq:pq111})}{=}(e^*+b^*)\overline q.$


\section{Conclusion} \label{conclu}
In this paper, we studied the fundamental limits of multi-server secure aggregation under arbitrary collusion and heterogeneous security constraints. For a general two-hop network, we characterized the optimal communication rates for all parameter regimes and established tight information-theoretic bounds on the minimum key rate for most cases. In particular, we showed that the optimal key rate is determined by $e^*$, the maximum number of inputs that must be simultaneously protected under critical server--colluder configurations. For the remaining regime, we derived a general converse bound and proposed a linear-program-based achievable scheme, yielding a bounded-gap characterization.
Our results provide a unified information-theoretic framework that strictly generalizes existing secure aggregation models and reveals the fundamental role of the interaction between topology, collusion, and heterogeneous security requirements in determining communication and randomness costs.

An important open problem is to move beyond the product structure imposed in this work, where security must hold for all pairs $(\mathcal{S}_m,\mathcal{T}_n)$. A more general individual-pair model, where only a prescribed subset of such pairs is enforced, appears significantly more challenging and is not captured by current techniques.
More broadly, extending the model to incorporate practical features in federated learning, such as user dropout, user selection, and groupwise key structures, as well as relaxing the security constraints, may lead to new insights on the trade-offs among security, efficiency, and robustness.

\bibliographystyle{IEEEtran}
\bibliography{references_secagg.bib}
\end{document}

%% file: macros.tex
\long\def\comment#1{}

\newtheorem{example}{Example}
\newtheorem{theorem}{Theorem}
\newtheorem{lemma}{Lemma}
\newtheorem{corollary}{Corollary}
\newtheorem{proposition}{Proposition}
\newtheorem{remark}{Remark}
\makeatletter
\newcommand{\subalign}[1]{%
  \vcenter{%
    \Let@ \restore@math@cr \default@tag
    \baselineskip\fontdimen10 \scriptfont\tw@
    \advance\baselineskip\fontdimen12 \scriptfont\tw@
    \lineskip\thr@@\fontdimen8 \scriptfont\thr@@
    \lineskiplimit\lineskip
    \ialign{\hfil$\m@th\scriptstyle##$&$\m@th\scriptstyle{}##$\crcr
      #1\crcr
    }%
  }
}
\newcommand{\thickhline}{%
    \noalign {\ifnum 0=`}\fi \hrule height 1pt
    \futurelet \reserved@a \@xhline
}
\newcolumntype{"}{@{\hskip\tabcolsep\vrule width 1pt\hskip\tabcolsep}}

\makeatother

\newfont{\bbb}{msbm10 scaled 700}

\newfont{\bb}{msbm10 scaled 1100}

\def \rzsigma{{R_{Z_{\Sigma}}}}

\def \rx{{R_X}} 
\def \ry{{R_Y}}

\newcommand{\af}{as follows\xspace}

\newcommand{\eg}{e.g.}

\newcommand{\secagg}{secure aggregation\xspace}

\newcommand{\indep}{independent\xspace}

\newcommand{\indiv}{individual\xspace}

\newcommand{\comm}{communication\xspace}
\newcommand{\Comm}{Communication\xspace}

\newcommand{\Ac}{{\cal A}}
\newcommand{\Bc}{{\cal B}}

\newcommand{\Pc}{{\cal P}}

\newcommand{\Sc}{{\cal S}}

\newcommand{\Vc}{{\cal V}}

\newcommand{\eqdef}{\stackrel{\Delta}{=}}

\newcommand{\be}{\begin{equation}}
\newcommand{\ee}{\end{equation}}
\newcommand{\bea}{\begin{eqnarray}}
\newcommand{\eea}{\end{eqnarray}}

\let\tbf\textbf
\let\tit\textit

\let\trm\textrm

%% file: author_TIT.tex
\author{
Zhou Li,~\IEEEmembership{Member,~IEEE},
Xiang~Zhang,~\IEEEmembership{Member,~IEEE}, 
Jiguang He,~\IEEEmembership{Senior~Member,~IEEE},
and Giuseppe Caire,~\IEEEmembership{Fellow,~IEEE}


\thanks{Z. Li is with the Guangxi Key Laboratory of Multimedia Communications and Network Technology, 
Guangxi University, Nanning 530004, China (e-mail: lizhou@gxu.edu.cn).}

\thanks{X. Zhang and G. Caire are with the Department of Electrical Engineering and Computer Science, Technical University of Berlin, 10623 Berlin, Germany (e-mail: \{xiang.zhang, caire\}@tu-berlin.de).
}

\thanks{Jiguang He is with the School of Computing and Information Technology, the Dongguan Key Laboratory for Intelligence and Information Technology, and the Great Bay Institute for Advanced Study, Great Bay University, Dongguan 523000, China (e-mail: jiguang.he@gbu.edu.cn).}





}

%% file: references_secagg.bib
@STRING{isit        = {Proc. {IEEE} Int. Symp. on Inform. Theory (ISIT)}}

@STRING{itw         = {Proc. {IEEE} Inform. Theory Workshop (ITW)}}

@inproceedings{zhao2024secure,
  title={Secure Summation with User Selection and Collusion},
  author={Zhao, Yizhou and Sun, Hua},
  booktitle={2024 IEEE Information Theory Workshop (ITW)},
  pages={175--180},
  year={2024},
  organization={IEEE}
}

@article{li2025weakly,
  title={Weakly secure summation with colluding users},
  author={Li, Zhou and Zhao, Yizhou and Sun, Hua},
  journal={IEEE Transactions on Information Theory},
  year={2025},
  publisher={IEEE}
}

@article{zhang2025fundamental,
  title={Fundamental Limits of Hierarchical Secure Aggregation with Cyclic User Association},
  author={Zhang, Xiang and Li, Zhou and Wan, Kai and Sun, Hua and Ji, Mingyue and Caire, Giuseppe},
  journal={arXiv preprint arXiv:2503.04564},
  year={2025}
}

@misc{yuan2025vectorlinearsecureaggregation,
      title={Vector Linear Secure Aggregation}, 
      author={Xihang Yuan and Hua Sun},
      year={2025},
      eprint={2502.09817},
      archivePrefix={arXiv},
      primaryClass={cs.IT},
      url={https://arxiv.org/abs/2502.09817}, 
}

@article{wan2024information,
  title={On the information theoretic secure aggregation with uncoded groupwise keys},
  author={Wan, Kai and Yao, Xin and Sun, Hua and Ji, Mingyue and Caire, Giuseppe},
  journal={IEEE Transactions on Information Theory},
  year={2024},
  publisher={IEEE}
}

@article{jahani2023swiftagg+,
  title={SwiftAgg+: Achieving asymptotically optimal communication loads in secure aggregation for federated learning},
  author={Jahani-Nezhad, Tayyebeh and Maddah-Ali, Mohammad Ali and Li, Songze and Caire, Giuseppe},
  journal={IEEE Journal on Selected Areas in Communications},
  volume={41},
  number={4},
  pages={977--989},
  year={2023},
  publisher={IEEE}
}

@inproceedings{jahani2022swiftagg,
  title={Swiftagg: Communication-efficient and dropout-resistant secure aggregation for federated learning with worst-case security guarantees},
  author={Jahani-Nezhad, Tayyebeh and Maddah-Ali, Mohammad Ali and Li, Songze and Caire, Giuseppe},
  booktitle={2022 IEEE International Symposium on Information Theory (ISIT)},
  pages={103--108},
  year={2022},
  organization={IEEE}
}

@article{bonawitz2016practical,
  title={Practical secure aggregation for federated learning on user-held data},
  author={Bonawitz, Keith and Ivanov, Vladimir and Kreuter, Ben and Marcedone, Antonio and McMahan, H Brendan and Patel, Sarvar and Ramage, Daniel and Segal, Aaron and Seth, Karn},
  journal={arXiv preprint arXiv:1611.04482},
  year={2016}
}

@article{geiping2020inverting,
  title={Inverting gradients-how easy is it to break privacy in federated learning?},
  author={Geiping, Jonas and Bauermeister, Hartmut and Dr{\"o}ge, Hannah and Moeller, Michael},
  journal={Advances in neural information processing systems},
  volume={33},
  pages={16937--16947},
  year={2020}
}

@article{yang2018applied,
  title={Applied federated learning: Improving google keyboard query suggestions},
  author={Yang, Timothy and Andrew, Galen and Eichner, Hubert and Sun, Haicheng and Li, Wei and Kong, Nicholas and Ramage, Daniel and Beaufays, Fran{\c{c}}oise},
  journal={arXiv preprint arXiv:1812.02903},
  year={2018}
}

@inproceedings{karakocc2021secure,
  title={Secure aggregation against malicious users},
  author={Karako{\c{c}}, Ferhat and {\"O}nen, Melek and Bilgin, Zeki},
  booktitle={Proceedings of the 26th ACM Symposium on Access Control Models and Technologies},
  pages={115--124},
  year={2021}
}

@article{sun2023secure,
  title={Secure Aggregation with an Oblivious Server},
  author={Sun, Hua},
  journal={arXiv preprint arXiv:2307.13474},
  year={2023}
}

@inproceedings{zhao2023optimal,
  title={The Optimal Rate of MDS Variable Generation},
  author={Zhao, Yizhou and Sun, Hua},
  booktitle={2023 IEEE International Symposium on Information Theory (ISIT)},
  pages={832--837},
  year={2023},
  organization={IEEE}
}

@article{zhao2022mds,
  title={MDS Variable Generation and Secure Summation with User Selection},
  author={Zhao, Yizhou and Sun, Hua},
  journal={arXiv preprint arXiv:2211.01220},
  year={2022}
}

@article{zhang2025secure,
  title={On Secure Aggregation with Uncoded Groupwise Keys Against User Dropouts and User Collusion},
  author={Zhang, Ziting and Liu, Jiayu and Wan, Kai and Sun, Hua and Ji, Mingyue and Caire, Giuseppe},
  journal={IEEE Transactions on Information Theory},
  year={2025},
  publisher={IEEE}
}

@article{zhao2023secure,
  title={Secure Summation: Capacity Region, Groupwise Key, and Feasibility},
  author={Zhao, Yizhou and Sun, Hua},
  journal={IEEE Transactions on Information Theory},
  year={2023},
  publisher={IEEE}
}

@article{wan2024capacity,
  title={The Capacity Region of Information Theoretic Secure Aggregation with Uncoded Groupwise Keys},
  author={Wan, Kai and Sun, Hua and Ji, Mingyue and Mi, Tiebin and Caire, Giuseppe},
  journal={IEEE Transactions on Information Theory},
  year={2024},
  publisher={IEEE}
}

@article{so2022lightsecagg,
  title={Lightsecagg: a lightweight and versatile design for secure aggregation in federated learning},
  author={So, Jinhyun and He, Chaoyang and Yang, Chien-Sheng and Li, Songze and Yu, Qian and E Ali, Ramy and Guler, Basak and Avestimehr, Salman},
  journal={Proceedings of Machine Learning and Systems},
  volume={4},
  pages={694--720},
  year={2022}
}

@article{kairouz2021advances,
  title={Advances and open problems in federated learning},
  author={Kairouz, Peter and McMahan, H Brendan and Avent, Brendan and Bellet, Aur{\'e}lien and Bennis, Mehdi and Bhagoji, Arjun Nitin and Bonawitz, Kallista and Charles, Zachary and Cormode, Graham and Cummings, Rachel and others},
  journal={Foundations and trends{\textregistered} in machine learning},
  volume={14},
  number={1--2},
  pages={1--210},
  year={2021},
  publisher={Now Publishers, Inc.}
}

@inproceedings{bonawitz2017practical,
  title={Practical secure aggregation for privacy-preserving machine learning},
  author={Bonawitz, Keith and Ivanov, Vladimir and Kreuter, Ben and Marcedone, Antonio and McMahan, H Brendan and Patel, Sarvar and Ramage, Daniel and Segal, Aaron and Seth, Karn},
  booktitle={proceedings of the 2017 ACM SIGSAC Conference on Computer and Communications Security},
  pages={1175--1191},
  year={2017}
}

@article{konecny2016federated,
  title={Federated learning: Strategies for improving communication efficiency},
  author={Konecn{\`y}, Jakub and McMahan, H Brendan and Yu, Felix X and Richt{\'a}rik, Peter and Suresh, Ananda Theertha and Bacon, Dave},
  journal={arXiv preprint arXiv:1610.05492},
  volume={8},
  year={2016}
}

@inproceedings{mcmahan2017communication,
  title={Communication-efficient learning of deep networks from decentralized data},
  author={McMahan, Brendan and Moore, Eider and Ramage, Daniel and Hampson, Seth and y Arcas, Blaise Aguera},
  booktitle={Artificial intelligence and statistics},
  pages={1273--1282},
  year={2017},
  organization={PMLR}
}

@ARTICLE{9834981,
  author={Zhao, Yizhou and Sun, Hua},
  journal={IEEE Transactions on Information Theory}, 
  title={Information Theoretic Secure Aggregation With User Dropouts}, 
  year={2022},
  volume={68},
  number={11},
  pages={7471-7484},
  doi={10.1109/TIT.2022.3192874}}

@article{Zhang_Li_Wan_DSA,
  title={Information-Theoretic Decentralized Secure Aggregation with Collusion Resilience},
  author={Xiang Zhang and Zhou Li and Shuangyang Li and Kai Wan and Derrick Wing Kwan Ng and Giuseppe Caire},
  journal={arXiv preprint arXiv:2508.00596},
  year={2025}
}

@article{Li_Zhang_GroupwiseDSA,
  title={The Capacity of Collusion-Resilient Decentralized Secure Aggregation with Groupwise Keys},
  author={Zhou Li and Xiang Zhang and Yizhou Zhao and Haiqiang Chen and Jihao Fan and Giuseppe Caire},
  journal={arXiv preprint arXiv:2511.14444},
  year={2025}
}

@article{Li_Zhang_WeaklyDSA,
  title={Optimal Key Rates for Decentralized Secure Aggregation with Arbitrary Collusion and Heterogeneous Security Constraints},
  author={Zhou Li and Xiang Zhang and Giuseppe Caire},
  journal={arXiv preprint arXiv:2512.16112},
  year={2025}
}

@article{Li_Zhang_WeaklyHSA,
  title={Hierarchical Secure Aggregation with Heterogeneous Security Constraints and Arbitrary User Collusion},
  author={Zhou Li and Xiang Zhang and Jiawen Lv and Haiqiang Chen and Jihao Fan and Giuseppe Caire},
  journal={arXiv preprint arXiv:2507.14768},
  year={2025}
}

@inproceedings{Zhu2020DLG,
  title={Deep Leakage from Gradients},
  author={Zhu, Ligeng and Liu, Zhijian and Han, Song},
  booktitle={Advances in Neural Information Processing Systems},
  volume={33},
  pages={14774--14784},
  year={2020}
}

@misc{li_zhang_MSSA,
      title={Optimal Rate Region for Multi-server Secure Aggregation with User Collusion}, 
      author={Zhou Li and Xiang Zhang and Kai Wan and Hua Sun and Mingyue Ji and Giuseppe Caire},
      year={2026},
      eprint={2601.06836},
      archivePrefix={arXiv},
      primaryClass={cs.IT},
      url={https://arxiv.org/abs/2601.06836}, 
}

@misc{zhang2026informationtheoreticsecureaggregationregular,
title={Information-Theoretic Secure Aggregation over Regular Graphs}, 
author={Xiang Zhang and Zhou Li and Han Yu and Kai Wan and Hua Sun and Mingyue Ji and Giuseppe Caire},
year={2026},
eprint={2601.19183},
archivePrefix={arXiv},
primaryClass={cs.IT},
url={https://arxiv.org/abs/2601.19183}, 
}

@article{weng2026resilient,
  title={On Resilient and Efficient Linear Secure Aggregation in Hierarchical Federated Learning},
  author={Weng, Shudi and Zhang, Xiang and Zhao, Yizhou and Caire, Giuseppe and Xiao, Ming and Skoglund, Mikael},
  journal={arXiv preprint arXiv:2601.12853},
  year={2026}
}

@misc{xu2026networkfunctioncomputationvector,
      title={Network function computation with vector linear target function and security function}, 
      author={Min Xu and Qian Chen and Gennian Ge},
      year={2026},
      eprint={2602.07316},
      archivePrefix={arXiv},
      primaryClass={cs.IT},
      url={https://arxiv.org/abs/2602.07316}, 
}

@misc{xu2025hierarchicalsecureaggregationrelay,
      title={On hierarchical secure aggregation against relay and user collusion}, 
      author={Min Xu and Xuejiao Han and Kai Wan and Gennian Ge},
      year={2025},
      eprint={2511.20117},
      archivePrefix={arXiv},
      primaryClass={cs.IT},
      url={https://arxiv.org/abs/2511.20117}, 
}

@ARTICLE{Zhang_wan_HSA,
  author={Zhang, Xiang and Wan, Kai and Sun, Hua and Wang, Shiqiang and Ji, Mingyue and Caire, Giuseppe},
  journal={IEEE Transactions on Information Theory}, 
  title={Optimal Communication and Key Rate Region for Hierarchical Secure Aggregation With User Collusion}, 
  year={2026},
  volume={72},
  number={2},
  pages={1030-1050},
  doi={10.1109/TIT.2025.3648767}}

@misc{hu2026capacityregionindividualkey,
      title={On the Capacity Region of Individual Key Rates in Vector Linear Secure Aggregation}, 
      author={Lei Hu and Sennur Ulukus},
      year={2026},
      eprint={2601.03241},
      archivePrefix={arXiv},
      primaryClass={cs.IT},
      url={https://arxiv.org/abs/2601.03241}, 
}

@article{Schwartz,
  title={{Fast Probabilistic Algorithms for Verification of Polynomial Identities}},
  author={Schwartz, Jacob T},
  journal={Journal of the ACM (JACM)},
  volume={27},
  number={4},
  pages={701--717},
  year={1980},
  publisher={ACM}
}

@inproceedings{Zippel,
  title={{Probabilistic Algorithms for Sparse Polynomials}},
  author={Zippel, Richard},
  booktitle={International symposium on symbolic and algebraic manipulation},
  pages={216--226},
  year={1979},
  organization={Springer}
}

@article{Demillo_Lipton,
  title={{A Probabilistic Remark on Algebraic Program Testing}},
  author={Demillo, Richard A and Lipton, Richard J},
  journal={Information Processing Letters},
  volume={7},
  number={4},
  pages={193--195},
  year={1978},
  publisher={Elsevier}
}
